\documentclass[english, 12pt]{article}
\usepackage{amsmath}
\usepackage{todonotes}
\usepackage{comment}
\usepackage{xspace}

\newcommand{\ord}{\text{ord}}

\usepackage{algorithm}% http://ctan.org/pkg/algorithms
\usepackage{algpseudocode}% http://ctan.org/pkg/algorithmicx\usepackage{algpseudocode}                 % \Require, \Ensure, \State

\usepackage{base}
\usepackage{macros}
\usepackage{minimacros}

\title{Polynomial Identity Testing and Reconstruction for Depth-4 Powering Circuits of High Degree}
\author{Amir Shpilka\thanks{School of Computer Science and AI, Tel Aviv University. This research was funded by the European Union (ERC, EACTP, 101142020). Views and opinions expressed are however those of the author(s) only and do not necessarily reflect those of the European Union or the European Research Council Executive Agency. Neither the European Union nor the granting authority can be held responsible for them.}
\and Yann Tal\footnotemark[1] }
\date{}

\begin{document}

\maketitle

\begin{abstract}
    We study deterministic polynomial identity testing (PIT) and reconstruction algorithms for depth-$4$ arithmetic circuits of the form
\[
\Sigma^{[r]}\!\wedge^{[d]}\!\Sigma^{[s]}\!\Pi^{[\delta]}.
\]
This model generalizes Waring decompositions and diagonal circuits, and captures sums of powers of low-degree sparse polynomials.  Specifically, each circuit computes a sum of $r$ terms, where each term is a $d$-th power of an $s$-sparse polynomial of degree $\delta$.
This model also includes algebraic representations that arise in tensor decomposition and moment-based learning tasks such as mixture models and subspace learning.

We give deterministic worst-case algorithms for PIT and reconstruction in this model. Our PIT construction applies when $d>r^2$ and yields explicit hitting sets of size  $O(r^4 s^4 n^2  d \delta^3)$. The reconstruction algorithm runs in time $\poly(n,s,d)$ under the condition $d=\Omega(r^4\delta)$, and in particular it tolerates polynomially large top fan-in~$r$ and bottom degree~$\delta$.  

Both results hold over fields of characteristic zero and over fields of sufficiently large characteristic. These algorithms provide the first polynomial-time deterministic solutions for depth-$4$ powering circuits with unbounded top fan-in.  In particular, the reconstruction result improves upon previous work which required non-degeneracy or average-case assumptions~\cite{GKS20,bafna2022polynomial,ChandraGKMS24}.

The PIT construction relies on the \emph{ABC theorem for function fields} (Mason-Stothers theorem), which ensures linear independence of high-degree  powers of sparse polynomials after a suitable projection.  
The reconstruction algorithm combines this with \emph{Wronskian-based differential operators}, structural properties of their kernels,  and a robust version of the Klivans-Spielman hitting set.

\end{abstract}

\thispagestyle{empty}

\newpage

\tableofcontents
\thispagestyle{empty}
\newpage

\clearpage
\pagenumbering{arabic}
\setcounter{page}{1}

\section{Introduction}

Arithmetic circuits are directed acyclic graphs that compute multivariate polynomials using addition and multiplication gates. 
They form the standard model for studying algebraic computation, serving as the analogue of Boolean circuits in the arithmetic world.

The \emph{Polynomial Identity Testing} (PIT) problem asks, given an arithmetic circuit that computes a multivariate polynomial $f(\vx)$, to decide whether $f \equiv 0$.  
Randomized algorithms for PIT have been known for decades, via the polynomial identity lemma,\footnote{Often referred to as the Schwartz--Zippel--DeMillo--Lipton--Ore lemma~\cite{Ore1922,DemilloL78,Zippel79,Schwartz80}.} yet obtaining a deterministic polynomial-time algorithm remains a fundamental open problem in derandomization.  
Hardness-randomness tradeoff results have shown a tight connection between PIT and lower bounds for arithmetic circuits~\cite{HS80,KabanetsI04,DvirSY09,Chou0S19,GuoKSS22,KumarST23bootstrap,kumar2019hardness}.  
PIT has received a great deal of attention in recent years.
Deterministic algorithms are now known for several restricted models, 
such as bounded-depth circuits, bounded-read formulas, and read-once algebraic branching programs.
Variants have also been studied for orbits of these classes under affine transformations.
For surveys on PIT, see~\cite{SY10,saxena2009progress,saxena2014progress,DuttaGhosh-survey}.

Closely related to PIT is the \emph{reconstruction problem}.  
Instead of merely determining whether $f$ is identically zero, reconstruction seeks to recover, using only black-box access to $f$, a circuit from a prescribed class that computes it.  
A deterministic reconstruction algorithm immediately implies deterministic PIT for that class, while a black-box PIT algorithm guarantees an \emph{information-theoretic} form of reconstruction.  
If $H$ is a hitting set for circuits of size $2s$, then the evaluation map
\[
\mathrm{Eval}_H : C \longmapsto (C(a))_{a\in H}
\]
is injective for all circuits $C$ of size at most $s$.  
The remaining challenge is to perform this inversion efficiently.  
This step can be significantly more difficult: there are natural circuit classes for which deterministic PIT is known, yet no efficient reconstruction algorithm is available.  
There have been many works studying the reconstruction problem in different models~\cite{beimel2000learning,klivans2006learning,ShpilkaV14,ForbesS12,GuptaKL12,GuptaKQ14}, with a recent flurry of algorithms for small-depth circuits~\cite{Shpilka09-rec-sps,KarninS09rec,Sinha16,BhargavaSV21,Sinha22,peleg2022tensor,BhargavaGKS22,BhargavaS25,BhargavaSV25,SarafSV25}.

A series of depth-reduction results~\cite{AgrawalV08,Koiran12,Tavenas15,GuptaKKS16chasm}  
proved that any polynomial computed by a small circuit can also be represented by a depth-$4$ circuit of subexponential size (and, over characteristic zero, by a depth-$3$ circuit).  
Thus, providing PIT or reconstruction algorithms for shallow arithmetic circuits is an outstanding task that can have implications for general models of arithmetic circuits.  

The search for efficient reconstruction is bounded by fundamental hardness results.  
Starting with the work of H{\aa}stad~\cite{Hastad90}, it was shown that even computing tensor rank, i.e., determining the smallest top fan-in of a set-multilinear depth-$3$ circuit is NP-hard over~$\mathbb{Q}$ and NP-complete over finite fields, even for degree-$3$ tensors.  
Extensions of this result to approximation algorithms and to other models of computation were given in subsequent papers~\cite{SchaeferStefankovic18,FortnowKlivans09,KlivansSherstov09,swernofsky2018tensor}.  

Because the tensor-rank problem is equivalent to minimizing the top fan-in of a set-multilinear $\Sigma\Pi\Sigma$ circuit, these hardness results extend directly to reconstruction: proper learning for such circuits is NP-hard and, in some instances, even undecidable \cite{ChillaraGS23}.  
This motivated the search for efficient reconstruction algorithms for certain parameter regimes or under additional restrictions on the circuit.

Reconstruction also connects naturally to \emph{learning theory}.  
An arithmetic circuit can be viewed as an algebraic hypothesis, and black-box evaluations of $f$ correspond to examples (or membership queries). 
Thus, reconstruction corresponds to \emph{exact learning} in the noiseless setting, i.e., when we obtain the exact value of $f$ on each and every query. This stands in contrast to the \emph{approximate} or \emph{noisy} learning problems that are typical in machine learning.  
This analogy has recently become concrete through work showing that some important learning tasks, such as learning mixtures of Gaussians or subspace clustering, reduce to learning polynomials of the form
\begin{equation}\label{eq:swsp}
    f(\vx) = \alpha_1 f_1(\vx)^d + \cdots + \alpha_r f_r(\vx)^d,
\end{equation}
where the $f_i$ are polynomials of degree $\delta$ \cite{GeHK15,GKS20}.  
A representation as in \eqref{eq:swsp} is also called a $\Sigma^{[r]}\!\wedge^{[d]}\!\Sigma\! \Pi^{[\delta]}$ circuit.  
When $\delta=1$, the reconstruction problem is equivalent to symmetric tensor decomposition (Waring rank decomposition), which is fundamental in many applications in machine learning, e.g., in moment-based methods for latent variable models, cf.\ \cite{goyal2014fourier,anandkumar2014tensor,montanari2014statistical,hopkins2015tensor,ma2016polynomial,hopkins2016fast,kothari2025smooth}. When $\delta=2$, the problem corresponds to learning mixtures of Gaussians and has been extensively studied, cf.\ \cite{sanjeev2001learning, dasgupta2007probabilistic,anderson2014more, regev2017learning, liu2021settling, di2023multidimensional}.

\sloppy In recent years, an increasing number of works consider the case $\delta>  2$  \cite{GKS20,ChandraGKMS24,bafna2022polynomial,bhaskara2024new}.
Garg, Kayal, and Saha~\cite{GKS20}, and subsequent work by Bafna, Hsieh, Kothari, and Xu \cite{bafna2022polynomial}, gave randomized reconstruction algorithms for such polynomials in the noiseless case, running in time $\poly(n,d,r)^\delta$.  
Their algorithm works under a non-degeneracy assumption that holds with high probability for random polynomials but does not apply in the worst case.  
Chandra, Garg, Kayal, Mittal, and Sinha~\cite{ChandraGKMS24} extended this connection to noisy and smoothed settings, showing, quite surprisingly, that lower-bound techniques for arithmetic circuits can be transformed into robust learning algorithms for this model.  The running time of their algorithm is $\poly(n,d,r)^\delta$ as well.

In the next section, we formally state our results,
followed by a detailed comparison with prior work.

\subsection{Our Results }

We consider the model of sums of powers of low degree sparse polynomials, denoted by  
\[
\Sigma^{[r]}\!\wedge^{[d]}\!\Sigma^{[s]}\!\Pi^{[\delta]}.
\]
Here, each bottom product gate has degree at most $\delta$. Above it, each addition gate computes an $s$-sparse polynomial. Then we take the $d$-th power of each sparse polynomial and sum these terms.
This model naturally includes the classical $\Sigma\!\wedge\!\Sigma$ (Waring) model as a special case.  
We give efficient deterministic hitting sets as well as a reconstruction algorithm for this class in the \emph{worst case}, without any genericity or average-case assumptions. Our results hold both for characteristic zero fields and for fields with polynomially large characteristic.

Our first result is a polynomial-size hitting set for $\Sigma^{[r]}\!\wedge^{[d]}\!\Sigma^{[s]}\!\Pi^{[\delta]}$
circuits, where the powering degree is quadratic in the \emph{rank}, namely, the top fan-in $r$. The following is a simplified version of \autoref{thm:hs-depth-4}.

\begin{theorem}\label{thm:main-pit}
    Let $n,d,r,s,\delta\in \N$ such that 
    $d=\Omega(r^2)$. Let $\F$ be a field of characteristic $p=0$ or $p\ge rd\delta   (s^2n+\delta)$. There is an explicit hitting set of size   $\poly(s,n,d)$ for the class of $\Sigma^{[r]}\!\wedge^{[d]}\!\Sigma^{[s]}\!\Pi^{[\delta]}$ circuits defined over $\F$.\footnote{Here and later, if $|\F|$ is not large enough then the evaluation points come from an extension field.}
\end{theorem}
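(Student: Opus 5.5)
The plan is to reduce the $n$-variate identity $\sum_{i=1}^r \alpha_i f_i^d \equiv 0$ to a univariate (or bivariate) one, and then rule out any nontrivial univariate identity using the ABC (Mason--Stothers) theorem.

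\textbf{Step 1: projection.} Take a Klivans--Spielman style map $\vx \mapsto (t^{w_1}, \ldots, t^{w_n})$, or a small family of such maps indexed by a parameter. We want the following property. For the at most $rs$ monomials in $f_1,\dots,f_r$, the map keeps monomials distinct and keeps the $f_i$ pairwise non-proportional whenever they were non-proportional before. It should also preserve enough of each $f_i$'s structure that the images are still ``sparse'' in the relevant sense. The number of maps needed is $\poly(r,s,n,\delta)$. After the substitution, each $g_i(t)=f_i(t^{\vw})$ is a univariate polynomial with at most $s$ monomials. Its degree is at most $\delta\cdot\max w_j = \poly(s,n,\delta)$.

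\textbf{Step 2: grouping.} First merge proportional $f_i$'s into single terms. Then assume, towards contradiction, that $\sum_i \alpha_i g_i^d=0$ with all coefficients nonzero and the $g_i$ pairwise non-proportional. I expect this step to be the main obstacle. A minimal vanishing subsum still gives $m\le r$ polynomials $g_i^d$ that are linearly dependent with every proper subset independent. This is exactly the setting of the generalized Mason--Stothers (Brownawell--Masser / Voloch) theorem. That theorem requires characteristic $0$ or large characteristic, which is where the bound on $p$ enters: it guarantees that the relevant Wronskian does not vanish.

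\textbf{Step 3: applying the ABC bound.} The theorem gives
\[
d\cdot \max_i \deg g_i \le \deg\left(\sum g_i^d\right)\text{-type bound} \le \binom{m-1}{2}\Big(\sum_i \deg\mathrm{rad}(g_i^d)\Big) \le O(r^2)\cdot \max_i \deg g_i .
\]
The key point is that $\mathrm{rad}(g_i^d)=\mathrm{rad}(g_i)$ has degree at most $\deg g_i$. Strictly, the theorem bounds the number of distinct roots, summed over the $m$ terms, which is $m\cdot\deg g$ up to a factor of $m^2$. So when $d > c\,r^2$, the dependency is impossible unless all $g_i$ are constant. The constant case needs care, and I would handle it by ensuring that the projection keeps nonconstant parts nonconstant, for example by also tracking a degree-preserving shift.

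One subtlety: the ABC argument needs the $g_i$ to be pairwise coprime or at least to share no common factor. I would handle this by dividing out $\gcd$ and absorbing the quotient into $d$-th powers. The sparsity and low degree bounds then remain valid.

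\textbf{Step 4: the hitting set.} Once the projected polynomial $P(t)$ is known to be nonzero, its degree is at most $d\delta\max w_j$. So evaluating each projection at $d\delta\max w_j+1$ points of $\F$ (or of an extension field) gives the hitting set. Its size is (number of maps) $\times$ (number of points) $=\poly(s,n,d)$, matching the stated bound up to the explicit exponents of the paper's version.
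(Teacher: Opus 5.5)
Your outline matches the paper's. You project with a Klivans--Spielman map $\Psi_{k,q}$ chosen so the images of the $f_i$ stay pairwise non-associate; the paper gets this from Corollary~\ref{cor:ks-lin-ind} and a union bound over pairs, and your ``injective on the union of supports'' variant also works. You then use a function-field ABC argument to show the univariate image is nonzero, and interpolate at $\deg+1$ points per projection. Your gcd step is fine: $\gcd(g_1^d,\dots,g_m^d)=G^d$ for $G=\gcd(g_1,\dots,g_m)$, and the quotients stay pairwise non-associate. The constant case needs no extra shift, because nonzero constants are pairwise associate, so after merging at most one term can be constant.

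The gap is in Step~3, and it bears on exactly the hypothesis $d=\Omega(r^2)$. The bound you write, $\binom{m-1}{2}\sum_i\deg\mathrm{rad}(g_i)$, is at most $\binom{m-1}{2}\,m\max_i\deg g_i$, because the radicals are summed over $m$ terms. It is not $O(r^2)\max_i\deg g_i$. So Brownawell--Masser/Voloch gives a contradiction only when $d> m\binom{m-1}{2}=\Theta(r^3)$, and your argument proves the statement only for $d=\Omega(r^3)$.

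To reach the quadratic threshold you need an estimate whose constant in front of the radical count is linear in $m$. One option is Vaserstein--Wheland's Theorem~2.2(a) (the paper's Theorem~\ref{thm:ABC}, with factor $m-2$). Another is their Corollary~3.8 (Theorem~\ref{thm:abc}), which the paper applies directly to the projections $\Psi_{k,q}(f_i)$, absorbing the $\alpha_i$ as $d$-th roots; it requires only $(r-1)^2\le d+1$.

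You can also prove the quadratic bound yourself with the Wronskian:
\begin{itemize}
\item In a minimal relation, all Wronskians $W$ of $m-1$ of the $g_j^d$ agree up to nonzero scalars, and they are nonzero by Theorem~\ref{thm:wronskian-lin-ind}.
\item For each root $z$, omit some $j_0$ with $g_{j_0}(z)\ne 0$, which exists since the gcd is $1$. This gives $\mathrm{ord}_z W\ge (d-m+2)\sum_j\mathrm{ord}_z g_j$, hence $\deg W\ge (d-m+2)\sum_j\deg g_j$.
\item Omitting instead the term of maximal degree gives $\deg W\le d\sum_j\deg g_j-d\max_j\deg g_j$.
\end{itemize}
Together these give $d\le m(m-2)$, so $d\ge (r-1)^2$ suffices.
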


To our knowledge, this is the first polynomial-size hitting set for any model of depth-$4$ circuits with unbounded top fan-in.

\autoref{thm:main-pit} is instrumental in obtaining a deterministic reconstruction algorithm.

\begin{theorem}\label{thm:main-rec}
    Let $n,d,r,s,\delta\in \N$ such that 
    $d\geq (r+1)^4\delta $. Let $\F$ be a field of characteristic $p=0$ or $p\ge rd\delta   (s^2n+\delta)$.
    %$p>4rd\delta n(s^2n + \delta)$.
    There exists a deterministic algorithm that, given black box access to a polynomial $f$ computed by a  $\Sigma^{[r]}\!\wedge^{[d]}\!\Sigma^{[s]}\!\Pi^{[\delta]}$ circuit, defined over $\F$, with bit complexity $B$, reconstructs $f$.  \begin{enumerate}
        \item When $p=0$, the running time is $\poly(n,s,d,B)$.
        \item When $p>rd\delta   (s^2n+\delta)$ and $|\F|=q$, the running time is $\poly(n,s,d,p,\log q)$.
    \end{enumerate}
    %In addition, there is a randomized algorithm running in time $\poly(n,s,r,d,\delta,B,\log p)$, when $p>2dr\delta$ .
    \end{theorem}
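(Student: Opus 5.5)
The reconstruction will proceed in three stages: reduce to few variables using the hitting set of \autoref{thm:main-pit}, recover the powered polynomials $f_i$ there through a Wronskian-type differential operator, and then lift back to $n$ variables. The engine throughout is the ABC (Mason--Stothers) theorem. It says that if $g_1^d,\dots,g_k^d$ are linearly dependent powers of polynomials whose degrees are small relative to $d/k^2$, then some two of the $g_i$ are proportional. Since $d\ge (r+1)^4\delta$ is far above the $r^2$ threshold, this rigidity survives the loss of parameters in the derivative arguments below.

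\textbf{Step 1 (projection and uniqueness).} We apply the Klivans--Spielman style map $\Phi$ used for \autoref{thm:main-pit}, in its robust version. It sends $x_j \mapsto a_j t + b_j$ (or $t^{w_j}$-type substitutions), chosen from a polynomial-size family, and keeps the $s$-sparse $f_i$ distinct after projection. We first merge proportional $f_i$ so that no two are scalar multiples. Each $g_i=\Phi(f_i)$ is univariate of degree at most $\delta\cdot\poly(s,n)$. For $d$ large enough, the ABC argument behind \autoref{thm:main-pit} then gives two things: the $g_i^d$ are linearly independent, and the representation $\Phi(f)=\sum \alpha_i g_i^d$ is unique up to scaling. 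Uniqueness holds because two distinct representations give a vanishing sum of at most $2r$ powers, which the theorem forbids.

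\textbf{Step 2 (finding the $g_i$).} We recover the univariate $g_i$ with the Wronskian/differential-operator approach. Consider the space $\mathcal{D}$ of operators $L=\sum_{j\le r} c_j(t)\,\partial_t^j$ with low-degree polynomial coefficients that annihilate $F=\Phi(f)$. For each summand, $\partial^j (g_i^d)= g_i^{d-j}\cdot h_{ij}$ with $\deg h_{ij}\le j\deg g_i$. We build operators through a linear system in the coefficients $c_j$, solved by linear algebra on the black-box evaluations. A kernel-structure lemma then shows that $\gcd\bigl(F, \partial F,\dots,\partial^{r}F\bigr)$-type computations, or the factors common to the kernel, expose $\prod_i g_i^{d-r}$. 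This is where $d\ge (r+1)^4\delta$ is used: a degree count via ABC excludes spurious kernel elements. Univariate factoring, which is deterministic in characteristic $0$ and in time $\poly(p)$ over $\F_q$ (Berlekamp), then isolates each $g_i$ up to scalars. The $\alpha_i$ follow from a linear system, which is well posed by the independence shown in Step 1.

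\textbf{Step 3 (lifting).} We repeat the computation on several projections $\Phi_\ell$ from the hitting-set family, varying one coordinate at a time, that is, by bivariate or shifted substitutions. This recovers each $f_i$ through sparse interpolation. The $g_i$ obtained from different projections must be matched to one another. We do the matching by evaluating at shared points: the projections agree on a common anchor line, and the genericity from Step 1 keeps the anchors distinct. After matching, we interpolate the $s$ monomial coefficients of each $f_i$ and verify the result with \autoref{thm:main-pit}.

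I expect the main obstacle to be Step 2. Guaranteeing that the annihilating operators' kernel exposes exactly the $g_i$ in the worst case, without any non-degeneracy assumption, requires a quantitative ABC bound that involves derivatives. This is where the exponent $4$ in $(r+1)^4\delta$ arises. The approach I would try first is to apply Mason--Stothers to the Wronskian of the $g_i^d$ and of $F$ together.
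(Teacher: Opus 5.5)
Your architecture matches the paper's: project to one variable, find an annihilating operator by linear algebra, use ABC for uniqueness, and lift back using an anchor. However, two steps fail as written.

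\textbf{The projection.} The facts you invoke in Step~2 need $d\gtrsim r^4\deg(g_i)$. These are the uniqueness of the normalized annihilator with low-degree coefficients (hence $\ker L=\mathrm{span}\{g_i^d\}$) and any Wronskian degree count. A Klivans--Spielman monomial substitution makes $\deg(g_i)=\delta\cdot\mathrm{poly}(s,n)$, while the hypothesis only gives $d\ge (r+1)^4\delta$. You must therefore restrict to lines $\mathbf{x}=\mathbf{u}+t(\mathbf{v}-\mathbf{u})$, which keep degree at most $\delta$. You then need a deterministic reason that the restricted $f_i$ stay non-associate, and the Klivans--Spielman lemma says nothing about affine lines.

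The paper handles this as follows:
\begin{enumerate}
\item It picks an anchor $\mathbf{u}$ with all $f_i(\mathbf{u})\neq 0$ and the labels $\alpha_if_i(\mathbf{u})^d$ pairwise distinct. This uses a hitting set for the $\Sigma^{[2]}\wedge^{[d]}\Sigma^{[2s]}\Pi^{[\delta]}$ circuits $\alpha_if_i^d-\alpha_jf_j^d$.
\item It takes $\mathbf{v}$ off the zeros of the sparse polynomials $f_j(\mathbf{u})f_i-f_i(\mathbf{u})f_j$. Associateness on the line would force $f_i(\mathbf{v})/f_i(\mathbf{u})=f_j(\mathbf{v})/f_j(\mathbf{u})$.
\item The labels align the outputs across all lines through $\mathbf{u}$.
\item The ratio $h(1)/h(0)=f_i(\mathbf{v})/f_i(\mathbf{u})$ removes the per-line scaling and $d$-th-root-of-unity ambiguity.
\item A robust sparse interpolation tolerates the bad $\mathbf{v}$'s.
\end{enumerate}
Your ``common anchor line'' and ``genericity from Step~1'' supply none of this.

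\textbf{Step~2, the more serious gap.} It does not recover the $g_i$. The mechanism you propose is wrong: $\gcd(F,\partial F,\dots,\partial^rF)$ and the gcd of the kernel are typically trivial. For example, $F=x^d+(x+1)^d$ is coprime to $F'$. What is true is that the leading coefficient $Q_r$ of the normalized annihilator vanishes at every root of every $g_i$. Also, $W(g_1^d,\dots,g_r^d)=\prod_i g_i^{d-r+1}W'$ with $\deg W'\le r^2\delta$.

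Even so, factoring only yields the irreducible factors $\phi_1,\dots,\phi_m$. It does not tell you how they group into the individual $g_i$, or with what multiplicities. Testing all candidate products costs about $\binom{m+\delta}{\delta}$, which is exponential in $\delta$, and the theorem allows $\delta$ polynomially large.

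The missing idea is the paper's high-multiplicity lemma. If $h\in\mathrm{span}\{g_i^d\}$ vanishes at $z$ to order greater than $r^2\delta+r-1$, then $z$ is a root of some $g_i$. To see this, replace a column of the Wronskian by the derivatives of $h$ and compare orders at $z$ with the factorization above. Consequently $\{h\in\ker L:\ \phi^d\mid h\}=\mathrm{span}\{g_i^d:\ \phi\mid g_i\}$. This lets you divide out one $\phi^d$ at a time in a depth-first search, pruning any branch whose subspace is already spanned by recovered powers. The search visits at most $r\delta$ nodes and outputs each $g_i^d$ exactly once.
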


This is the first reconstruction algorithm for any model of depth-$4$ powering circuits with unrestricted top and bottom fan-in.

\subsection{Prior Work and Related Models}

In what follows, $\Sigma^{[r]}$ represents a layer of addition gates of fan-in at most~$r$, $\wedge^{[d]}$ represents raising to power~$d$, and $\Pi^{[\delta]}$ is a product gate of fan-in at most~$\delta$.  
Thus, $\Sigma^{[r]}\!\Pi^{[\delta]}\!\Sigma$ is the class of depth-$3$ circuits that consist of the sum of $r$ terms, each a product of at most~$\delta$ linear functions.  
A circuit is \emph{multilinear} if each monomial in every gate is multilinear.  
A circuit is \emph{set-multilinear} if the input is composed of disjoint sets $\vx = \vx_1 \sqcup \vx_2 \sqcup \cdots \sqcup \vx_\delta$, and each gate computes a polynomial in which each monomial contains at most one variable from each set.

\subsubsection{Polynomial Identity Testing for Small-Depth Circuits}

The breakthrough of Limaye, Srinivasan, and Tavenas \cite{limaye2025superpolynomial}, followed by Andrews and Forbes~\cite{andrews2022ideals}, established the first subexponential-size hitting sets for general constant-depth circuits.  
Sharper bounds are known for several restricted subclasses.

For \emph{sparse polynomials}, that is, polynomials computable by small $\Sigma\Pi$ circuits, deterministic hitting sets were obtained in the classical work of Ben-Or and Tiwari \cite{Ben-OrT88}, Grigoriev, Karpinski and Singer \cite{GrigorievKS90}, and Klivans and Spielman \cite{KS01}.  

Building on the white-box PIT algorithm of Raz and Shpilka~\cite{RazShpilka05}, Saxena~\cite{Saxena08diagonal} gave a deterministic polynomial-time PIT algorithm for $\Sigma^{[r]}\!\wedge^{[d]}\!\Sigma$ circuits.  
In the black-box model, \cite{forbes2014hitting,gurjar2016identity,GuoG20} constructed hitting sets of size $r^{\log\log r}\poly(d,n)$, which are superpolynomial in the width, but do not assume any relation between $d$ and $r$.

Rank-based  methods yield polynomial-size hitting sets for $\Sigma^{[r]}\!\Pi^{[d]}\!\Sigma$ circuits when the top fan-in $r$ is constant~\cite{DvirS07,KarninS11pit,KayalSaxena07,KayalSaraf09,SaxenaSesh13rank,SaxenaSesh12pit}.  
Agrawal et~al.~\cite{AgrawalSSS16} introduced the Jacobian hitting technique, obtaining polynomial-size hitting sets when the transcendence degree of the product terms is bounded.  
Their result also extends to certain classes of higher depths.

Using rank-based ideas, Peleg and Shpilka,  and Garg, Oliveira, and Sengupta~\cite{peleg2021polynomial,GargOS25a} gave polynomial-size hitting sets for $\Sigma^{[3]}\!\Pi\Sigma\!\Pi^{[O(1)]}$ circuits. Independently of \cite{GargOS25a}, Guo and Wang \cite{GuoWang25} gave a hitting set for $\Sigma^{[3]}\!\Pi\Sigma\!\Pi^{[O(1)]}$ circuits, under the additional restriction that in one of the terms the multiplicity of each irreducible factor is $1$. Their approach is novel and based on algebro-geometric techniques. However, the additional assumption simplifies the technical challenges encountered in earlier works.
Dutta, Dwivedi, and Saxena~\cite{DuttaD021} obtained quasipolynomial-size hitting sets for circuits whose top and bottom fan-ins are bounded by $\poly(\log n)$, using the newly developed DiDIL technique.  
Quasipolynomial and later polynomial-size hitting sets were also constructed for multilinear circuits with bounded top fan-in~\cite{KarninMSV13,SarafV18}.  
As in the depth-3 setting, polynomial-size constructions are known only when the top fan-in is bounded.

For sums of powers, i.e., $\Sigma^{[r]}\!\wedge^{[d]}\!\Sigma\!\Pi^{[\delta]}$ circuits, Forbes~\cite{Forbes15} constructed hitting sets of size $(nd)^{O(\delta\log r)}$.  
Dutta et~al.~\cite{DuttaD021} observed that existing techniques give hitting sets of size $s^{O(\log\log s)}$ for size-$s$ $\Sigma\!\wedge\!\Sigma\!\wedge$ circuits.  
These results do not restrict the top fan-in $r$, but the constructions remain superpolynomial in size.

To summarize, polynomial-size hitting sets are known for depth-2 circuits and for depth-3 and depth-4 models with bounded top fan-in.  
For depth-$4$ powering circuits, all known constructions are quasipolynomial (or slightly better but still superpolynomial) and hold without any restriction on the top fan-in.  

\autoref{thm:main-pit} gives the first polynomial-size construction, requiring only $d=\Omega(r^2)$.

\subsubsection{Reconstruction}

Shpilka \cite{Shpilka09-rec-sps} and  Karnin and Shpilka~\cite{KarninS09rec} gave the first reconstruction algorithms for $\Sigma^{[r]}\!\Pi\Sigma$ circuits, over small finite fields for $r=O(1)$.
Sinha~\cite{Sinha16,Sinha22} obtained randomized polynomial-time reconstruction algorithms for $\Sigma^{[2]}\!\Pi\Sigma$ circuits over $\mathbb{R}$ and $\mathbb{C}$, relying on Sylvester–Gallai type theorems. Saraf and Shringi~\cite{saraf2025reconstruction} extended this to $r=3$. More recently, Saraf, Shringi, and Varadarajan \cite{SarafSV25} solved the general $r=O(1)$ case, obtaining a quasipolynomial-time algorithm. 

For $\Sigma\!\wedge\!\Sigma$ circuits, the first reconstruction algorithm dates back to Sylvester, who gave an algorithm for binary forms provided $r \le \lfloor (d+2)/2 \rfloor$ in the generic\footnote{Being generic means belonging to a Zariski open subset.} case~\cite{sylvester1851lx}.  
Kayal~\cite{Kay12} gave a randomized reconstruction algorithm for the generic case when $r$ can be as large as $\binom{n+{d}/{2}-1}{{d}/{2}}$.  
When $d>2r$, Kayal's algorithm can be derandomized.

Bhargava, Saraf, and Volkovich~\cite{BhargavaSV21} gave a deterministic polynomial-time algorithm for multilinear and set-multilinear $\Sigma^{[r]}\!\Pi\Sigma$ circuits with top fan-in $r = O(1)$.  
In particular, this yields exact tensor decomposition for constant rank (with $r$ appearing in the exponent of the running time).  
Peleg, Shpilka, and Volk~\cite{peleg2022tensor} extended this to superconstant $r$, providing a randomized fixed-parameter algorithm with respect to $r$ (with poor dependence on the top fan-in).  
For set-multilinear depth-$3$ circuits, Bhargava and Shringi~\cite{BhargavaS25} further obtained a deterministic fixed-parameter algorithm running in time $2^{r^{O(1)}}\poly(n,d)$.  
While these works resolve the question for fixed-parameter settings, their dependence on $r$ remains superexponential or tower-type.

Garg, Kayal, and Saha~\cite{GKS20} studied circuits of the form  
$\Sigma^{[r]}\!\wedge^{[d]}\!\Sigma^{[s]}\!\Pi^{[\delta]}$.  
They obtained randomized algorithms under non-degeneracy assumptions in the noiseless case, with running time $\poly(n,r,d)^\delta$.  
In addition to the requirement that the input is generic, they assume $n>d^2$, $\delta \le O(\sqrt{(\log d)/(\log\log d)})$, and that the field has size at least $(nr)^{\Omega(\delta)}$.  
On the other hand, they allow $r$ to be as large as $n^{O(d/\delta^2)}$. 
An important aspect of their work is the method they introduced for translating certain algebraic lower-bound techniques to learning algorithms.  
Bafna et al.~\cite{bafna2022polynomial} extended the result of~\cite{GKS20} to handle polynomially large noise and to allow, roughly, $r=O(n^{2d/15})$.  
The subsequent work of Chandra et al.~\cite{ChandraGKMS24} generalized the method of~\cite{GKS20} and enabled transferring lower-bound techniques also to the noisy case, assuming natural conjectures on the largeness of singular values of certain random matrices.  
All these algorithms are randomized.

\autoref{thm:main-rec} provides a deterministic polynomial-time reconstruction algorithm in the worst-case noiseless setting, without assuming any non-degeneracy conditions.  

Our requirement $d=\Omega(r^4\delta)$ is similar in spirit to the requirement that $r \le n^{O(\delta)}$ from~\cite{GKS20,bafna2022polynomial,ChandraGKMS24}, since both conditions guarantee uniqueness of representation: ours for the worst case, and theirs for the generic case.

Additionally, \autoref{thm:main-rec} is the first polynomial-time reconstruction algorithm for any model of depth-4 circuits with unbounded top fan-in.

\subsection{Proof Overview}\label{sec:proof-overview}

Our proof of \autoref{thm:main-pit} is based on the ABC theorem for function fields, proved independently by Mason~\cite{mason} and Stothers~\cite{stothers1981polynomial}. We shall use an extension proved by Vaserstein and Wheland~\cite{vaserstein2003vanishing} (\autoref{thm:ABC}).  
Roughly speaking, the theorem shows that high-degree, pairwise independent univariate polynomials with relatively few distinct zeros are linearly independent.  
Therefore, if we start with $n$-variate polynomials that are high powers of sparse polynomials and project them to univariate polynomials in a way that preserves pairwise independence, the theorem guarantees that their sum cannot vanish identically.  
A simple interpolation then suffices to verify that the polynomial is nonzero.

The reconstruction algorithm (\autoref{thm:main-rec}) is more involved and requires several additional ideas.  
At a high level, we first reduce the problem to reconstructing in the univariate case and then lift the solution back to the $n$-variate setting. We next explain the ideas used in each of the settings.

\paragraph{Reconstruction in the univariate case.}
Consider a representation $f=\sum_{i=1}^{r}\alpha_i f_i(x)^d$ with monic $f_i$.  
Our approach is to find a differential operator of order $r$,
\[
L=\sum_{i=0}^{r}Q_i(x)\nabla^i,\qquad \text{where}\quad\nabla=\frac{d}{dx},
\]
such that $L(f)=0$.  
The existence of $L$ can be proved by considering the Wronskian $W(f,f_1^d,\ldots,f_r^d)$, where $W(g_1,\ldots,g_k)$ denotes the determinant of the $k\times k$ matrix whose $(i,j)$ entry is $\nabla^{i-1}g_j$.  
It is well known that the Wronskian of  polynomials vanishes if and only if they are linearly dependent, and this also holds for sufficiently large characteristic (see \autoref{thm:wronskian-lin-ind}).  
Expanding $W(f,f_1^d,\ldots,f_r^d)$ along the first column yields (after clearing the gcd) the desired operator $L$ and also provides an upper bound on the degrees of the $Q_i$.

The next crucial observation is that $\ker{L}$ is exactly the space spanned by $\{f_1^d,\ldots,f_r^d\}$.  
This again follows from the ABC theorem, in a slightly more general form (\autoref{cor:ABC}), which we also use to prove the uniqueness of $L$ (\autoref{cla:L-unique}).

Given this, we can compute the unique $L$ by solving a system of linear equations.

\medskip

Another important  observation is that each zero of each $f_i$ is also a zero of the leading coefficient $Q_r$ of $L$.  
In the theory of ODEs this is usually stated for poles of analytic functions, but for polynomials one can work with roots instead of poles, and the same argument applies in large characteristic (\autoref{cla:qr-zero}).  

We have thus reduced the problem to the following: we are given an $r$-dimensional space of polynomials (the kernel of $L$) that contains $r$ $d$-th powers of polynomials, whose zeros are known (but not their multiplicities), and we need to recover the underlying polynomials.  
A brute-force search over exponent vectors would require testing all $\binom{m+\delta}{\delta}$ possibilities, where $m=\deg(Q_r)$, but we will show a much more efficient algorithm that is polynomial in all parameters.

The starting point is to prove, by analyzing how the Wronskian factorizes, that any high-order root of the Wronskian must also be a root of one of the $f_i$ (in practice we work with irreducible factors rather than with individual roots).  
This enables us to prune the space $\Span\{f_1^d,\ldots,f_r^d\}$ and retain only the subspace of polynomials that vanish,  with high multiplicity, at some root of $Q_r$.  
We can think of this as building a depth-$\delta$ tree in which each node has $m$ children labeled by the roots of $Q_r$.  
Every path in the tree corresponds to a sequence of roots with associated multiplicities.  
For each node we construct the space of polynomials that vanish at these roots with given the multiplicities.  
At the leaves of the tree sit the $f_i^d$.  
The tree is still large, so we cannot check all its leaves. The idea is to perform a depth-first search and prune any branch that can be certified not to yield a new polynomial.

After the DFS algorithm finds the set $\{f_i^d\}$, we compute the coefficients $\alpha_i$ 
%\ynote{we don't require the HS} 
by solving a corresponding linear system.

% \paragraph{From bivariate to univariate.} \AS{rewrite, we don't do that any more}

% Assume now that we have an identity involving bivariate polynomials,
% \[
% f=\sum_{i=1}^{r}\alpha_i f_i(x,y)^d.
% \]
% Substituting any value $\beta$ for $y$ reduces the problem to the univariate case, in which we already know how to reconstruct the set of polynomials $\{f_i(x,\beta)\}$.  
% The main difficulty is that for different values of $\beta$ the reconstructed lists are not aligned - each reconstruction may return the $f_i$ in a different order.

% Our idea is to run a list-recovery procedure for Reed-Solomon codes, over enough evaluation points $\beta$, to recover the coefficients of $x$ as polynomials in $y$.  
% If $f_i(x,y)=\sum_j f_{i,j}(y)x^j$, then learning $f_i(x,\beta)$ for various $\beta$ gives access to the sets $\{f_{i,j}(\beta)\}_j$.  
% We can run list recovery on these sets to obtain the polynomials $\{f_{i,j}(y)\}_j$.  
% To align the different lists, we choose a special value $\beta$ that serves as a fingerprint for each $f_i$, allowing us to match the corresponding coefficients across different $\beta$.

\paragraph{Multivariate reconstruction.}

At a high level, we would like to project the blackbox multivariate polynomial to a univariate polynomial, apply the univariate reconstruction algorithm, and then, given the resulting univariate polynomials, recover the original polynomials.

To achieve this, we first observe that the projection must preserve non-associateness.\footnote{Polynomials are associate if and only if they are linearly dependent.} This can be ensured relatively easily using known tools, such as the Klivans-Spielman generator \cite{KS01}.
However, we cannot allow the degree of the projected polynomial to increase. This forces us to use projections of the form $\vx=\vu+ t\cdot (\vu - \vv)$, where $\vu,\vv$ will come from appropriate sets.

We are then left with the task of recovering the original polynomials from their projected versions. Each such projection corresponds to restricting a multivariate polynomial to a line. Consequently, reconstructing the polynomial requires its values on sufficiently many different lines, and hence we must consider many distinct projections.

At this point, another difficulty arises. Each univariate reconstruction produces a list of $r$ polynomials, but these lists are not ordered. In particular, they are not aligned across different projections. Therefore, we must first align the lists in order to obtain black-box access to the projections of a single polynomial. This is accomplished by showing the existence of a suitable `anchor' point $\vu$ that enables the alignment of all projections of the form  $\vu+ t\cdot (\vu - \vv) \to \vx$, where $\vv$ ranges over a sufficiently large (and structured) set. We show that such a good `anchor' point exists in any hitting set for $\Sigma^{[2]}\mathord{\wedge}^{[d]}\Sigma^{[2s]}\Pi^{[\delta]}$  circuits.

Finally, we show that once alignment is achieved, the available information, obtained from the different $\vv$, suffices to reconstruct the original polynomials. This process is reminiscent of erasure correction in error-correcting codes, except that our target objects (i.e., sparse polynomials) do not form a vector space. Nevertheless, the reconstruction can be carried out using a robust version of the Klivans–Spielman hitting set \cite{KS01}.

% This step is comparatively simpler and uses standard techniques.  
% We employ the Klivans–Spielman generator to project all variables except one (each time a different variable survives) to powers of a new variable $y$, producing bivariate polynomials.  
% The projection is designed to preserve sparsity.  
% For each projection we reconstruct the resulting bivariate polynomials as described above, and then use ideas from sparse-polynomial reconstruction to recover the original multivariate polynomials.

% \subsection{Open Problems}

% It is natural to ask whether there are less strict parameter regimes for $d$. In particular, our \autoref{Theorem2} requires that the size of $d$ be related to $\delta$. Does the size of $d$ need to be related to the individual degree bounds of the bottom polynomials for efficient reconstruction? In other words, an interesting problem is to derive similar results to ours but where the assumption on the size of $d$ relates only to $r$.

\subsection{Organization}

In \Cref{sec:prelim} we introduce the basic mathematical tools. These include the ABC theorem (\Cref{sec:abc}), the robust Klivans-Spielman generator (\Cref{sec:KS}), the Wronskian and its properties (\Cref{sec:wronski}), and results on deterministic factorization of univariate polynomials (\Cref{sec:factorization}). 

We present our hitting sets in \Cref{sec:pit}. In \Cref{sec:univariate-rec} we give our univariate reconstruction algorithm and in \Cref{sec:multi} we give the algorithm for the multivariate case.

\section{Preliminaries}
\label{sec:prelim}

For an integer $r\in \N$ we denote $[r]=\{1,\ldots,r\}$ and $[[r]]=\{0,1,\ldots,r\}$.

We say that two nonzero polynomials $f,g\in\F[\vx]$ are associate if there exists a nonzero scalar $\alpha\in\F$ such that $g(\vx)=\alpha f(\vx)$. We say that a set of polynomials is non-associate if no two polynomials in the set are associate. This is equivalent to saying that the polynomials in the set are pairwise linearly independent.

\subsection{The ABC theorem for function fields}\label{sec:abc}

An important tool in all proofs is the ABC theorem for function fields, known as the Mason-Stothers theorem \cite{mason,stothers1981polynomial}.

We state the following strengthening  due to  Vaserstein and Wheland \cite{vaserstein2003vanishing}. 
For a polynomial $h\in\F[x]$, let $\nu(h)$ denote the number of \emph{distinct} roots of $h$ over the algebraic closure $\overline{\F}$.

While most of the results in \cite{vaserstein2003vanishing} are stated for fields of characteristic zero, an inspection of the proofs in   \cite[Section 3]{vaserstein2003vanishing} reveals that they also hold for fields of large enough characteristic.\footnote{Vaserstein and Wheland were primarily interested in characteristic $0$ or fixed characteristic independent of the degrees and number of polynomials (see \cite[Section 5]{vaserstein2003vanishing}). }

\begin{theorem}[\cite{vaserstein2003vanishing}, Theorem~2.2(a)]\label{thm:ABC}
 Let $D, r\in \N$ such that $r\geq 2$. Let $\F$ be a field of characteristic $p$ such that $p=0$ or $p>rD$. Let $h_1,\ldots,h_r\in\F[x]$ satisfy
\[
h_1+\cdots+h_r = h_0,\qquad \gcd(h_1,\ldots,h_r)=1,
\]
where $\max_i\deg{h_i}\leq D$, not all $h_i$ are constant and no nonempty subsum of $h_1,\ldots,h_r$ vanishes. 
Then
\[
\deg(h_0) < (r-1)\!\left(\sum_{i=0}^r \nu(h_i)\right).
\]
\end{theorem}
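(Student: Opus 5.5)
We would prove \autoref{thm:ABC} by the Wronskian method behind Mason--Stothers, keeping track of where the characteristic enters. First we reduce to the case where $h_1,\ldots,h_r$ are linearly independent over $\F$. If they are dependent, pick a minimal dependent subset. The vanishing-subsum hypothesis rules out the relation having all coefficients equal. Then eliminate one polynomial and regroup terms to get a shorter sum $h_0=\sum_j c_j h_j$ with fewer summands. Induct on $r$, adjusting so that the new summands have trivial gcd and no vanishing subsum. The bound $(r-1)(\cdot)$ is monotone in $r$, so this reduction is harmless. This is the fiddly bookkeeping step of Vaserstein--Wheland, and we would follow their Section~3.

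In the independent case, all degrees are at most $D$ and $p>rD$, so the Wronskian $W=W(h_1,\ldots,h_r)$ is nonzero (\autoref{thm:wronskian-lin-ind}). Since $h_0=\sum h_i$, column operations give
\[
W(h_1,\ldots,h_r)=W(h_0,h_2,\ldots,h_r)=\cdots ,
\]
so any $r$ of $h_0,\ldots,h_r$ have the same Wronskian up to sign.

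The key local estimate is as follows. Let $\alpha$ be a root of $h_j$ of multiplicity $e$. Then $(x-\alpha)^{e-(r-1)}$ divides $W$, because the $j$-th column is divisible by $(x-\alpha)^{e-r+1}$ through all $r-1$ derivatives. Here $p>rD$ guarantees that differentiation lowers multiplicity by exactly one. Choosing for each $j$ the Wronskian omitting $h_j$, we get
\[
\prod_{j=0}^r \frac{h_j}{\gcd\!\bigl(h_j,\ \prod_{\alpha: h_j(\alpha)=0}(x-\alpha)^{r-1}\bigr)}\ \Big|\ W .
\]
This uses $\gcd(h_1,\ldots,h_r)=1$, which makes the $h_j$ pairwise coprime in the appropriate sense. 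Pairwise coprimality may fail for $r\ge3$. The fix, following Vaserstein--Wheland, is to take the radical-type bound $\prod_j \prod_{h_j(\alpha)=0}(x-\alpha)^{r-1}$ times $W$ and compare it against $\prod_j h_j$ in valuations, point by point. For each root $\alpha$, at most $r-1$ of the $h_j$ vanish there, since otherwise $\alpha$ would divide the gcd.

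Comparing degrees, $\deg W\le \sum_{i=1}^r \deg h_i - \binom r2$. Combining with the divisibility gives
\[
\deg h_0 \le (r-1)\sum_{i=0}^r\nu(h_i)-\binom r2<(r-1)\sum_i\nu(h_i).
\]
The main obstacle is the pointwise valuation comparison at roots shared by several $h_j$. There we would bound $\ord_\alpha W$ from below by choosing, for each $\alpha$, the Wronskian that omits the $h_j$ with the smallest order at $\alpha$. We would then verify that $\ord_\alpha W\ge \sum_{j\ne j_0}\ord_\alpha h_j-(r-1)\cdot\#\{j:h_j(\alpha)=0\}$. This is done by expanding the determinant and using $p>rD$ to keep derivative orders exact.
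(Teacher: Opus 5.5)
\textbf{Verdict: genuine gap.} Your linearly independent case is correct, but the reduction from ``no vanishing subsum'' to that case does not work as you describe it.

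Your treatment of the linearly independent case is the standard Wronskian argument. The pointwise estimate you single out as the main obstacle is actually routine. Choose $j_0$ with $\ord_\alpha h_{j_0}=0$ and use $W=\pm W(h_0,\ldots,\widehat{h_{j_0}},\ldots,h_r)$. Each remaining column $j$ is divisible by $(x-\alpha)^{\max(0,\ord_\alpha h_j-(r-1))}$, because $\ord_\alpha g^{(k)}\ge \ord_\alpha g-k$ holds in every characteristic. No coprimality is needed, and the characteristic enters only through $W\neq 0$.

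The reduction is where the hypothesis ``no vanishing subsum'' (as opposed to linear independence) has real content. Suppose you eliminate $h_k$ using a minimal relation $\sum_{j\in S}a_jh_j=0$. Then the coefficient $1-a_j/a_k$ vanishes for every $j$ with $a_j=a_k$. Discarding those terms can leave summands with a common factor $G\neq 1$. After dividing by $G$, the induction bounds only $\deg h_0-\deg G$, and monotonicity of $r-1$ in $r$ does nothing to recover $\deg G$.

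This cannot be avoided by a better choice of circuit or of $k$. Over $\mathbb{Q}$, take
\[
h_1=x,\quad h_2=x(2x-3),\quad h_3=x-1,\quad h_4=-(x-1)(x+1),
\]
so that $h_0=x(x-1)$, $\gcd(h_1,\ldots,h_4)=1$, and no subsum vanishes.
\begin{itemize}
\item Up to scaling, the only relation is $h_1+h_2+2h_3+2h_4=0$.
\item Eliminating any $h_k$ therefore yields either $h_0=\tfrac12(h_1+h_2)$ or $h_0=-(h_3+h_4)$. The summands share the factor $x$ in the first case and $x-1$ in the second.
\item In fact every expression of $h_0$ as a combination of linearly independent $h_j$'s with nonzero coefficients is one of these two. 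The reason is that $h_0,h_1,h_2$ span a two-dimensional subspace, as do $h_0,h_3,h_4$, and these two subspaces meet exactly in $\mathrm{span}(h_0)$.
\item Using $x^N$ and $(x-1)^N$ in place of $x$ and $x-1$ in this construction makes the lost factor have degree $\tfrac12\deg h_0$.
\end{itemize}
In this example the bound is recovered only by combining both sub-relations. Since $x$ and $x-1$ are coprime divisors of $h_0$, we get $\deg h_0\le \deg(h_0/x)+\deg(h_0/(x-1))$.

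In general you need an induction that bounds $\max_{0\le i\le r}\deg h_i$, not just $\deg h_0$, because the useful sub-relations need not express $h_0$ with trivial gcd. That induction must also explicitly control the degrees of such common factors.

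The paper does not prove this statement. It cites Vaserstein--Wheland and asserts that their Section~3 carries over to large characteristic, so deferring to them is no weaker than the paper. But the reduction you sketch in their place is not a valid argument, and it is not harmless bookkeeping: it is the substantive step.
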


In other words, \autoref{thm:ABC} asserts that if $h_0$ has ``high'' degree while each $h_i$ has ``few'' distinct roots, then $h_0$ cannot be expressed as a linear combination of the other $h_i$’s. 
We will use the following corollary. 

We note that we require an upper bound on $\deg(h_i)$ only in the case of positive characteristic.

\begin{corollary}\label{cor:ABC}
Let $r,\eta,\delta,e\in \N$ such that $r\geq 2$.
Let $\F$ be a field of characteristic $p$ such that $p=0$ or $p>r(\eta+e\delta)$. Let $P_0(x)^e,P_1(x)^e,\ldots,P_r(x)^e\in \F[x]$ be pairwise linearly independent polynomials, such that for all $i$, $\deg(P_i)\leq \delta$. 
Let $g_0(x),\ldots,g_r(x)\in\F[x]$ be additional polynomials, such that  for all $i$, $\deg(g_i)\leq \eta$. 
Assume that
\[
e \;\ge\; (r^2+r)\,(\eta +1).
\]
Then the identity
\[
\sum_{i=0}^r g_i(x)\,P_i(x)^e = 0
\]
holds if and only if each $g_i(x)$ is identically zero.
\end{corollary}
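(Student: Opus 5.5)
The plan is to reduce to \autoref{thm:ABC} after passing to a minimal vanishing subsum and removing the common factor. The direction ``if'' is trivial, so assume $\sum_{i=0}^r g_iP_i^e=0$ with some $g_i\neq 0$. Let $S\subseteq[[r]]$ be a minimal nonempty set of indices with $g_iP_i^e\neq 0$ for $i\in S$ and $\sum_{i\in S}g_iP_i^e=0$. Then no nonempty proper subsum vanishes, and clearly $|S|\ge 2$. Write $|S|=k+1\le r+1$. Let $G=\gcd\{g_iP_i^e : i\in S\}$ and $h_i=g_iP_i^e/G$. The $h_i$ are coprime, still have no vanishing proper subsum, and satisfy $\deg h_i\le \eta+e\delta$. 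Hence the characteristic hypothesis $p>r(\eta+e\delta)$ is exactly what \autoref{thm:ABC} needs when we move one term to the other side, since that leaves $k\le r$ summands. For the moment, fix an index $j\in S$ to play the role of $h_0$ (with a sign change) and keep the others as $h_1,\dots,h_k$.

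The core of the argument is to compare degrees with distinct-root counts, measured only along the roots where the $P_i$ ``differ''. For $\alpha\in\overline{\F}$ let $m_i(\alpha)$ be the multiplicity of $\alpha$ in $P_i$. Let $N$ be the number of roots $\alpha$ at which the values $\{m_i(\alpha)\}_{i\in S}$ are not all equal.

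\textbf{Claim: $N\ge 1$.} Otherwise all $P_i$, $i\in S$, have identical root multiplicities and are therefore associate. Then the $P_i^e$ would be associate as well, contradicting pairwise independence.

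\textbf{Root count.} At a root $\alpha$ where all $m_i(\alpha)$ are equal, the factor $(x-\alpha)^{e m(\alpha)}$ divides every term, so it is removed by $G$. Any remaining multiplicity of $\alpha$ in some $h_i$ comes from the $g_i$. Hence every root of $h_i$ is either a root of $g_i$ or one of the $N$ special roots, which gives $\sum_{i\in S}\nu(h_i)\le (k+1)(\eta+N)$.

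\textbf{Degree count.} At a special root $\alpha$, let $i$ be an index with $m_i(\alpha)>\min_j m_j(\alpha)$. Then $(x-\alpha)$ survives in $h_i$ with multiplicity at least $e-\eta$, because $G$ can absorb at most $e\min_j m_j(\alpha)+\eta$ of it. Summing over special roots gives $\sum_{i\in S}\deg h_i\ge N(e-\eta)$. We then choose $j$ to be the index of largest degree, so that $\deg h_j\ge N(e-\eta)/(k+1)$.

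Plugging these into \autoref{thm:ABC} gives
\[
\frac{N(e-\eta)}{k+1}\;\le\;\deg h_j\;<\;(k-1)(k+1)(\eta+N).
\]
Since $N\ge1$, we have $(\eta+N)/N\le \eta+1$, so this yields $e<\eta+(k+1)^2(k-1)(\eta+1)$. I would then check that this contradicts the hypothesis $e\ge (r^2+r)(\eta+1)$.

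\textbf{Main obstacle.} The crude averaging step ``largest term has at least a $1/(k+1)$ fraction of the total degree'' loses a factor of $k$. This gives a cubic rather than the stated quadratic requirement in $r$. To recover $r^2+r$, I would first try summing the \autoref{thm:ABC} inequality over all choices of which term is moved to the right-hand side, instead of taking only the maximum. Alternatively, I would apply the theorem directly to the full identity with the largest term isolated, and count its own roots against it: its special roots contribute $e-\eta$ each to its degree, but only $1$ each to $\nu$. That should let the per-term degree and root contributions be compared without the averaging loss. Getting this bookkeeping tight is the step I expect to be delicate. The rest of the argument, namely minimality, the gcd reduction, and the characteristic check, is routine.
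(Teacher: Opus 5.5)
There is a genuine gap, and it is the one you flag yourself. As written, your argument only yields $e<\eta+(k+1)^2(k-1)(\eta+1)$. This does not contradict $e\ge (r^2+r)(\eta+1)$: already for $r=2$ the constant is $9$ against $6$. So the corollary is not proved.

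Your first proposed remedy cannot close the gap. Summing the \autoref{thm:ABC} inequality over all choices of the isolated term gives $\sum_{i\in S}\deg h_i<(k+1)(k^2-1)(\eta+N)$. Together with $\sum_{i\in S}\deg h_i\ge N(e-\eta)$, this is the same cubic bound as averaging. The loss does not come from which term you isolate. It comes from the root count $\nu(h_i)\le \eta+N$, which charges \emph{every} term with \emph{all} $N$ special roots. The degree gain $e-\eta$ at a special root, however, is collected only by the terms whose multiplicity there is above the minimum.

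The missing observation is that $h_i$ can vanish only at roots of $g_i$ and at those special roots $\alpha$ with $m_i(\alpha)>\mu(\alpha):=\min_{j\in S}m_j(\alpha)$. Indeed, if $m_i(\alpha)=\mu(\alpha)$, then $\ord_\alpha(G)\ge e\mu(\alpha)$ forces $\ord_\alpha(h_i)\le\ord_\alpha(g_i)$. With $P:=\gcd_{i\in S}P_i$ this reads $\nu(h_i)\le\eta+\nu(P_i/P)$.

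Now let $\nu^*:=\max_{i\in S}\nu(P_i/P)$, which is at least $1$ by your Claim. Isolate a term $t$ attaining it (the term of largest degree also works).
\begin{itemize}
\item Your own local estimate gives $\deg h_t\ge\nu^*(e-\eta)$.
\item The sharpened root count gives $\sum_{i\in S}\nu(h_i)\le(k+1)(\eta+\nu^*)$.
\item \autoref{thm:ABC} then gives $\nu^*(e-\eta)<(k^2-1)(\eta+\nu^*)$, i.e.\ $e<\eta+(k^2-1)(\eta+1)<(r^2+r)(\eta+1)$, a contradiction.
\end{itemize}

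This is exactly the paper's argument. It chooses $t$ maximizing $\nu(P_t/P)$ and bounds $\nu(\tilde h_j)\le\nu(g_j(P_j/P)^e)\le\eta+\nu(P_t/P)$ for every $j$. It then lower-bounds $\deg\tilde h_t\ge e\,\nu(P_t/P)-\deg G$, using the auxiliary claim that the gcd equals $P^eG$ with $G\mid\prod_i g_i$, hence $\deg G\le(r+1)\eta$.
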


\begin{proof}
The “if” direction is immediate. For the converse, suppose
\[
\sum_{i=0}^r g_i P_i^e = 0
\]
is a nontrivial identity. Passing, if necessary, to a nontrivial relation of minimal support, we may assume that every $h_i := g_i P_i^e$ in the relation is nonzero, and hence each corresponding $g_i$ is nonzero. Set
\[
H(x)=\gcd(h_0,\ldots,h_r),\qquad P(x)=\gcd(P_0,\ldots,P_r) .
\] 
%\ynote{$a:=deg(P)$ was never used so I removed it}

\begin{claim}
There exists $G(x)\in\F[x]$ such that $H(x)=P(x)^e G(x)$ and $\deg G \le (r+1)\eta$.
\end{claim}

\begin{proof}
Clearly $P^e$ divides each $h_i$, so $P^e$ divides $H$. As $\gcd((P_0/P)^e,\ldots,(P_r/P)^e)=1$, we get that
\[
\gcd\big(g_0(P_0/P)^e,\ldots,g_r(P_r/P)^e\big)\quad \text{divides}\quad \Pi_{i=0}^r g_i.
\]
Thus $G\coloneq H/P^e$ divides $\Pi_i g_i$, so
\[
\deg(G)=\deg(H/P^e)\ \le\ \sum_i \deg g_i\ \le\ (r+1)\eta.\qedhere
\]
\end{proof}

Let $S := \{i:\, g_i \not\equiv 0\}$ be the support, and choose $t\in S$ such that
\[
\nu_{\mathrm{sup}} := \max_{i\in S} \nu(P_i/P)= \nu(P_t/P) .
\]
Since the polynomials are non-associate, $\nu_{\mathrm{sup}}\ge 1$.
Define $\tilde h_i := h_i/H$. 
Then the equality $\sum_{i=0}^r h_i=0$ holds if and only if $\sum_{i=0}^r \tilde h_i=0$, and moreover
\[
\gcd(\tilde h_0,\ldots,\tilde h_r)=1.
\]

By \autoref{thm:ABC}, if $\sum_{i=0}^r \tilde h_i=0$ and not all $\tilde h_i$ vanish, then
\begin{align*}
 e\,\nu_{\mathrm{sup}}-\deg(G)&= e\nu(P_t/P)-\deg(G)\\
 &\le \deg((P_t/P)^e)-\deg(G)
 \\& \le \deg(\tilde h_t)\\
 &< (r-1)\sum_{j=0}^r \nu(\tilde h_j)\\ &\le (r-1)\sum_{j=0}^{r}\nu(g_j (P_j/P)^e)\\
 &\le (r-1)(r+1)\big(\eta+\nu_{\mathrm{sup}}\big).
\end{align*}
Hence
\[
e < (r^2+r)(\eta+1),
\]
contradicting the assumption on $e$. 
\end{proof}

We did not attempt to optimize the parameters in \autoref{cor:ABC}. In particular, for the special case where all $g_i$ are scalars a tighter result is known.

\begin{theorem}[{\cite[Corollary 3.8]{vaserstein2003vanishing}}]\label{thm:abc}
     Let $r,d,\delta\in \N$ such that $r\geq 2$. Let $\F$ be of characteristic $p=0$ or $p>rd\delta$. Let $f_0(x),\ldots,f_r(x)\in\F[x]$ be non-associate polynomials of degree at most $\delta$, not all the $f_i$ are constant. If $(r-1)^2\leq d+1$ then $\sum_{i=0}^{r}f_i^d(x) \neq 0$.
\end{theorem}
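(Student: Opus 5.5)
We derive \autoref{thm:abc} directly from \autoref{thm:ABC}. Suppose, for contradiction, that $\sum_{i=0}^{r} f_i^d = 0$.

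\textbf{Step 1: reduce to a minimal relation.} Pass to a minimal vanishing subsum. Since the $f_i$ are non-associate, the $f_i^d$ are pairwise linearly independent: if $f_i^d=c f_j^d$, then $f_i/f_j$ is a $d$-th root of a constant, so $f_i$ and $f_j$ are associate. Hence any vanishing subsum has at least three terms. After relabeling, it has the form $\sum_{i=0}^{k} f_i^d=0$ with $2\le k\le r$, and no proper nonempty subsum vanishes. Not all of these $f_i$ are constant, since two distinct non-associate constants cannot both occur.

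\textbf{Step 2: remove the common factor.} Let $P=\gcd(f_0,\dots,f_k)$ and set $g_i=f_i/P$. The $g_i$ remain non-associate and $\gcd(g_0,\dots,g_k)=1$. Put $h_i:=g_i^d$ for $1\le i\le k$ and $h_0:=-g_0^d$. Then $h_1+\dots+h_k=h_0$ and $\gcd(h_1,\dots,h_k)=1$. Indeed, a common root of $h_1,\dots,h_k$ would also be a root of $h_0$, hence of all the $g_i$, which is impossible. No nonempty subsum of $h_1,\ldots,h_k$ vanishes, by minimality. Not all $h_i$ are constant: if they were, all $g_i$ would be constants, so all $f_i$ would be associate to $P$, a contradiction.

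\textbf{Step 3: apply \autoref{thm:ABC} at every index.} The characteristic condition holds because $kd\delta\le rd\delta$. The key point is that the relation is symmetric, so any index can play the role of $h_0$. Choose the index $t$ with $\deg g_t=D:=\max_i \deg g_i\ge 1$ and move it to position $0$. Using $\nu(g_i^d)=\nu(g_i)\le \deg g_i\le D$, \autoref{thm:ABC} gives
\[
dD=\deg(g_t^d)<(k-1)\sum_{i=0}^{k}\nu(g_i)\le (k-1)(k+1)D .
\]
This yields $d<k^2-1\le r^2-1$. That is weaker than the hypothesis $(r-1)^2\le d+1$.

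\textbf{Step 4: sharpen the count (main obstacle).} To reach the stated bound we must exploit the tighter accounting behind \cite[Corollary 3.8]{vaserstein2003vanishing}. Apply the Wronskian or Mason-type argument of \cite[Section 3]{vaserstein2003vanishing} directly to the $k+1$ terms, using that each term is a perfect $d$-th power. Each root of $g_i$ contributes multiplicity at least $d$ to $g_i^d$, so the Wronskian $W(g_1^d,\dots,g_k^d)$ vanishes there to order at least $d-(k-1)$. Comparing this with its degree, which is at most $\sum_i \deg(g_i^d)-\binom{k}{2}$, gives
\[
(d-k+1)\sum_{i=0}^{k}\deg g_i\ \le\ (k-1)\Big(\sum_{i=0}^{k}\deg g_i\Big)-\text{(correction)} .
\]
Summing over the symmetric choices of which term is dropped gives an inequality of the form $d+1<(k-1)^2$. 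This contradicts $(r-1)^2\le d+1$ because $k\le r$.

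I expect this refined bookkeeping to be the hard part. It requires averaging the degree estimate over all $k+1$ terms, rather than bounding the maximum degree $D$ as in Step 3. Since the statement is cited from \cite{vaserstein2003vanishing}, the alternative is to check that their proof uses the characteristic only through nonvanishing of Wronskians of polynomials of degree at most $rd\delta$. This holds when $p>rd\delta$ by \autoref{thm:wronskian-lin-ind}, and it is exactly the justification the paper gives for \autoref{thm:ABC}.
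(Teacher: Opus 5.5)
\textbf{The gap is in Step~4, and no repair is possible as stated.} Your Steps 1--3 are correct. Step 4, however, is not an argument, and it cannot be completed, because the statement as printed (with $r+1$ summands $f_0,\dots,f_r$) is false. Three counterexamples, all in characteristic $0$:
\begin{itemize}
\item $r=2$, $d=1$: $x+1+(-x-1)=0$.
\item $r=2$, $d=2$, over $\mathbb{C}$: $(x^2-1)^2+(2x)^2+(i(x^2+1))^2=0$.
\item $r=3$, $d=3$ (Mahler's identity): $(9x^4)^3+(3x-9x^4)^3+(1-9x^3)^3+(-1)^3=0$.
\end{itemize}
In each case the $f_i$ are pairwise non-associate, not all constant, and $(r-1)^2\le d+1$. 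So the bound $d<k^2-1$ you obtain in Step 3 for a minimal relation with $k+1$ terms is the truth here. It is already sharp at $k=2$, as the $d=2$ example shows. No refined bookkeeping can push it to $d+1<(r-1)^2$ for $r+1$ summands.

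The inequality you display in Step 4 is also not what the Wronskian gives. For pairwise coprime $g_i$, the correct count is $(d-k+1)\sum_{i=0}^{k}\deg g_i\le d\sum_{i=1}^{k}\deg g_i-\binom{k}{2}$. Whichever term is dropped, and even after averaging over all $k+1$ choices, this returns exactly $d<k^2-1$ again.

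\textbf{The cause is an off-by-one in the indexing.} The threshold $(r-1)^2\le d+1$, i.e.\ $d\ge r(r-2)$, is the right one for relations with $r$ summands. That is exactly how the paper uses the result: in \autoref{thm:gen-4} it is applied to $\sum_{i=1}^{r}\Psi_{k,q}(f_i)^d$.

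\textbf{For that version, Steps 1--3 already give a complete proof.}
\begin{itemize}
\item A minimal vanishing subsum of $\sum_{i=1}^{r} f_i^d$ has $k+1\le r$ terms with $k\ge 2$. So for $r=2$ there is nothing to prove.
\item The characteristic requirement $p>kd\delta$ of \autoref{thm:ABC} follows from $p>rd\delta$.
\item Step 3 then gives $d<k^2-1\le (r-1)^2-1$, contradicting $(r-1)^2\le d+1$.
\end{itemize}
Equivalently, for the $r+1$ summands as printed, you have proved the statement under the corrected hypothesis $r^2\le d+1$.

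The paper gives no argument of its own. It only cites \cite[Corollary 3.8]{vaserstein2003vanishing}, with the remark that those proofs survive in characteristic $p>rd\delta$. So deriving the result from \autoref{thm:ABC} as in your Step 3 is a legitimate self-contained route. Replace Step 4 with the observation that the statement must be re-indexed.
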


\subsection{The Klivans–Spielman generator}\label{sec:KS}

We recall the generator construction of Klivans and Spielman~\cite{KS01}, which converts a sparse multivariate polynomial into a univariate polynomial while preserving its structure.

For a prime number $q$, an integer $k<q$ let $\Psi_{k,q}:\F[\vx]\mapsto \F[y]$ be defined as 
\begin{equation}\label{eq:psi-k-q}
\Psi_{k,q}(x_i)=
y^{(k^{i-1}\bmod q)},    
\end{equation}
with the natural extension to monomials and polynomials. Namely,  for an exponent vector $\ve=(e_1,\ldots,e_n)$
\[
\Psi_{k,q}(\vx^\ve)= y^{\sum_{i} e_i\left(k^{(i-1)}\bmod q\right)}.
\]
Clearly, 
\begin{equation}\label{eq:deg-psi}
 \deg(\Psi_{k,q}(\vx^\ve))\leq  (q-1)\sum_{i}e_i=(q-1)\deg(\vx^\ve) \ .   
\end{equation}

\begin{theorem}[Corollary of Klivans–Spielman~{\cite[Lemma~1]{KS01}}]\label{thm:PIT:sparse:KS}
Let $f\in\F[\vx]$ be a nonzero $n$-variate polynomial, such that $f$ has at most~$s$ monomials and  $\deg(f)\leq \delta$.
Let $q$ be a prime satisfying $q\ge \max(\delta,\binom{s}{2}n)+1$.
Then, for all but at most $\binom{s}{2}(n-1)$ values of $k\in[q-1]$, the univariate polynomial $\Psi_{k,q}(f)$
has the same number of monomials as $f$. Furthermore, $\deg(\Psi_{k,q}(f))\leq \delta q$. Moreover, for all but at most $(s-1)(n-1)$ values of $k\in[q-1]$, the univariate polynomial $\Psi_{k,q}(f)$ is nonzero.
\end{theorem}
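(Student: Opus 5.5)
The plan is the standard Klivans--Spielman argument: first understand when two monomials collide under $\Psi_{k,q}$, then count the bad values of $k$ via roots of a univariate polynomial over $\mathbb{F}_q$.

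\emph{Collisions.} Fix two distinct exponent vectors $\ve\neq\ve'$ in the support of $f$; there are at most $\binom{s}{2}$ such pairs. Their images collide exactly when
\[
\sum_i (e_i-e'_i)\,(k^{i-1}\bmod q)=0
\]
as an integer identity. Since $\deg(f)\le\delta$, we have $|e_i-e'_i|\le\delta<q$. We do not want to argue over the integers directly. Instead, reduce modulo $q$: a collision implies
\[
D_{\ve,\ve'}(k):=\sum_i (e_i-e'_i)k^{i-1}\equiv 0 \pmod q .
\]
Here $D_{\ve,\ve'}$ is a polynomial in $\mathbb{F}_q[z]$ of degree at most $n-1$, evaluated at $z=k$.

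\emph{Nonvanishing and counting.} Since $\ve\neq\ve'$, some coefficient $e_i-e'_i$ is a nonzero integer of absolute value at most $\delta<q$. It therefore stays nonzero modulo $q$, which is exactly where $q>\delta$ is used. So $D_{\ve,\ve'}$ is a nonzero polynomial over the field $\mathbb{F}_q$ and has at most $n-1$ roots. A union bound over the pairs shows that all but at most $\binom{s}{2}(n-1)$ values $k\in[q-1]$ avoid every collision. The condition $q>\binom{s}{2}n$ ensures that such a good $k$ exists.

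\emph{Consequences.} For a good $k$, distinct monomials of $f$ map to distinct monomials in $y$, and coefficients are just carried over, so $\Psi_{k,q}(f)$ has the same number of monomials as $f$ and in particular is nonzero. The degree bound $\deg(\Psi_{k,q}(f))\le\delta q$ follows directly from \eqref{eq:deg-psi}, in fact with $(q-1)\delta$.

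\emph{The ``moreover'' part.} This is the only step needing care. For nonvanishing it is enough that a single monomial, say the lexicographically or otherwise suitably chosen one $\ve^*$, collides with no other monomial. Fix $\ve^*$ in the support and consider only the $s-1$ pairs $(\ve^*,\ve')$. Each pair excludes at most $n-1$ values of $k$, so at most $(s-1)(n-1)$ values are bad in total. For every other $k$, no other monomial maps onto $y^{\deg\Psi(\vx^{\ve^*})}$. The coefficient of that power is then the coefficient of $\vx^{\ve^*}$ in $f$, which is nonzero, so $\Psi_{k,q}(f)\neq 0$.

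The main obstacle is the reduction from the integer collision condition to the mod-$q$ polynomial identity, together with ensuring the difference polynomial is nonzero mod $q$. Both are handled by the bound $|e_i-e'_i|\le\delta<q$.
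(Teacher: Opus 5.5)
Your proof is correct and is the standard Klivans--Spielman collision-counting argument; the paper gives no separate proof and simply cites \cite[Lemma~1]{KS01}, whose proof is this argument. A collision of $\vx^{\ve}$ and $\vx^{\ve'}$ forces $k$ to be a root of $\sum_i (e_i-e'_i)z^{i-1}\in\mathbb{F}_q[z]$, which is nonzero because $q>\delta$ and has degree at most $n-1$; restricting to the $s-1$ pairs that involve one fixed monomial (any monomial works) gives the $(s-1)(n-1)$ bound for nonvanishing.
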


\begin{corollary}\label{cor:ks-lin-ind}
Let $f,g\in\F[\vx]$ be non-associate polynomials of individual degrees at most~$\delta$, each with at most~$s$ monomials.
Let $q$ be a prime larger than $\max(\delta,s^2n)+1$.
Then, for all but at most $s(s-1)(n-1)$ values of $k\in[q-1]$, the univariate specializations
$\Psi_{k,q}(f),\Psi_{k,q}(g)$
are linearly independent over $\F$.
\end{corollary}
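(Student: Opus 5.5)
The plan is to reduce linear dependence of $\Psi_{k,q}(f)$ and $\Psi_{k,q}(g)$ to a monomial collision under $\Psi_{k,q}$, and then count, for each colliding pair of monomials, how many $k$ can cause it, as in \autoref{thm:PIT:sparse:KS}. The first step (main obstacle) exposes a problem with the constant.

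\textbf{The stated bound appears to fail.} Take $s=1$, $f=x_1$, $g=x_2$. These are non-associate, $1$-sparse, and of individual degree $1$. The corollary then allows $s(s-1)(n-1)=0$ bad values of $k$. But $k=1\in[q-1]$ gives $\Psi_{1,q}(x_1)=\Psi_{1,q}(x_2)=y$, which are linearly dependent. So the count $s(s-1)(n-1)$ cannot be proved as written. It only controls collisions \emph{inside} $\mathrm{supp}(f)$ and \emph{inside} $\mathrm{supp}(g)$, namely $2\binom{s}{2}(n-1)$ values. It does not control cross collisions between the two supports. Below I give a proof of the natural corrected statement and say what I would check against the later applications.

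\textbf{Collision count.} For exponent vectors $\va\neq\vb$ with entries $<q$ (using $q>\delta$), $\Psi_{k,q}(\vx^{\va})=\Psi_{k,q}(\vx^{\vb})$ means $\sum_i(a_i-b_i)k^{i-1}\equiv0\pmod q$. Since $q>\delta$ and $q>s^2n$, the $k^{i-1}\bmod q$ values stay below $q$, as in the Klivans--Spielman argument. So this is a nonzero polynomial in $k$ over $\mathbb{F}_q$ of degree at most $n-1$, and it has at most $n-1$ roots. Let $U=\mathrm{supp}(f)\cup\mathrm{supp}(g)$, so $|U|\le 2s$. For all but $\binom{|U|}{2}(n-1)$ values of $k$, the map $\Psi_{k,q}$ is injective on $U$.

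\textbf{Injectivity implies independence.} Fix a $k$ for which $\Psi_{k,q}$ is injective on $U$. Then $\Psi_{k,q}$ acts on $\Span\{\vx^{\va}:\va\in U\}$ as an injective linear map, since distinct monomials go to distinct monomials. A linear relation $\alpha\Psi_{k,q}(f)+\beta\Psi_{k,q}(g)=0$ therefore pulls back to $\alpha f+\beta g=0$. Non-associateness of $f,g$ then forces $\alpha=\beta=0$.

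\textbf{Sharpening the count.} To reduce the count, one only needs injectivity on pairs $(\va,\vb)$ whose collision could actually create dependence. I would use the $2\times2$ minor criterion: $\Psi(f),\Psi(g)$ are dependent iff $\Psi(f)(y)\Psi(g)(w)-\Psi(f)(w)\Psi(g)(y)=0$. This is the image of $F(\vx,\vz)=f(\vx)g(\vz)-f(\vz)g(\vx)\neq0$ under the two-copy Klivans--Spielman map. I would then fix a single nonzero monomial of $F$ and exclude only the collisions that could cancel it. At best this gives roughly $s(2s-1)(n-1)$ bad values of $k$, and by the example above nothing below the cross-collision count is possible. I expect the later uses of \autoref{cor:ks-lin-ind} to need only a bound of the form $O(s^2n)$, which the corrected version supplies, given that $q$ is chosen correspondingly larger.
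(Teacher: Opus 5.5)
You are right that the corollary is false as stated, and your example ($n\ge 2$, $f=x_1$, $g=x_2$, $k=1$) is a valid counterexample. The failure is not limited to the degenerate case $s=1$. Take $n=2$, $s=2$, $\delta=6$, $q=11$, and let $f=x_1^6-x_2^2$ and $g=x_1^4x_2-x_1^2x_2$. Then $\Psi_{k,q}(f)=y^6-y^{2k}$ and $\Psi_{k,q}(g)=y^{k+4}-y^{k+2}$. These are dependent for $k=2$ (equal), $k=3$ ($\Psi_{k,q}(f)=0$) and $k=4$ (negatives of each other). That is three values, against the allowed $s(s-1)(n-1)=2$.

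The paper's proof obtains $s(s-1)(n-1)=2\binom{s}{2}(n-1)$ by applying \autoref{thm:PIT:sparse:KS} to $f$ and $g$ separately. It then calls the case of different supports ``immediate''. That step ignores collisions between $\mathrm{supp}(f)$ and $\mathrm{supp}(g)$. In the example, at $k=2,4$ each of $f,g$ keeps both its monomials and their supports in $\vx$ are disjoint, yet the images have the same support.

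Your repair, making $\Psi_{k,q}$ injective on $U=\mathrm{supp}(f)\cup\mathrm{supp}(g)$, is the correct version of the paper's argument, and your pull-back step is fine. One small correction: a collision \emph{implies} (not ``means'') the congruence $\sum_i(a_i-b_i)k^{i-1}\equiv 0 \pmod q$. That polynomial is nonzero over $\mathbb{F}_q$ because $0<|a_i-b_i|\le\delta<q$ for some $i$; the condition $q>s^2n$ only serves to make the count non-vacuous. The repaired count also suffices downstream. In \autoref{thm:gen-4} the union bound over pairs gives $\binom{r}{2}s(2s-1)(n-1)<r^2s^2n$, so that theorem's count is unaffected.

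Your ``sharpening'' paragraph is off, though. First, $s(2s-1)(n-1)$ equals $\binom{2s}{2}(n-1)$, so it gains nothing. Second, the claim that nothing below the cross-collision count is possible is false. Write $f_{\mathbf a},g_{\mathbf a}$ for the coefficients of $\vx^{\mathbf a}$ in $f$ and $g$.
\begin{itemize}
\item Since $f,g$ are independent, there are $\mathbf a\ne\mathbf b$ in $U$ with $f_{\mathbf a}g_{\mathbf b}-f_{\mathbf b}g_{\mathbf a}\ne 0$.
\item Suppose $\Psi_{k,q}(\vx^{\mathbf a})\ne\Psi_{k,q}(\vx^{\mathbf c})$ for all $\mathbf c\in U\setminus\{\mathbf a\}$, and likewise for $\mathbf b$. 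Then the coefficients of $\Psi_{k,q}(\vx^{\mathbf a})$ and $\Psi_{k,q}(\vx^{\mathbf b})$ in $\Psi_{k,q}(f),\Psi_{k,q}(g)$ are exactly $f_{\mathbf a},g_{\mathbf a},f_{\mathbf b},g_{\mathbf b}$, so the images are independent.
\item This excludes only $(2|U|-3)(n-1)\le(4s-3)(n-1)$ values of $k$, which is linear in $s$.
\end{itemize}
This is also what your own idea of isolating one monomial $\vx^{\mathbf a}\vz^{\mathbf b}$ of $f(\vx)g(\vz)-f(\vz)g(\vx)$ yields when carried out. In particular, the stated bound $s(s-1)(n-1)$ does hold for $s\ge 5$, so any counterexample must have $s\le 4$.
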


\begin{proof}
By \autoref{thm:PIT:sparse:KS}, for all but $s(s-1)(n-1)$ values of $k\in[q-1]$, the specializations preserve the supports of $f$ and $g$.
If $f$ and $g$ have different supports, the property is immediate.
Otherwise, they share the same set of monomials but have linearly independent coefficient vectors, and this linear independence is preserved under the substitution.
\end{proof}

We shall use the Klivans-Spielman generator in order to create a robust interpolation set for sparse polynomials. That is, a set of points such that knowing the value of an $s$-sparse polynomial on $1-\epsilon$ fraction of the points in the set, allows efficient recovery of the polynomial.

The idea is quite simple given \Cref{thm:PIT:sparse:KS}. Assume first the case of a characteristic zero field. Consider the modified generator
 \[\Psi_{j,k,q,\lambda}(x_i)=
 \begin{cases}
 \lambda \cdot y^{(k^{j-1}\bmod q)} & i=j,\\
 y^{(k^{i-1}\bmod q)} & i\ne j.
 \end{cases},
 \]
 Observe that each monomial $\vx^\ve$ is mapped to
 \[
 \Psi_{j,k,q,\lambda}(\vx^\ve)= \lambda^{e_j}\cdot y^{\sum_{i} e_i\left(k^{(i-1)}\bmod q\right)}.
 \]
Thus, if we know both $\Psi_{k,q}(\vx^\ve)$ and $\Psi_{j,k,q}(\vx^\ve)$ then we can recover the exponent of $x_j$ from each of the monomials. To adapt this to finite fields, we note that all we need $\lambda$ to satisfy is that its order  in the multiplicative group $\F^{*}$,  is larger than $\delta$. Since the order of the group is $|\F|-1$, we may have to pick $\lambda$ from an extension field. Similarly, to interpolate a polynomial of degree $\delta q$ we would need a field of size at least $\delta q$. This leads to the following construction of a robust interpolating set.

For the construction we shall need the following specialization of $\Psi_{j,k,q,\lambda}$ which we denote with $\Psi_{j,k,q,\lambda}[\alpha]:\F[\vx]\to\F^n$, for any $\alpha\in \F$:
\[\Psi_{j,k,q,\lambda}[\alpha]_i=
 \begin{cases}
 \lambda \cdot \alpha^{(k^{j-1}\bmod q)} & i=j,\\
 \alpha^{(k^{i-1}\bmod q)} & i\ne j.
 \end{cases}
 \]
In other words, $\Psi_{j,k,q,\lambda}[\alpha]$ is obtained by substituting $\alpha$ into $y$ in the vector $(\Psi_{j,k,q,\lambda}(x_1),\ldots,\Psi_{j,k,q,\lambda}(x_n))$.
We also denote with $\Psi_{k,q}[\alpha]:\F[\vx]\to\F^n$ the map
\[\Psi_{k,q}[\alpha]_i =  \alpha^{(k^{i-1}\bmod q)}\ ,\]
which is obtained by substituting $\alpha$ into $y$ in the vector $(\Psi_{k,q}(x_1),\ldots,\Psi_{k,q}(x_n))$.

The following construction is far from being optimal in terms of its size, but it is relatively simple to describe.

\begin{construction}[Robust interpolating set for sparse polynomials]\label{construction:robust-ks}
Let $n,s,\delta\in \N$ and $\epsilon>0$ ($\epsilon$ may depend on $n,s,\delta$).
Let 
\[2\delta s^2n^2/\epsilon <q < 4\delta s^2n^2/\epsilon\] be a prime number.
Let $\cA_{n,\delta,q,\epsilon}\subset\F$ be a set of size  \[|\cA_{n,\delta,q,\epsilon}| =  \ceil{2n\delta q/\epsilon}.\] If $|\F|$ is too small then we pick $\cA_{n,\delta,q,\epsilon}$ from an extension field $\mathbb{E}$ of size $|\F|^2\leq |\mathbb{E}|\leq 2n\delta q/\epsilon\cdot |\F| $. 

For $\lambda\in \F$ (or $\lambda\in \mathbb{E}$ if needed) denote 
\begin{equation*}
    \begin{aligned}
        \cS_{n,\delta,q,\epsilon,\lambda} &= \\ \bc{\Psi_{j,k,q,\lambda}[\alpha] \mid j\in [n],  k\in [q-1], \alpha\in\cA_{n,\delta,q,\epsilon}}\ &\cup \  \bc{\Psi_{k,q}[\alpha] \mid k\in [q-1], \alpha\in\cA_{n,\delta,q,\epsilon}} \ .
    \end{aligned}
\end{equation*}
Observe that 
\[| \cS_{n,\delta,q,\epsilon,\lambda}|=(n+1)(q-1)|\cA_{n,\delta,q,\epsilon}|= O\left(\delta n^2q^2/\epsilon  \right)= O\left(\delta^3 n^6s^4/\epsilon^3  \right)\ .\]
\end{construction}
We first prove that $\cS_{n,\delta,q,\epsilon,\lambda}$ is a good hitting set for sparsity $2s$. 
\begin{lemma}\label{lem:const-good-pit}
    Assume the notation of \Cref{construction:robust-ks}. Let $f\in\F[x]$ be an $n$-variate, $2s$-sparse polynomial of degree $\deg(f)=\delta$. Then, $f$ vanishes on at most a fraction of $\frac{\epsilon}{n}$ of the points in $ \cS_{n,\delta,q,\epsilon,\lambda}$.
\end{lemma}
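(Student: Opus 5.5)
We bound the number of zeros block by block. The set $\cS_{n,\delta,q,\epsilon,\lambda}$ splits into $(n+1)(q-1)$ blocks. Each block is indexed by a pair $(j,k)$ with $j\in[n]$, or by $k$ alone for the unmodified generator, and consists of the $|\cA_{n,\delta,q,\epsilon}|$ points obtained by substituting $\alpha\in\cA_{n,\delta,q,\epsilon}$ for $y$. The plan is to call a block \emph{bad} if the corresponding univariate polynomial, $\Psi_{j,k,q,\lambda}(f)$ or $\Psi_{k,q}(f)$, is identically zero. We then show that few blocks are bad, and that on a good block $f$ vanishes on only a small fraction of the points. Adding the two contributions gives the claimed bound of $\epsilon/n$.

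First, the bad blocks. The map $\Psi_{j,k,q,\lambda}$ equals $\Psi_{k,q}$ applied to $f_\lambda(\vx):=f(x_1,\ldots,\lambda x_j,\ldots,x_n)$. This polynomial is still $2s$-sparse of degree $\delta$, and it is nonzero since $\lambda\neq0$. By the ``moreover'' part of \autoref{thm:PIT:sparse:KS}, applied with sparsity $2s$, for each fixed $j$ (and for the unmodified block family) at most $(2s-1)(n-1)$ values of $k\in[q-1]$ give a zero image. The theorem's hypothesis $q\ge \max(\delta,\binom{2s}{2}n)+1$ follows from $q>2\delta s^2n^2/\epsilon$ as long as $\epsilon\le 1$, which we may assume. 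Hence the fraction of bad blocks is at most $(2s-1)(n-1)/(q-1)\le 2sn/q< \epsilon/(s\delta n)\le \epsilon/(2n)$. Here we use $q>2\delta s^2 n^2/\epsilon$, and the last inequality needs $s\delta\ge2$; the degenerate case $s\delta=1$ can be handled with slack by absorbing constants, or by noting that $f$ is then a nonzero constant.

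Next, the good blocks. The univariate image is a nonzero polynomial in $y$. By \eqref{eq:deg-psi} its degree is at most $(q-1)\delta<\delta q$. It therefore has fewer than $\delta q$ roots, so $f$ vanishes on at most $\delta q/|\cA_{n,\delta,q,\epsilon}|\le \delta q\cdot\epsilon/(2n\delta q)=\epsilon/(2n)$ of the block's points. Points of $\cA_{n,\delta,q,\epsilon}$ taken from an extension field cause no problem, since a univariate polynomial has at most $\deg$ roots over any field. Combining the two estimates, the total vanishing fraction is at most $\epsilon/(2n)+\epsilon/(2n)=\epsilon/n$.

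The main obstacle is the counting in the bad-block step. We must check that the sparsity-$2s$ version of the Klivans–Spielman exceptional-set bound is compatible with the stated range of $q$, and that the constants give $\epsilon/(2n)$ and not something slightly larger. There is enough slack: $q$ is larger than required by a factor of roughly $\delta n/\epsilon$. A coarser bound, such as using the support-preservation count $\binom{2s}{2}(n-1)$, would also suffice.
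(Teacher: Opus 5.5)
Your argument is correct and matches the paper's proof. Blocks whose univariate image vanishes are controlled by the ``moreover'' part of \autoref{thm:PIT:sparse:KS}, and on the remaining blocks the image has degree below $\delta q$, which is small compared with $|\cA_{n,\delta,q,\epsilon}|$. You are in fact a bit more careful than the paper, since you apply the Klivans--Spielman bound separately to each $\lambda$-scaled family via $f_\lambda$ and count all $n+1$ blocks per good $k$.

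The one slip is the aside on $s\delta=1$: there $f$ is a $2$-sparse polynomial of degree $1$, not a constant. No special case is needed, though, because the exact count $(2s-1)(n-1)/(q-1)\le\epsilon/(2n)$ already follows from $q>2\delta s^2n^2/\epsilon$, $2s-1\le s^2$ and $\epsilon\le 1$.
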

\begin{proof}
By the ``moreover'' part of \Cref{thm:PIT:sparse:KS}, the number of $k\in [q-1]$ for which $\Psi_{k,q}(f)=0$ is at most $(2s-1)(n-1)$. For the rest of the $k$'s, since $\deg(\Psi_{k,q}(f))< \delta q$, it has at most $\delta q$ zeroes. Thus, the total number of points in $ \cS_{n,\delta,q,\epsilon,\lambda}$ on which $f$ vanishes is at most
\[(2s-1)(n-1)\cdot (n+1)|\cA_{n,\delta,q,\epsilon}| + q\cdot q\delta< \left(\frac{2sn}{q}+\frac{\epsilon}{n^2}\right)| \cS_{n,\delta,q,\epsilon,\lambda}|<\frac{\epsilon}{n}| \cS_{n,\delta,q,\epsilon,\lambda}|\ .\qedhere\]
\end{proof}
We next prove that we can efficiently reconstruct from evaluations on a $(1-\epsilon)$ fraction of the points in $\cS_{n,\delta,q,\epsilon,\lambda}$.
\begin{theorem}\label{thm:robust-ks}
    Let $n,s,\delta\in \N$ and $0<\epsilon<\frac{1}{50 n}$. Let $\delta s^2n<q < 2\delta s^2n$ be a prime number. Assume that $\F$ is large enough to allow \Cref{construction:robust-ks}. Let $\lambda\in\F^{*}$ be an element whose multiplicative order is larger than $\delta q$ (if needed, pick $\lambda \in \mathbb{E}$ for an appropriate extension field $\mathbb{E}$).

    There is an algorithm that with the following guarantee: 
    Let $f\in \F[\vx]$ be an $n$-variate, $s$-sparse polynomial of degree $\deg(f)=\delta$. Denote with $B$ the bit-complexity of $f$'s coefficients.\footnote{If $\F$ is a finite field then $B=s\log|\F|$.}
    Let $\cS'\subseteq \cS_{n,\delta,q,\epsilon,\lambda}$ be any subset of size $|\cS'|\geq (1- {\epsilon} )|\cS_{n,\delta,q,\epsilon,\lambda}|$. Given the set of evaluations
    \[\bc{\left(\vbeta,f(\vbeta) \right) \mid \vbeta\in \cS'} \ , \]
    the algorithm runs  in time $\poly(n,s,\delta,1/\epsilon,B)$ and returns $f$.  
\end{theorem}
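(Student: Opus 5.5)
The plan is to reduce the reconstruction to univariate interpolation along the "lines" indexed by pairs $(k,\cdot)$, and to throw away only a few bad indices. For each $k\in[q-1]$ consider the univariate polynomials $F_k(y)=\Psi_{k,q}(f)$ and, for $j\in[n]$, $F_{j,k}(y)=\Psi_{j,k,q,\lambda}(f)$. All of them have degree at most $\delta(q-1)<\delta q$. The first step is an averaging argument. Call an index $k$ \emph{good} if, for each of the $n+1$ univariate polynomials associated with $k$, at least a $1/2$-fraction of the points of $\cA_{n,\delta,q,\epsilon}$ are present in $\cS'$. Since at most $\epsilon|\cS_{n,\delta,q,\epsilon,\lambda}|$ points are missing, Markov's inequality shows that at most a $2(n+1)\epsilon$ fraction of the $k$'s are bad; this is less than $1/10$ by the assumption $\epsilon<1/(50n)$. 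For a good $k$ we have at least $|\cA|/2\ge n\delta q/\epsilon>\delta q$ points, so Lagrange interpolation recovers $F_k$ and every $F_{j,k}$ exactly. No error correction is needed, because we face only erasures, not errors.

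Next, for each good $k$ we decode monomials. The monomials of $F_{j,k}$ are exactly those of $F_k$, each multiplied by the factor $\lambda^{e_j}$. When $\Psi_{k,q}$ is injective on the support of $f$ (i.e.\ when the number of monomials is preserved), each coefficient $c$ of $y^m$ in $F_k$ comes from a single monomial $c\,\vx^{\ve}$. The ratio of the corresponding coefficient in $F_{j,k}$ to $c$ equals $\lambda^{e_j}$. Since $e_j\le\delta$ and $\lambda$ has order greater than $\delta$, we can recover $e_j$ by trying $0,\ldots,\delta$. This yields a candidate polynomial $f_k$. If the decoding fails for some $k$ (a ratio is not a power of $\lambda$ in range, or the degree exceeds $\delta$), we discard that $k$.

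The final step is a majority vote over the candidates $f_k$ from the good $k$'s. By \Cref{thm:PIT:sparse:KS}, with $q\ge\binom{s}{2}n+1$ (guaranteed by $q>\delta s^2 n$), at most $\binom{s}{2}(n-1)$ values of $k$ fail to preserve the monomial count. That is at most a $1/(2\delta)\le 1/2$ fraction, and in fact it is $\le s^2n/q<1/\delta$. For every other good $k$, we get $f_k=f$. Combining this with the $<1/10$ fraction of bad $k$'s, the correct $f$ is output by strictly more than half of all $k$'s, and we return the majority answer. The running time is $(q-1)(n+1)$ interpolations of degree $\delta q$ plus comparisons, which is $\poly(n,s,\delta,1/\epsilon,B)$.

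The main obstacle is the case where $\Psi_{k,q}$ collapses monomials, and I need to ensure such $k$ cannot produce a wrong answer that wins the vote. The majority argument handles this provided the bad fraction is below $1/2$, so I will just check the counting carefully: the bad $k$'s are at most $\binom{s}{2}(n-1)+2(n+1)\epsilon(q-1)<(q-1)/2$. A secondary concern is the bookkeeping with the extension field $\mathbb{E}$ and the order condition on $\lambda$; I will note that these cost only $\poly\log$ factors.
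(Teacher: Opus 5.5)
Your route differs from the paper's in the final selection step, and that step has a gap when $\delta=1$. The paper only needs \emph{one} index $k$ that is both erasure-good and collision-free. It shows that at most $(q-1)/4$ values of $k$ are erasure-bad while at least $(q-1)/2$ are collision-free, and it decodes from such a $k$. Your majority vote instead needs \emph{more than half of all} $k\in[q-1]$ to be both.

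The Klivans--Spielman bound only gives a collision fraction of at most $\binom{s}{2}(n-1)/(q-1)<\frac{(s-1)(n-1)}{2\delta sn}$. For $\delta=1$ this is merely below $1/2$, and it can be arbitrarily close to $1/2$. Your step combining ``$\le 1/(2\delta)$'' with ``$<1/10$'' therefore yields a majority only when $\delta\ge 2$, since $1/4+1/10<1/2$. Concretely, take $\delta=1$, $s=n=100$ (say $f=x_1+\cdots+x_{100}$), a prime $q$ just above $10^6$, and $\epsilon$ just below $1/(50n)$. Then $\binom{s}{2}(n-1)\approx 0.49(q-1)$ and $2(n+1)\epsilon(q-1)\approx 0.04(q-1)$. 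So the inequality $\binom{s}{2}(n-1)+2(n+1)\epsilon(q-1)<(q-1)/2$ that you planned to check is false. The erasures in $\mathcal{S}'$ are adversarial, and you give no argument that the collapsing $k$'s cannot agree on a common wrong candidate. The vote is therefore unjustified in this case.

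There are two easy ways to close the gap.

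\emph{Sharper collision count for $\delta=1$.} The only monomials are $1,x_1,\ldots,x_n$. The constant never collides, because $k^{i-1}\bmod q\neq 0$ for $k\in[q-1]$. Two variables $x_a,x_b$ with $a<b$ collide iff $k^{b-a}\equiv 1\pmod q$, which holds for $\gcd(b-a,q-1)\le b-a$ values of $k$. Summing $b-a$ over all pairs in a support of $t\le s$ variables in $[n]$ gives at most $(n-1)t^2/4<(q-1)/4$, using $q-1\ge s^2n$. With this bound your majority argument goes through.

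\emph{Follow the paper and decode from a single $k$.} Among the erasure-good $k$, pick one that maximizes the number of monomials of the interpolated $F_k$. A collision always strictly lowers this number. Hence the maximum equals the number of monomials of $f$, and it is attained exactly at collision-free $k$. Such a $k$ exists, because more than $(q-1)/2$ values are collision-free and fewer than $(q-1)/10$ are erasure-bad. The paper's proof tacitly needs this selection rule too, since it never says how a good decoding point is recognized.

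The paper's route needs only a weak count plus this identification step. Your vote avoids identification but pays with a stronger count. The rest of your proposal is correct: the Markov bound, interpolation from at least $|\mathcal{A}|/2>\delta q$ points per line, and reading off $e_j$ from $\lambda^{e_j}$.
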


\begin{proof}
    For each $j\in [n]$, $k\in [q-1]$ denote
     \[\cE_{j,k,\lambda}:= \bc{\left(\vbeta,f(\vbeta) \right) \mid \vbeta= \Psi_{j,k,q,\lambda}[\alpha] , \text{ for some } \alpha\in\cA_{n,\delta,q,\epsilon}} \cap \cS' \ .\]
     Similarly, for each   $k\in [q-1]$ denote
     \[\cE_{k}:= \bc{\left(\vbeta,f(\vbeta) \right) \mid \vbeta= \Psi_{k,q}[\alpha] , \text{ for some } \alpha\in\cA_{n,\delta,q,\epsilon}} \cap \cS' \ .\]
     By our assumption, and definition of $\cS_{n,\delta,q,\epsilon,\lambda}$, we have
     \begin{equation*}
          (1- {\epsilon} )|\cS_{n,\delta,q,\epsilon,\lambda}| \leq |\cS'| = \sum_{j\in [n]}\sum_{k\in [q-1]}  |\cE_{j,k,\lambda}| + \sum_{k\in [q-1]}  |\cE_{k}|\ .
     \end{equation*}
     
    Call $k$ good if for every $j\in [n]$, it holds that $|\cE_{j,k,\lambda}|> \delta q$ and in addition, $|\cE_{k}|> \delta q$.
    \begin{claim}\label{cla:k-good}
        At most $(q-1)/4$ of the points  $k\in [q-1]$  are not good.
    \end{claim}
    \begin{proof}
        If $k$ is not good then
        \begin{align*}
              |\cE_{k}|+ \sum_{j=1}^{n} |\cE_{j,k,\eta}| &\leq \delta q +n|\cA_{n,\delta,q,\epsilon}|\\ &<  (n+1)|\cA_{n,\delta,q,\epsilon}|\left(1-\frac{1}{n+1} + \frac{\epsilon}{2n^2}\right) < \frac{|\cS_{n,\delta,q,\epsilon,\lambda}|}{q-1}\left(1- \frac{1}{4n}\right)\ ,
        \end{align*}
        where we have used $|\cA_{n,\delta,q,\epsilon}| =  \ceil{2n\delta q/\epsilon}$. Thus, if there are $b$ bad $k$'s, then we would have 
         \begin{equation*}
          (1- {\epsilon} )|\cS_{n,\delta,q,\epsilon,\lambda}| \leq |\cS'| = \sum_{k\in [q-1]} \left(|\cE_{k}|+ \sum_{j=1}^{n} |\cE_{j,k,\eta}|\right) \ < |\cS_{n,\delta,q,\epsilon,\lambda}|\left( 1-\frac{b}{4nq} \right).
     \end{equation*}
     Hence $b<4\epsilon n q<(q-1)/4$.
    \end{proof}
     
    By our choice of $q$, \Cref{thm:PIT:sparse:KS} guarantees that at least half of all $k\in [q-1]$ satisfy that  the univariate polynomial $\Psi_{k,q}(f)$ has the same number of monomials as $f$. Denote with $K$ the set of all such $k\in [q-1]$. Then $|K|\geq (q-1)/2$.
    \Cref{cla:k-good} implies that there is some $k\in K$ which is good. We call such a $k$ a good decoding point.

    % Denote with $K$ the set of all such $k\in [q-1]$. Then $|K|\geq (q-1)/2$. 

    % For every $k\in K$ denote
    % \[\cE_{j,k,1}:= \bc{\left(\vbeta,f(\vbeta) \right) \mid \vbeta= \Psi_{j,k,q,1}[\alpha] , \text{ for some } \alpha\in\cA} \cap \cS' \ ,\]
    % and
    % \[\cE_{j,k,\lambda}:= \bc{\left(\vbeta,f(\vbeta) \right) \mid \vbeta= \Psi_{j,k,q,\lambda}[\alpha] , \text{ for some } \alpha\in\cA} \cap \cS'\ .\]
    
    % \begin{claim}\label{cla:k-interpolate}
    %     There is an algorithm that given the evaluation of $f$ on the points in $\cE_{k}\cup\left(\bigcup_{j\in [n]} \cE_{j,k,\lambda}\right)$, for a good decoding point $k$, runs in time $\poly(n,s,\delta,B)$ and reconstructs $f$.
    % \end{claim}
%    \begin{proof}
        Let $k$ be a good decoding point.
        Denote $f=\sum c_{\ve}\vx^{\ve}$. Then, for every $j\in [n]$ we have
        \[ \Psi_{k,q}(f)=\sum c_{\ve}y^{\sum_{i} e_i\left(k^{(i-1)}\bmod q\right)} \]
        and
        \[ \Psi_{j,k,q,\lambda}(f)=\sum c_{\ve}\lambda^{e_j} y^{\sum_{i} e_i\left(k^{(i-1)}\bmod q\right)} \ .\]
        By choice of $k$, the number of monomials in each of these polynomials equals that of $f$. Furthermore, both polynomials have the same set of monomials and they differ only by their coefficients. Moreover, by \eqref{eq:deg-psi}, they have degree smaller than $\deg(f)\cdot q=\delta q$.
        
        Thus, the evaluation points in $\cE_{k}$ give us enough information to interpolate $ \Psi_{k,q}(f)$. Similarly, the evaluation points in $\cE_{j,k,\lambda}$ give us enough information to interpolate $ \Psi_{j,k,q,\lambda}(f)$.

        Since they have the same set of monomials, we can infer for each monomial $y^{\sum_{i} e_i\left(k^{(i-1)}\bmod q\right)}$ the term $\lambda^{e_j}$. As the order of $\lambda$ is larger than $\delta$, and in particular larger than $e_j$ we can easily infer $e_j$ from $\lambda^{e_j}$.  Since we can do that for every coordinate $j$, we can infer the coefficient vector $\ve$ corresponding to the monomial $y^{\sum_{i} e_i\left(k^{(i-1)}\bmod q\right)}$. Given the coefficients $c_\ve$ that we found, we have the complete representation of $f$.
%    \end{proof}
\end{proof}

\subsection{Wronskian and Differential Operators over Finite Fields}\label{sec:wronski}
We recall the definition of the \emph{Wronskian}.\footnote{The Wronskian can be defined for any collection of functions that are sufficiently differentiable on an interval; here we consider only polynomials.}
Let $g_1,\ldots,g_n\in \F[x]$. Their Wronskian is the determinant of the $n\times n$ matrix whose $(i,j)$-entry, for $i\in[[n-1]]$ and $j\in[n]$, is $\nabla^{(i)} g_j$, where
\[
\nabla^{(i)} := \frac{d^i}{dx^i}.
\]
\begin{equation}\label{eq:wronskian-g}
W(g_1,\ldots,g_n)\coloneq 
\det\!\begin{pmatrix}
    g_1(x) & g_2(x) & \cdots & g_n(x)\\
    \nabla g_1(x) & \nabla g_2(x) & \cdots & \nabla g_n(x)\\
    \vdots & \vdots & \ddots & \vdots\\
    \nabla^{(n-1)} g_1(x) & \nabla^{(n-1)} g_2(x) & \cdots & \nabla^{(n-1)} g_n(x)
\end{pmatrix}\in \F[x] .
\end{equation}
We treat $\nabla$ as a linear operator on the space of polynomials, so it is well defined over any field.

The Wronskian is a classical object in the theory of differential equations (see, e.g., \cite[\S3.4]{ODE}), but it also plays an important role for polynomials over finite fields. The key fact is that if the characteristic of the field is sufficiently large, then the Wronskian vanishes if and only if the polynomials are linearly dependent. For completeness, we provide the proof of the following theorem in \autoref{app:wronskian}. 

\begin{restatable}{theorem}{thmwronskilinind}\label{thm:wronskian-lin-ind}
Let $g_1, \dots, g_n$ be polynomials over a field $\F$ of characteristic $p$.  
If the maximum degree of any $g_i$ is $d$ and $p=0$ or $p > d$, then  
$g_1, \dots, g_n$ are linearly dependent over $\F$ if and only if their Wronskian
$W(g_1, \dots, g_n)$ is identically zero.    
\end{restatable}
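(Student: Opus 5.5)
The easy direction comes first. If $\sum_j c_j g_j = 0$ with not all $c_j$ zero, then applying the linear operator $\nabla^{(i)}$ gives $\sum_j c_j \nabla^{(i)} g_j = 0$ for every $i\in[[n-1]]$. So the columns of the Wronskian matrix satisfy a nontrivial linear relation over $\F\subseteq \F(x)$, and $W(g_1,\ldots,g_n)=0$. This direction holds in every characteristic.

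For the converse, assume $g_1,\ldots,g_n$ are linearly independent over $\F$. Since $\nabla$ is linear, replacing $(g_1,\ldots,g_n)$ by an invertible $\F$-linear combination $(h_1,\ldots,h_n)=(g_1,\ldots,g_n)A$ multiplies the Wronskian by $\det A\neq 0$. So we may change basis freely. By Gaussian elimination on the coefficient vectors (with monomials ordered by degree), we can reach a basis $h_1,\ldots,h_n$ with pairwise distinct degrees $d_1>d_2>\cdots>d_n\ge 0$, all at most $d$. After scaling, each $h_j$ is monic, say $h_j = x^{d_j} + (\text{lower terms})$.

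It then suffices to show that $W(h_1,\ldots,h_n)$ has a nonzero coefficient at $x^{D}$, where $D=\sum_j d_j - \binom{n}{2}$. Every entry $\nabla^{(i)} h_j$ has degree at most $d_j - i$, with coefficient of $x^{d_j-i}$ equal to the falling factorial $(d_j)_i = d_j(d_j-1)\cdots(d_j-i+1)$. Each term of the Leibniz expansion therefore has degree at most $D$, and the coefficient of $x^D$ is $\det\big((d_j)_i\big)_{i,j}$. Column operations turn the falling factorials into ordinary powers, so this determinant equals the Vandermonde determinant $\prod_{j<k}(d_k-d_j)$ up to sign. The main obstacle is making sure this integer is nonzero in $\F$. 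Each factor satisfies $0<|d_j-d_k|\le d<p$, so none is divisible by $p$ (and when $p=0$ there is nothing to check). Hence the leading coefficient is nonzero and $W\neq 0$.

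One subtlety should be noted: entries of degree $d_j-i<0$ vanish. This is consistent with the falling-factorial formula, since $(d_j)_i=0$ whenever $i>d_j$. So the identification of the top coefficient with the falling-factorial determinant is exact, and the Vandermonde argument handles these cases as well.
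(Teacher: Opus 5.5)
Your proof is correct and follows the same route as the paper's: pass to a basis with distinct degrees (which rescales $W$ by $\det A\neq 0$), then show that the coefficient of $x^{\sum_j d_j-\binom{n}{2}}$ in the Wronskian is (the image of) an integer not divisible by $p$. Your evaluation of that coefficient as $\det\big((d_j)_i\big)=\pm\prod_{j<k}(d_k-d_j)$ is the correct one. The paper's displayed formula has a spurious extra factor $\prod_j (d_j)_{n-1}$, which vanishes whenever some $d_j<n-1$ (e.g.\ $g_1=x$, $g_2=1$, where in fact $W=-1$), so your Vandermonde computation also repairs that slip.
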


The next theorem, an analogue of the fundamental theorem of linear homogeneous ODEs for polynomials over arbitrary fields, states that the solution space of an operator of order $r$ has dimension at most $r$.

\begin{theorem}\label{thm:ode}
Let $\F$ be a field of characteristic $p$, and let\footnote{Since $x$ and $\nabla$ do not commute, we denote the noncommutative polynomial ring of differential operators with coefficients in $\F[x]$ by $\F[x]\langle\nabla\rangle$.}
\[
A \;=\; \sum_{i=0}^{r} Q_i(x)\,\nabla^{i} \;\in\; \F[x]\langle\nabla\rangle
\]
be a nonzero operator of order $r$, with $Q_r\not\equiv 0$.
Fix $D\in\N$ and assume $p=0$ or $p> D$.
Then the space
\[
\mathcal S_{\le D}\;=\;\{\,g\in\F[x]:\ \deg g\le D,\ A(g)=0\,\}
\]
has $\dim_\F \mathcal S_{\le D}\le r$. 
\end{theorem}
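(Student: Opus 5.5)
Suppose, for contradiction, that $\mathcal S_{\le D}$ contains $r+1$ linearly independent polynomials $g_1,\ldots,g_{r+1}$. We will show that their Wronskian vanishes and then invoke \autoref{thm:wronskian-lin-ind}. Since each $g_j$ has degree at most $D$ and $p=0$ or $p>D$, that theorem applies and gives $W(g_1,\ldots,g_{r+1})\not\equiv 0$. So it suffices to prove that any $r+1$ polynomials annihilated by $A$ have identically zero Wronskian.

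Consider the $(r+1)\times(r+1)$ Wronskian matrix $M$ whose rows are indexed by $i\in[[r]]$ and whose $(i,j)$-entry is $\nabla^{(i)}g_j$. Let $\vq=(Q_0(x),\ldots,Q_r(x))\in\F[x]^{r+1}$. For each $j$, the condition $A(g_j)=0$ says exactly that
\[\sum_{i=0}^r Q_i(x)\,\nabla^{(i)}g_j(x)=0.\]
In matrix form this reads $\vq^{T}M=0$ over the field of rational functions $\F(x)$. Since $Q_r\not\equiv 0$, the vector $\vq$ is nonzero. Hence $M$ has a nontrivial left kernel over $\F(x)$, so $\det M=0$ in $\F(x)$, and therefore $W(g_1,\ldots,g_{r+1})\equiv 0$ as a polynomial.

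This contradicts the previous paragraph, so $\dim_\F\mathcal S_{\le D}\le r$. One point needs care. Here $\nabla^{i}$ denotes the $i$-fold composition of $\nabla$, and this coincides with $\nabla^{(i)}=d^i/dx^i$ as used in the definition of the Wronskian. The characteristic enters only through \autoref{thm:wronskian-lin-ind}, where the hypothesis $p>D$ bounds the degrees of the $g_j$. The main step is the first one, the Wronskian criterion for linear independence, and it is already available as \autoref{thm:wronskian-lin-ind}; the rest is the determinant argument above.
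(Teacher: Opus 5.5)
Your proof is correct, and it takes a different route from the paper. The paper argues directly, in the style of the uniqueness theorem for linear ODEs. It picks $a$ with $Q_r(a)\neq 0$ and writes $g(x+a)=\sum_k c_kx^k$. Differentiating $A(g)=0$ at $a$ then gives, for $t=0,\dots,D-r$, a triangular system in which $c_{r+t}$ has coefficient $Q_r(a)\,(r+t)!\neq 0$. Thus $g\mapsto(c_0,\dots,c_{r-1})$ is injective on $\mathcal S_{\le D}$.

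You instead reduce to \autoref{thm:wronskian-lin-ind}, by noting that $(Q_0,\dots,Q_r)$ is a nonzero left-kernel vector of the Wronskian matrix of any $r+1$ solutions. Your argument is shorter and needs only that some $Q_i$ is nonzero. It also never needs a point $a\in\F$ with $Q_r(a)\neq 0$. The paper justifies the existence of such an $a$ by $|\F|>D$, which is not enough when $\F$ is finite and $\deg Q_r\ge|\F|$. There one has to pass to an extension field; this is harmless, since $\mathcal S_{\le D}$ is the kernel of an $\F$-linear map, but it is an extra step. In exchange, the paper's argument is self-contained and constructive: it parametrizes the solutions by $r$ initial Taylor coefficients.

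Since your proof leans entirely on \autoref{thm:wronskian-lin-ind}, note a flaw in the appendix proof of that theorem. It writes the leading coefficient of the Wronskian with a spurious factor $\prod_j(d_j)_{n-1}$; with that factor the formula would give $W(x,1)=0$, whereas in fact $W(x,1)=-1$. The correct coefficient is, up to sign, $\prod_j c_j\prod_{i<j}(d_j-d_i)$, by a Vandermonde reduction. This is nonzero because the $d_j$ are distinct integers in $[0,D]$ and $p=0$ or $p>D$. So the theorem, and therefore your proof, still stand.
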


\begin{proof}
Since $|\F|>D$, there exists $a\in\F$ with $Q_r(a)\neq 0$. Define the translated operator
\[
\tilde{A} \;=\; \sum_{i=0}^{r}  \tilde{Q}_i(x)\,\nabla^i,
\qquad  \tilde{Q}_i(x):=Q_i(x+a),
\]
and the translated polynomial $\tilde{g}(x):=g(x+a)$. 
Since ordinary derivatives commute with translation, we have
$A(g)=0$ if and only if $\tilde{A}(\tilde{g})=0$.
Write $\tilde{g}(x)=\sum_{i=0}^{D} c_i x^i$, where $c_0,\ldots,c_D\in\F$ are indeterminates. Then
\[
\nabla^k \tilde{g}=\frac{d^k \tilde{g}}{dx^k}(0) \;=\; k!\, c_k \qquad (0\le k\le D).
\]
Consider the identities $\nabla^t\left(\tilde{A}(\tilde{g})\right)(0)=0$ for $t=0,1,\ldots,D-r$. 
By Leibniz’s rule
\[
0 \;=\; \nabla^t\left(\tilde{A}(\tilde{g})\right)(0)
 \;=\; \sum_{i=0}^{r} \sum_{j=0}^{t} \binom{t}{j}\,\nabla^{t-j} {\tilde{Q}}_i(0)\,\nabla^{i+j} {\tilde{g}}_i(0),
\]
hence, using $\nabla^{i+j} {\tilde{g}}_i(0)=(i+j)!\,c_{i+j}$,
\begin{equation}\label{eq:triangular-recurrence-corrected}
\sum_{i=0}^{r} \sum_{j=0}^{t} \binom{t}{j}\, \nabla^{t-j} {\tilde{Q}}_i(0)\,(i+j)!\, c_{i+j} \;=\; 0,
\qquad t=0,1,\ldots,D-r.
\end{equation}
In \eqref{eq:triangular-recurrence-corrected}, the term with $(i,j)=(r,t)$ equals
\[
\binom{t}{t}\,{\tilde{Q}}_r (0)\,(r+t)!\, c_{r+t} \;=\; Q_r(a)\,(r+t)!\, c_{r+t},
\]
while all other terms involve only $c_k$ with $k\le r+t-1$. 
Because $p=0$ or $p>D$, each factorial $(r+t)!$ is invertible in $\F$ for $0\le t\le D-r$, 
and $Q_r(a)\ne 0$ by construction. Thus, the coefficient of $c_{r+t}$ is nonzero. 
Therefore, for each $t$, we can solve uniquely for $c_{r+t}$ in terms of $c_0,\ldots,c_{r+t-1}$. 
Inductively, all $c_r,\ldots,c_D$ are determined by the initial block $(c_0,\ldots,c_{r-1})\in\F^r$, 
so $\dim_\F \mathcal S_{\le D}\le r$.
\end{proof}

\subsection{Polynomial Factorization }\label{sec:factorization}

We shall need the following result on factorization of polynomials. The first result is the famous LLL algorithm for factorization over $\Q$. 

\begin{theorem}[Factorization over $\Q$, \cite{LLL}]\label{thm:lll}
    Let $f\in \Q[x]$ have degree $d$ and bit complexity $B$. Then, there is a deterministic algorithm that outputs all irreducible factors of $f$ whose running time is $\poly(d,B)$.
\end{theorem}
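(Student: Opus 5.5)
The LLL theorem (\autoref{thm:lll}) is a classical result, so I would not reprove it from scratch. Instead I would sketch the standard route of Lenstra, Lenstra and Lov\'asz and check that each stage is deterministic and runs in time $\poly(d,B)$.

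\emph{Step 1: normalize.} Clear denominators so that $f\in\mathbb{Z}[x]$; this keeps the bit complexity polynomial in $B$. Make $f$ primitive. Compute the squarefree decomposition using $\gcd(f,f')$, which is polynomial time by the subresultant Euclidean algorithm over $\Q$. It then suffices to factor a primitive squarefree $f$.

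\emph{Step 2: factor modulo a prime.} Find a small prime $p$ with $p\nmid \mathrm{lc}(f)$ such that $f \bmod p$ is squarefree. Such a prime exists among the first $\poly(d,B)$ primes, since $p$ must not divide the discriminant, whose bit length is $\poly(d,B)$. Because $p$ is small, Berlekamp's algorithm factors $f\bmod p$ deterministically in time $\poly(d,p)$.

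\emph{Step 3: lift.} Use Hensel lifting to lift one irreducible factor $h$ modulo $p$ to a factor modulo $p^k$. Choose $k$ so that $p^k$ exceeds $2^{O(d^2)}$ times a power of the Mignotte bound on the coefficients of factors of $f$. With this choice $\log p^k=\poly(d,B)$.

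\emph{Step 4: lattice step, the main obstacle.} For each $m=\deg h,\dots,d-1$, consider the lattice of integer polynomials of degree at most $m$ that are divisible by $h$ modulo $p^k$, and run LLL basis reduction on it. The key lemma is as follows. If $g$ is the irreducible factor of $f$ over $\mathbb{Z}$ with $h\mid g \bmod p$, and $b$ is any lattice vector with $\|b\|^d\|f\|^{m}<p^k$, then $\gcd(f,b)$ is nontrivial. This is proved via a resultant argument: $\mathrm{Res}(g,b)$ is divisible by $p^k$ but is bounded by Hadamard's inequality, so it must vanish. The LLL guarantee of a $2^{m/2}$ approximation to the shortest vector then shows that at the minimal such $m$, the first reduced vector is, up to sign, exactly $g$. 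Proving this carefully, and bounding the bit sizes throughout the reduction, is the technical heart of the argument.

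\emph{Step 5: recurse.} Divide $f$ by $g$ and repeat on the quotient. There are at most $d$ factors, so the total running time is $\poly(d,B)$.
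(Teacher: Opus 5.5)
Your proposal agrees with the paper, which gives no proof of its own and simply invokes the Lenstra--Lenstra--Lov\'asz algorithm as a black box; your outline (squarefree reduction, a small prime with Berlekamp, Hensel lifting, lattice reduction, recursion) is the standard argument behind that citation. One small repair is needed in Step~4: Hadamard bounds $|\mathrm{Res}(g,b)|$ by $\|g\|^{m}\|b\|^{\deg g}$, and $\|g\|$ can exceed $\|f\|$, so the hypothesis should carry an extra Mignotte factor $2^{dm}$ (your choice of $k$ already absorbs this); with that factor you get $g\mid b$, which is what you actually need to conclude $b_1=\pm g$, whereas $\mathrm{Res}(f,b)$ under your stated hypothesis only yields $\gcd(f,b)\neq 1$.
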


For the following theorem, see, e.g. \cite[Section 9]{GathenShoup1992-deterministic-factor}.

\begin{theorem}[Factorization over Finite Fields]\label{thm:finite-factor}
    Let $\F_q$ be a field of size $q=p^k$ and characteristic $p>0$.
    Let $f(x)\in \F_q[x]$ be a polynomial of degree $d$. Then, there is a deterministic algorithm that outputs all irreducible factors of $f$ in time $\poly(p,\log q,d)$.
\end{theorem}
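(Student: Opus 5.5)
The plan is to follow the classical deterministic version of Berlekamp's algorithm, as presented in \cite[Section 9]{GathenShoup1992-deterministic-factor}. The only place where randomness is usually used is the final splitting step. We replace it there by an exhaustive search over $\F_p$ combined with the trace map $\F_q\to\F_p$; this is where the polynomial dependence on $p$ (rather than on $\log p$) enters. Write $q=p^k$.

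\emph{Step 1: reduce to squarefree $f$.} Compute $g=\gcd(f,\nabla f)$ by the Euclidean algorithm.
\begin{itemize}
\item If $g=1$, then $f$ is squarefree and we proceed to Step 2.
\item If $\nabla f=0$, then $f(x)=\tilde f(x^p)$. Since the Frobenius map is a bijection on $\F_q$, taking $p^{k-1}$-th powers of the coefficients of $\tilde f$ gives $h$ with $h^p=f$. We recurse on $h$.
\item Otherwise $f=g\cdot(f/g)$ is a nontrivial factorization, and we recurse on both factors. Irreducible factors are recovered with multiplicity by repeated division.
\end{itemize}
The recursion depth and the total work are $\poly(d,\log q)$.

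\emph{Step 2: build the Berlekamp algebra.} For squarefree $f=\prod_{i=1}^t f_i$, compute $x^q \bmod f$ by repeated squaring, and from it the matrix of the Frobenius map $u\mapsto u^q$ on $\F_q[x]/(f)$. This costs $\poly(d,\log q)$ field operations. Solving a linear system gives an $\F_q$-basis $b_1,\dots,b_t$ of
\[
\mathcal B=\{u:\ u^q\equiv u \bmod f\}.
\]
By the Chinese remainder theorem, $\mathcal B\cong\F_q^t$, where $u$ corresponds to the vector of constants $(u\bmod f_i)_i$. Moreover, $f$ is irreducible if and only if $t=1$.

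\emph{Step 3: deterministic splitting (the main obstacle).} The standard splitting step computes $\gcd(f,u-c)$ for all $c\in\F_q$, which costs $q$ gcds and is too expensive. We would avoid this as follows.
\begin{itemize}
\item Fix an $\F_p$-basis $\omega_1,\dots,\omega_k$ of $\F_q$.
\item For every pair $(j,\ell)$, compute
\[
h_{j,\ell}=\mathrm{Tr}(\omega_\ell b_j)=\sum_{m=0}^{k-1}(\omega_\ell b_j)^{p^m}\bmod f .
\]
This is again in $\mathcal B$, and its $i$-th CRT component is $\mathrm{Tr}_{\F_q/\F_p}(\omega_\ell\, b_j \bmod f_i)$, which lies in $\F_p$.
\item For each $c\in\F_p$, compute $\gcd(f,h_{j,\ell}-c)$. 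These gcds partition the factors $f_i$ according to the value of the corresponding component of $h_{j,\ell}$.
\end{itemize}
The key claim is that whenever $t\ge 2$, some pair $(j,\ell)$ gives a nontrivial split. To see this, since the $b_j$ span $\F_q^t$, some $b_j$ has distinct components $a\ne a'$ at two factors $f_i,f_{i'}$. The trace form is nondegenerate, so $\mathrm{Tr}(\omega(a-a'))\ne0$ for some $\omega\in\F_q$. By linearity of the trace in $\omega$, this already holds for some basis element $\omega_\ell$. Hence $h_{j,\ell}$ takes different values in $\F_p$ on $f_i$ and $f_{i'}$, and the gcds above separate them.

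\emph{Step 4: refine to irreducibles.} Apply the splitting step to each piece obtained so far, refining the partition of $\{f_i\}$, until $t=1$ on every piece. Equivalently, intersect the partitions induced by all $h_{j,\ell}$. This uses at most $d\cdot k\cdot p$ gcd computations per refinement round and at most $d$ rounds.

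Altogether the running time is $\poly(p,\log q,d)$. The only nonroutine ingredient is the trace-based splitting argument in Step 3, which rests on the nondegeneracy of $\mathrm{Tr}_{\F_q/\F_p}$.
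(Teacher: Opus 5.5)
Your proposal is correct, and it is essentially the argument the paper relies on. The paper gives no proof of its own and simply cites \cite{GathenShoup1992-deterministic-factor}. Your write-up supplies the standard deterministic Berlekamp-type algorithm behind that citation: squarefree reduction, the Berlekamp algebra $\{u: u^q\equiv u \bmod f\}$, and reduction to $\F_p$ via $\mathrm{Tr}_{\F_q/\F_p}$ followed by exhaustive search over $\F_p$, which is exactly where the $\poly(p)$ factor enters.

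One point worth stating explicitly: your separation argument actually shows that every pair $f_i\neq f_{i'}$ is separated by some $h_{j,\ell}$, not just that some nontrivial split exists. This stronger fact is what justifies intersecting all the induced partitions in Step 4.
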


%\anote{cite results on bivariate factorization}

%%%%%%%%%%%%%%%%%%%%%

\section{\texorpdfstring{Hitting set for $\Sigma^{[r]}\mathord{\wedge}^{[d]}\Sigma^{[s]}\mathord{\Pi}^{[\delta]}$ circuits with $r^2\le d$}{Hitting set for Sigma^{[r]} and^{[d]} Sigma Pi circuits with $r^2\le d$}}\label{sec:pit}

In this section we give a simple polynomial-size hitting set for $\Sigma^{[r]}\mathord{\wedge}^{[d]}\Sigma^{[s]}\mathord{\Pi}^{[\delta]}$ circuits when $d=\Omega(r^2)$.

    \begin{theorem}\label{thm:gen-4}
         Let $r,d,s,n,\delta\in \N$ such that  $(r-1)^2\leq d+1$.  Let $q$ be a prime such that  $q\geq \max(\delta,r^2s^2n)+1$ and let $t\geq q/\epsilon$. Let $\F$ be a field of characteristic $p$ such that $p=0$ or $p>rd\delta q $.
        Let $f(\vx)\in\F[\vx]$ be a nonzero polynomial that is computable by a  $\Sigma^{[r]}\mathord{\wedge}^{[d]}\Sigma^{[s]}\mathord{\Pi}^{[\delta]}$ circuit.  Then, except for at most $r^2s^2n$ values of $k\in [t]$, the specialization 
        $\Psi_{k,q}(f)$ defined as in \autoref{thm:PIT:sparse:KS} is nonzero.
    \end{theorem}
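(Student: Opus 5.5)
We reduce the statement to \autoref{thm:abc} after projecting with the Klivans--Spielman map. Write $f=\sum_{i=1}^r \alpha_i f_i^d$, where each $f_i$ is $s$-sparse of degree at most $\delta$. First normalize the representation. Group the $f_i$ into associate classes: if $f_j=\beta f_i$, absorb $\beta^d$ into $\alpha_i$. Then drop terms with zero coefficient and terms that are constants, the latter being merged into one constant term $c$. We obtain $f=\sum_{i\in S}\alpha_i f_i^d + c$ with the $f_i$ pairwise non-associate and nonconstant, with $|S|\le r$, and with the representation nonzero since $f\neq 0$. Over an algebraically closed extension, $\alpha_i=\gamma_i^d$, so $\alpha_i f_i^d=(\gamma_i f_i)^d$, and likewise $c=\gamma_0^d\cdot 1^d$. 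Hence $f$ is a sum of at most $r+1$ $d$-th powers of pairwise non-associate polynomials, one of which may be a constant. We must be careful with the count, since \autoref{thm:abc} with $r$ replaced by the number of summands requires $(\#-2)^2\le d+1$. If the constant appears, the number of nonconstant terms drops, and one should check the accounting. In the worst case the constant is an extra summand, but a constant arises only from an original term that was itself constant, so the total number of summands is still at most $r$. This is fine.

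Next, project. For each pair $i\ne j$ of indices in the normalized representation, apply \autoref{cor:ks-lin-ind} to $f_i,f_j$ (together with the constant $1$ where relevant; constant versus nonconstant is handled by \autoref{thm:PIT:sparse:KS} preserving the support, so nonconstancy is preserved). This shows that $\Psi_{k,q}(f_i)$ and $\Psi_{k,q}(f_j)$ are linearly independent except for at most $s(s-1)(n-1)$ bad values of $k$. We also need $\Psi_{k,q}(f_i)$ to remain nonconstant, which follows from support preservation. A union bound over the at most $\binom{r}{2}$ pairs, plus the $r$ singletons, bounds the number of bad $k$ by roughly $r^2s^2n$. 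Here $k$ ranges over $[t]$ with exponents $k^{i-1}\bmod q$, so $k$ matters only modulo $q$. We therefore count bad residues, and the cited results with $q\ge r^2s^2n+1$ apply to $k\in[q-1]$. Translating this into a count within $[t]$ gives a bound that scales with $t/q$, which is presumably where $t\ge q/\epsilon$ is meant to be used. I expect this bookkeeping about $k$ versus its residue, and $k\equiv 0$, to be the delicate point. If needed, we read ``values'' as residues.

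Finally, for a good $k$, the univariate polynomials $g_i=\gamma_i\Psi_{k,q}(f_i)$ are pairwise non-associate and of degree at most $\delta q$, and at least one of them is nonconstant. Their number is at most $r$, with $(r-1)^2\le d+1$. The characteristic satisfies $p>rd\delta q$, which is what \autoref{thm:abc} needs for degree-$\delta q$ inputs. That theorem then gives $\Psi_{k,q}(f)=\sum g_i^d\ne 0$, since $\Psi_{k,q}$ is a ring homomorphism. The main obstacle is ensuring that non-associateness is preserved for all pairs at once, including the degenerate cases with constant polynomials and a single summand. A single summand is trivially nonzero when its projection is nonzero, which follows from the ``moreover'' part of \autoref{thm:PIT:sparse:KS}.
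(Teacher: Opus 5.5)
Your proposal is correct and follows essentially the same route as the paper. Both use a union bound over pairs via \autoref{cor:ks-lin-ind}, so that $\Psi_{k,q}(f_1),\ldots,\Psi_{k,q}(f_r)$ remain pairwise non-associate of degree below $\delta q$, and then apply \autoref{thm:abc}. Your extra bookkeeping fills in points the paper's short proof leaves implicit: normalizing to a non-associate representation, the constant and single-summand cases, and noticing that $\Psi_{k,q}$ depends only on $k \bmod q$, so the argument really bounds bad residues modulo $q$ (residue $0$ adds at most one, which the slack in the count absorbs).
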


    \begin{proof}
        Let $f=\sum_{i=1}^{r} f_i^d$, such that each $f_i$ has degree at most $\delta$ and at most $s$ monomials, and any two are non-associate. Let $\Psi_{k,q}(f_i)$ be as in \autoref{cor:ks-lin-ind}. Then, by a simple application of the union bound, except for at most $r^2s^2n$ values of $k$, the polynomials $\Psi_{k,q}(f_1),\ldots,\Psi_{k,q}(f_r)$ are non-associate of degree smaller than $\delta  q$. Fix a $k$ that guarantees such pairwise linear independence. \autoref{thm:abc} guarantees that $\Psi_{k,q}(f)=\sum_{i=1}^{r}\Psi_{k,q}(f_i)^d\neq 0$.
    \end{proof}

\begin{theorem}[Hitting set for 
$\Sigma^{[r]}\mathord{\wedge}^{[d]}\Sigma^{[s]}\Pi^{[\delta]}$ circuits, 
when $d=\Omega(r^2)$]\label{thm:hs-depth-4}
Let $n,r,s,d,\delta$ be as in \autoref{thm:gen-4}. 
Let $\F$ be a field of characteristic $p$ such that $p=0$ or 
$p\ge rd\delta   (s^2n+\delta)$. 
Then, for every $\epsilon>0$ there is an explicit hitting set 
$\mathcal H$ of size $|\mathcal H| = O(r^4 s^4 n^2  d \delta^3 / \epsilon)$ 
such that every nonzero polynomial computed by a $\Sigma^{[r]}\mathord{\wedge}^{[d]}\Sigma^{[s]}\Pi^{[\delta]}$ circuit 
is nonzero on at least a $(1-\epsilon)$ fraction of the points in~$\mathcal H$.
\end{theorem}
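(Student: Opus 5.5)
We construct $\mathcal H$ from the Klivans--Spielman map together with univariate interpolation, and use \autoref{thm:gen-4} to bound how many of its points can be zeros.

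Fix a prime $q$ with $\max(\delta,r^2s^2n)+1\le q\le 2\max(\delta,r^2s^2n)+2$, which exists by Bertrand's postulate. We do not want $k$ to range over all of $[t]$, since $k^{i-1}\bmod q$ only makes sense for $k<q$. So we let $k$ range over $[q-1]$, and when the exceptional set must be a small fraction we enlarge the range instead. Each univariate specialization $\Psi_{k,q}(f)=\sum_i\Psi_{k,q}(f_i)^d$ has degree at most $d\delta(q-1)<d\delta q$. Let $\cA\subseteq\F$ be a set of size $N:=\lceil 2d\delta q/\epsilon\rceil$, taken from an extension field if $\F$ is too small. Define
\[
\mathcal H=\bc{\Psi_{k,q}[\alpha]\mid k\in[q-1],\ \alpha\in\cA}.
\]
The characteristic condition $p\ge rd\delta(s^2n+\delta)$ is what is needed so that $p>rd\delta q$ in the regime where \autoref{thm:abc} and \autoref{thm:gen-4} are applied. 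This identification of $q$ with roughly $s^2n+\delta$ is consistent with the characteristic hypothesis, though the exact constant needs care.

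The zero count goes as follows. By \autoref{thm:gen-4}, at most $r^2s^2n$ values of $k$ give $\Psi_{k,q}(f)\equiv 0$. For every other $k$, the nonzero univariate polynomial $\Psi_{k,q}(f)$ has fewer than $d\delta q$ roots in $\cA$, so $f$ vanishes on at most a $d\delta q/N\le\epsilon/2$ fraction of that block. The bad $k$'s account for at most a $r^2s^2n/(q-1)$ fraction of the blocks. This fraction is small only if $q\gg r^2s^2n/\epsilon$. I would therefore choose $q\ge 2r^2s^2n/\epsilon+\delta+1$, keeping $k\in[q-1]$. Then the total zero fraction is at most $\epsilon/2+\epsilon/2=\epsilon$.

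The size is $|\mathcal H|=(q-1)N=O(d\delta q^2/\epsilon)$. With $q=O(r^2s^2n+\delta)$ this gives $O(r^4s^4n^2d\delta^3/\epsilon)$ after absorbing factors crudely, using $(r^2s^2n+\delta)^2\le O(r^4s^4n^2\delta^2)$. If $q$ must also carry the $1/\epsilon$ factor, the bound worsens by a factor of $1/\epsilon$. In that case the exceptional $k$'s should instead be handled as in \autoref{thm:gen-4}, by letting $k$ range over a larger set of size $t\ge q/\epsilon$. The map $k\mapsto k\bmod q$ is well defined, and the bad count from the union bound via \autoref{cor:ks-lin-ind} still holds for each residue class, so at most $r^2s^2n$ residues are bad. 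I expect this balancing between the bad-$k$ fraction, the choice of $q$ versus $t$, and the stated size bound to be the main obstacle. The rest is the standard root count.
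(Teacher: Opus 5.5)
Your construction is essentially the paper's: the Klivans--Spielman map $\Psi_{k,q}$ for each $k$, then about $2d\delta q/\epsilon$ evaluation points per $k$, then a root count for the good $k$'s. But the proposal does not prove the statement, and it fails at exactly the step you flag. That step is keeping the fraction of bad $k$'s below $\epsilon/2$ while also keeping $|\mathcal H|=O(r^4s^4n^2d\delta^3/\epsilon)$.

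\begin{itemize}
\item \emph{The available bound.} \autoref{cor:ks-lin-ind}, the union bound and \autoref{thm:abc} give at most $\binom r2 s(s-1)(n-1)<r^2s^2n/2$ bad $k\in[q-1]$. With $q=\Theta(r^2s^2n+\delta)$, this only shows that fewer than half of the $k$'s are bad.
\item \emph{First remedy.} Taking $q\ge 2r^2s^2n/\epsilon+\delta+1$ does bring the zero fraction below $\epsilon$. But then $|\mathcal H|=(q-1)\lceil 2d\delta q/\epsilon\rceil=\Theta\bigl(d\delta(r^2s^2n/\epsilon+\delta)^2/\epsilon\bigr)$. That is a loss of $\epsilon^{-2}$, not $\epsilon^{-1}$ as you wrote.
\item \emph{Characteristic.} \autoref{thm:abc}, applied to univariates of degree up to $\delta(q-1)$, then needs $p=\Omega(r^3s^2nd\delta/\epsilon)$. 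This is far beyond the hypothesis $p\ge rd\delta(s^2n+\delta)$. Even with $q\approx r^2s^2n$, the requirement $p>rd\delta q$ exceeds the hypothesis by a factor of about $r^2$, so this is not a matter of constants.
\item \emph{Second remedy.} Letting $k$ range over $[t]$ with $t\ge q/\epsilon$ accomplishes nothing. $\Psi_{k,q}$ depends only on $k \bmod q$, so each bad residue repeats about $t/q$ times, and the fraction of bad $k\in[t]$ equals the fraction of bad residues. Moreover, $k\equiv 0$ gives the degenerate map $x_1\mapsto y$, $x_i\mapsto 1$ for $i\ge 2$, which \autoref{cor:ks-lin-ind} does not cover.
\end{itemize}

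The paper's proof does not supply a missing idea here. It takes a prime $q\in[r^2s^2n+\delta+1,\,2r^2s^2n+2\delta]$ and sets $t=2r^2s^2n+2\delta+1$. It evaluates $\Psi_{k,q}(f)$ on $\deg(\Psi_{k,q}(f))/\epsilon$ points for each $k\le t$, and then simply asserts that at least $(1-\epsilon)t$ values of $k$ are good. This does not follow from \autoref{thm:gen-4}, for three reasons:
\begin{itemize}
\item its hypothesis $t\ge q/\epsilon$ fails, since $t<2q$;
\item its bound of $r^2s^2n$ exceptional $k$'s is about $t/2$;
\item by your residue observation, the device $t\ge q/\epsilon$ in \autoref{thm:gen-4} could not lower the bad fraction anyway.
\end{itemize}

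What this route honestly yields is one of two weaker statements, both requiring $p>rd\delta q$, which is stronger than the stated hypothesis when $q\ge r^2s^2n$:
\begin{itemize}
\item a set of the stated size on which every nonzero $f$ is nonzero on more than a $(1-\epsilon)/2$ fraction of points (taking $k\in[q-1]$); this suffices for a plain hitting set as in \autoref{thm:main-pit};
\item the full $(1-\epsilon)$ guarantee, with $q=\Theta(r^2s^2n/\epsilon+\delta)$ and size $\Theta(d\delta q^2/\epsilon)$.
\end{itemize}
You should state and prove one of these rather than the claimed $O(r^4s^4n^2d\delta^3/\epsilon)$ bound. The distinction matters downstream, because \autoref{cla:u} invokes this theorem with $\epsilon=1/(2r^2)$.
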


\begin{proof}
By Bertrand's postulate there exists a prime 
$q\in [r^2 s^2 n + \delta + 1,\, 2r^2 s^2 n + 2\delta]$. 
Set $t = 2r^2 s^2 n + 2\delta + 1$. 
For each $k\le t$, evaluate $\Psi_{k,q}(f)$ 
on $\deg(\Psi_{k,q}(f))/\epsilon$ distinct points. 
By \autoref{thm:PIT:sparse:KS}, 
$\deg(\Psi_{k,q}(f)) \le \delta q  d 
= O(\delta d(r^2 s^2 n + \delta))$, 
so $|\mathcal H| = O( r^4 s^4 n^2 d \delta^3 / \epsilon)$.

For each good $k$, $\Psi_{k,q}(f)$ is nonzero on at least a $(1-\epsilon)$ 
fraction of its evaluation points, and at least $(1-\epsilon)t$ values of $k$ are good. 
Hence $f$ is nonzero on at least a $(1-2\epsilon)$ fraction of the points in $\mathcal H$.
\end{proof}

\begin{remark}
    Our construction of the hitting sets is not optimal and it can be  improved with more care.
\end{remark}

    % The next theorem is a combination of \autoref{thm:gen-4} and \autoref{cor:hs-depth-4} for finite fields of large enough characteristic. 

    % \begin{theorem}\label{thm:hs-depth-4-p}
    %           Let $s,r,d,n$ be integers.
    %           Let $\F$ be a field of characteristic larger than $sdn$.
    %           Let $f(\vx)\in\F[\vx]$ be a nonzero polynomial that is computable by a homogeneous $\Sigma^{[r]}\mathord{\wedge}^{[d]}\Sigma\mathord{\Pi}^{[\delta]}$ circuits with $(r-1)^2\leq d+1$, of total size $s$. Let $p$ be a prime larger than $4s^2n+1$ and let $t\geq p$.  Then, except for at most $4s(s-1)n$ values of $k\in [t]$, 
    %     $\hat{f}^{(k)}(y) \eqdef f(y,y^{(k^1 \bmod p)},\ldots, y^{(k^{n-1} \bmod p)})\neq 0$.

    %           In particular, there is an explicit hitting set of size $s^4n^4d$ for the class of size-$s$ $\Sigma^{[r]}\mathord{\wedge}^{[d]}\Sigma\Pi$ circuits, defined over $\F$.    
    % \end{theorem}

    % \begin{proof}
    %     The proof is almost identical to the proof of \autoref{thm:gen-4} and \autoref{cor:hs-depth-4} where the only difference is that we use \autoref{thm:abc-p} in place of \autoref{thm:abc}.
    % \end{proof}

\section{Reconstruction of Univariate Sums of Powers}\label{sec:univariate-rec}

%\subsection{Theory}\label{sect1}

In this section, we consider black-box reconstruction of univariate polynomials of the form  
\begin{equation}\label{eq:univariate-input}
f(x)=\sum_{i=1}^r \alpha_i (f_i(x))^d \in \F[x], \quad \deg(f_i)\coloneq \delta_i \le\delta,\; \|f_i\|_0 \le s ,
\end{equation}
where $\F$ has characteristic $0$ or characteristic $p>2rd\delta$.
As we allow coefficients $\alpha_i\in \F$, we can assume w.l.o.g.\ that each $f_i$ is a monic polynomial.

We note that we only have an upper bound on the number of summands, $r$, and in reality, $f$ can be represented as a sum of $r'<r$ terms. This distinction does not significantly affect our algorithm, as we will simply try each $r'\in[1,r]$. Thus, for simplicity, we assume that $r$ is the exact number of summands, and we continue to denote it by $r$. From now on, until we present the algorithm itself, we assume that $f$ cannot be represented as in \eqref{eq:univariate-input} with fewer terms. In particular, this implies that the polynomials $f_i^d$ are linearly independent.

As $f(x)$ is a univariate polynomial with $\deg(f)\leq d\cdot \delta$, we can interpolate its coefficients to obtain a representation 
\[
f(x)=\sum_{i=0}^{d\delta}\beta_ix^i.
\]
Hence, we may assume that this representation of $f(x)$ is given as input to the algorithm.

As described in \Cref{sec:proof-overview}, our approach is to learn a degree $r$ differential operator $L$ such that $L(f)=0$. We next describe some properties of the sought-after operator.

\subsection{The Differential Operator $L$}\label{sect:wronskian}

We shall consider the Wronskian of $f,f_1^d,\ldots,f_r^d$ (recall \Cref{sec:wronski}):
\begin{equation}\label{detW}
W(x)=W(f,f_1^d,\ldots,f_r^d)\coloneq \det\!\begin{pmatrix}
    f(x) & f_1^d(x) & \cdots & f_r^d(x)\\
    \nabla f(x) & \nabla f_1^d(x) & \cdots & \nabla f_r^d(x)\\
    \vdots &\vdots & \ddots & \vdots\\
    \nabla^{r} f(x) & \nabla^{r}f_1^d(x) & \cdots & \nabla^{r}f_r^d(x)
    \end{pmatrix}.
\end{equation}

Clearly, $W(x)\equiv0$ because the first column is a linear combination of the others. Expanding the determinant along the first column yields
\begin{equation}\label{eq:W-with-M}
\sum_{i=0}^r (-1)^i \,   W_i(x) \, \nabla^{i}f(x) \equiv 0,    
\end{equation}
where $W_i(x)$ denotes the determinant of the $r\times r$ minor obtained by deleting the $i$-th row and the first column of the matrix in \eqref{detW}. 
% For brevity, we define
% \[
% M_i(x):=(-1)^i\det W_i(x),
% \]
% so that the relation becomes
% \begin{equation}\label{eq:W-with-M}
% \sum_{i=0}^r M_i(x)\, \nabla^{i}f(x)\equiv0.    
% \end{equation}

\begin{claim}\label{Wronski}
Each polynomial $W_k(x)$, obtained from the Laplace expansion of the Wronskian, can be written as
\[
W_k(x) = \left(\Pi_{i\in[r]} f_i^{\,d-r}\right) \tilde{P}_k(x),
\]
where $\tilde{P}_k(x)$ is a polynomial of degree at most $r^2\delta$.
\end{claim}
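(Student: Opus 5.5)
The plan is to look at each entry of the minor $W_k$ one column at a time and pull out a common power of $f_i$ from each column. $W_k$ is the $r\times r$ determinant whose columns are indexed by $i\in[r]$. Its rows are the derivatives $\nabla^{j} f_i^d$ for $j\in[[r]]\setminus\{k\}$.

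First I show by induction on $j$ that for every $0\le j\le r$ (and $j\le d$),
\[
\nabla^{j}\left(f_i^d\right)=f_i^{\,d-j}\,R_{i,j}(x),
\]
where $R_{i,j}$ is a polynomial of degree at most $j(\delta_i-1)+ j\cdot 1\le j\delta$. Indeed, $R_{i,0}=1$. Differentiating $f_i^{d-j}R_{i,j}$ gives
\[
f_i^{\,d-j-1}\bigl((d-j)\nabla f_i\, R_{i,j}+f_i\nabla R_{i,j}\bigr),
\]
so $R_{i,j+1}=(d-j)\nabla f_i\,R_{i,j}+f_i\nabla R_{i,j}$. Each term has degree at most $\deg R_{i,j}+\delta-1$, so $\deg R_{i,j}\le j(\delta-1)$. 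This is even better than $j\delta$. No characteristic assumption is needed for this identity; it is purely formal.

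Next I rewrite each entry of column $i$ in the minor. The entry in row $j$ is
\[
f_i^{d-j}R_{i,j}=f_i^{\,d-r}\cdot f_i^{\,r-j}R_{i,j},
\]
which is valid since $j\le r$ (this uses $d\ge r$, which holds in our regime). So $f_i^{d-r}$ divides every entry of column $i$. By multilinearity of the determinant in its columns,
\[
W_k=\Bigl(\prod_{i\in[r]} f_i^{\,d-r}\Bigr)\tilde P_k,
\]
where $\tilde P_k$ is the determinant of the $r\times r$ matrix with entries $f_i^{\,r-j}R_{i,j}$ for $j\in[[r]]\setminus\{k\}$.

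Finally I bound the degree. Each entry $f_i^{r-j}R_{i,j}$ has degree at most $(r-j)\delta+j(\delta-1)\le r\delta$. Every term in the Leibniz expansion of $\tilde P_k$ is a product of $r$ entries, so $\deg\tilde P_k\le r\cdot r\delta=r^2\delta$.

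No step is a real obstacle. The only points needing care are the degree bookkeeping in the recursion for $R_{i,j}$ and the requirement $d\ge r$, which guarantees that the exponent $d-j$ is nonnegative for all rows.
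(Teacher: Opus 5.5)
Your proposal is correct and is essentially the paper's argument: you write each entry $\nabla^{j} f_i^d$ of column $i$ as $f_i^{d-r}$ times a polynomial of degree at most $r\delta$, factor $f_i^{d-r}$ out of each column by multilinearity, and bound the degree of the remaining $r\times r$ determinant by $r\cdot r\delta$. Your explicit recursion $R_{i,j+1}=(d-j)\nabla f_i\,R_{i,j}+f_i\nabla R_{i,j}$ fills in the step the paper asserts without proof (its polynomials $g_{i,j}$), and the paper also relies implicitly on your condition $d\ge r$.
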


\begin{proof}
Recall that $\deg(f_i)=\delta_i\le \delta$. Fix $k\in [[r]]$. 
For each $i\in [[r]]\setminus\{k\}$ and $j\in [r]$, we can write
\begin{equation}
    \label{eq:def-gij}
\nabla^{(i)} f_j^d \;=\; f_j^{\,d-r}\, g_{i,j}, 
\qquad \text{with}\qquad \deg(g_{i,j}) = r\delta_j - i \;\le\; r\delta \, .    
\end{equation}
Hence
\[
W_k(x)
= (-1)^{k+1} \left(\Pi_{j\in[r]} f_j^{\,d-r}\right) \det\!\big( g_{i,j}(x) \big)_{i\in [[r]]\setminus\{k\},\, j\in [r]}.
\]
Define
\begin{equation*}
    %\label{eq:P'_k}
\tilde{P}_k(x) \;:=\; (-1)^{k+1}\det\!\big( g_{i,j}(x) \big)_{i\in [[r]]\setminus\{k\},\, j\in [r]}.
\end{equation*}
Then
\[
W_k(x)=\left(\Pi_{j\in[r]} f_j^{\,d-r}\right) \tilde{P}_k(x).
\]
Since each $g_{i,j}$ has degree at most $r\delta$, $\tilde{P}_k(x)$ has degree at most $r^2\delta$, as claimed. \qedhere
\end{proof}

Let $P=\gcd(\tilde{P}_0,\ldots,\tilde{P}_r)$ and denote $P_i=\tilde{P}_i/P$. 

%\anote{We may want to mention the bit complexity as well. Perhaps we can do that when analyzing the algorithm and not here.}
\begin{corollary}\label{cor:OpL}
Assume that $p$, the characteristic of $\F$, satisfies $p=0$ or $p>d\delta$. 
Then the differential operator 
\[
\tilde{L} \;\coloneq\; \sum_{i=0}^r P_i(x)\,\nabla^{(i)} \in \F[x]\langle\nabla\rangle
\]
is not identically zero, and its solution space is spanned by $\{f_1^d,\ldots,f_r^d\}$. 
In particular, $\tilde{L}$ annihilates both $f$ and each $f_i^d$.
\end{corollary}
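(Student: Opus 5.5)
Three things need to be established for $\tilde L=\sum_{i=0}^r P_i\nabla^{(i)}$: it is nonzero; it annihilates each $f_i^d$ and hence $f$; and its solution space is exactly $\Span\{f_1^d,\ldots,f_r^d\}$.

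\emph{Nonvanishing.} The coefficient of $\nabla^{(r)}$ in the Laplace expansion \eqref{eq:W-with-M} is, up to sign, $W_r=W(f_1^d,\ldots,f_r^d)$. By the standing assumption that $r$ is minimal, the $f_i^d$ are linearly independent. They have degree at most $d\delta<p$ (or $p=0$), so \autoref{thm:wronskian-lin-ind} gives $W_r\not\equiv 0$. By \autoref{Wronski}, $W_r=\left(\Pi_j f_j^{d-r}\right)\tilde P_r$, so $\tilde P_r\neq 0$. Hence $P$ is well defined and nonzero, and $P_r=\tilde P_r/P\neq 0$. So $\tilde L$ is nonzero and has order exactly $r$.

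\emph{Annihilation.} Fix $g\in\F[x]$ and form the $(r+1)\times(r+1)$ Wronskian matrix with first column $(g,\nabla g,\ldots,\nabla^{(r)}g)^T$ and remaining columns those of $f_1^d,\ldots,f_r^d$. The first-column cofactors are the same $W_i$ as in \eqref{eq:W-with-M}, so expanding gives
\[
W(g,f_1^d,\ldots,f_r^d)=\sum_{i}(-1)^iW_i\nabla^{(i)}g=\pm\Big(\Pi_j f_j^{d-r}\Big)P\cdot\tilde L(g),
\]
after absorbing the signs $(-1)^{k+1}$ from the definition of $\tilde P_k$ consistently; I will check this sign bookkeeping, which is routine. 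If $g=f_m^d$, the matrix has two equal columns, so $\tilde L(f_m^d)=0$ because $\F[x]$ is a domain. By linearity, $\tilde L$ annihilates $\Span\{f_i^d\}$ and in particular $f$. The same identity shows that for $g$ of degree at most $d\delta<p$, $\tilde L(g)=0$ iff $W(g,f_1^d,\ldots,f_r^d)=0$, iff $g\in\Span\{f_i^d\}$ by \autoref{thm:wronskian-lin-ind}.

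\emph{Solution space.} I interpret ``solution space'' as solutions of degree at most $D$ with $D<p$, as in \autoref{thm:ode}. That theorem bounds its dimension by $r$, and the span of the $r$ independent polynomials $f_i^d$ lies inside it, so the two coincide. This works for every $D<p$; when $p=0$ it holds for every $D$, giving the full kernel in $\F[x]$.

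The main obstacle is this solution-space step in positive characteristic. There the kernel in all of $\F[x]$ can be larger, for example because of $x^p$-type solutions, so the statement has to be read with the degree restriction $D\le d\delta<p$. Under the stated hypothesis $p>d\delta$, that restriction covers every candidate $g$ the algorithm ever considers.
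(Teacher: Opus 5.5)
Your proof is correct and follows the paper's route: expand $W(g,f_1^d,\ldots,f_r^d)$ along the first column, pull out $\bigl(\Pi_j f_j^{d-r}\bigr)P$ via \autoref{Wronski}, conclude annihilation because $\F[x]$ is a domain, and bound the kernel with \autoref{thm:ode}. Your deviations are improvements. You get nonvanishing directly from $W_r=W(f_1^d,\ldots,f_r^d)\neq 0$; the paper instead substitutes a $g$ outside the span, which only shows $\tilde L\neq 0$. Your version yields $P_r\neq 0$ outright, which is the hypothesis \autoref{thm:ode} needs. Your degree caveat in characteristic $p$ is also justified: $\tilde L(x^p f_1^d)=x^p\tilde L(f_1^d)=0$, so the statement holds only for solutions of degree less than $p$, which is exactly the range \autoref{thm:ode} covers.
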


\begin{proof}
The fact that $\tilde{L}$ is not identically zero follows from \autoref{thm:wronskian-lin-ind}, 
since $\tilde{L}$ would remain unchanged if $f$ was replaced by any polynomial which is not a linear combination of  $\{f_1^d,\ldots,f_r^d\}$.

From \autoref{Wronski} and \eqref{eq:W-with-M}, we have
\begin{align*}
0&= \sum_{i=0}^r (-1)^i \,  W_i(x) \, \nabla^{i}f(x) \\   
&= \left(\Pi_{j\in[r]} f_j^{\,d-r}\right)\!\cdot  
  \sum_{i=0}^r \tilde{P}_i(x)\, \nabla^{i}f(x)\\
&= \left(\Pi_{j\in[r]} f_j^{\,d-r}\right)\!\cdot P \cdot 
  \sum_{i=0}^r P_i(x)\, \nabla^{i}f(x)\\
&=
\left(\Pi_{j\in[r]} f_j^{\,d-r}\right) \cdot P \cdot \tilde{L}(f)\,.
\end{align*}
Since $\left(\Pi_{j\in[r]} f_j^{\,d-r}\right)\!\cdot P$ is not identically zero, it follows that $\tilde{L}(f)\equiv 0$.
The same argument applies to any linear combination of $f_1^d,\ldots,f_r^d$, hence
\[
\mathrm{span}\{f_1^d,\ldots,f_r^d\}\subseteq\ker {\tilde{L}}.
\]

By \autoref{thm:ode}, the solution space of an operator of order $r$ has dimension at most $r$.  
As the $f_i^d$ are linearly independent, they form a basis of $\ker{ \tilde{L}}$. 
\end{proof}

%For completeness, we recall the fundamental theorem of linear homogeneous ordinary differential equations. For a proof, see, e.g., \cite[Theorem~3.10]{ODE}.

% \begin{theorem}[Fundamental theorem of linear homogeneous ordinary differential equations]\label{thm:ode}
%     The solution space of a nontrivial homogeneous differential equation of order~$k$ is $k$-dimensional. 
% \end{theorem}

We now set up a linear system whose solution  yields the desired operator $L$ from \autoref{cor:OpL}.
Consider the following system of linear equations with unknown coefficients $\gamma_{i,j}\in\F$:
\begin{equation}\label{eq:L-lin-system}
\sum_{i=0}^{r}\!\left(\sum_{j=0}^{r\delta}\gamma_{i,j}x^j\right) \nabla^{i}f(x)\;\equiv\;0\,.
\end{equation}
The system \eqref{eq:L-lin-system} consists of $\deg(f)+1$ linear equations in the unknowns $\gamma_{i,j}$, with coefficients that are linear combinations of the coefficients of~$f$.

Clearly, one solution is obtained by setting
\[
\sum_{j=0}^{r\delta}\gamma_{i,j}x^j \;=\; P_i(x)\,,
\]
which corresponds exactly to the operator $L$ in \autoref{cor:OpL}.
We next show that any nontrivial solution of the form $\sum_{j=0}^{r}Q_j\nabla^j$ satisfying $\gcd(Q_0,\ldots,Q_r)=1$ is a scalar multiple of $\tilde{L}$. In particular, any nontrivial solution that minimizes $\deg(Q_r)$ equals a scalar multiple of $\tilde{L}$.
For this, we rely on the ABC theorem for function fields discussed in \Cref{sec:abc}.

\subsubsection{Uniqueness of $L$}\label{sec:uniqueness}

%We are now ready to prove the uniqueness (up to a scalar multiple) of $\tilde{L}$ of \autoref{cor:OpL}.

\begin{claim}\label{cla:L-unique}
    Assume that the characteristic of the field $\F$ satisfies $p=0$ or $p>2rd\delta$. Let $\tilde{L}\in\F[x]\langle \nabla\rangle$ be the operator from \autoref{cor:OpL}.
    Assume $d>(r+1)^4\delta$. Then any nontrivial operator $L'=\sum_{i=0}^{r}Q_i(x)\nabla^i$ satisfying  $\deg(Q_i)\leq r^2\delta$  for all $i$, $\gcd(Q_0,\ldots,Q_r)=1$, and $L'(f)=0$ must equal a scalar multiple of $\tilde{L}$.
\end{claim}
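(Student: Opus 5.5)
The plan is to show that $\ker L'$, restricted to polynomials of degree at most $d\delta$, contains every $f_i^d$. Once that is known, $L'$ and $\tilde L$ have the same $r$-dimensional solution space. Comparing the two operators will then force proportionality.

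\emph{Step 1: $L'$ annihilates each $f_i^d$.}
Write $f=\sum_i\alpha_i f_i^d$ with all $\alpha_i\neq 0$; this holds by minimality of $r$. Apply $L'$. Each term satisfies $\nabla^j f_i^d = f_i^{d-r} g_{j,i}$ with $\deg g_{j,i}\le r\delta$, as in \eqref{eq:def-gij}. Hence $L'(f_i^d)=f_i^{d-r}\,h_i$, where
\[
h_i=\sum_j Q_j g_{j,i},\qquad \deg h_i\le r^2\delta+r\delta .
\]
The identity $L'(f)=0$ therefore reads
\[
\sum_{i=1}^r \alpha_i h_i f_i^{d-r}=0 .
\]
The $f_i^{d-r}$ are pairwise linearly independent, since the $f_i$ are monic and distinct. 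Now apply \autoref{cor:ABC} with $e=d-r$, $\eta=r^2\delta+r\delta$, and $r-1$ in place of $r$ (there are $r$ summands). Its hypothesis requires $e\ge ((r-1)^2+(r-1))(\eta+1)$, roughly $r^2\cdot r(r+1)\delta\le (r+1)^4\delta$, and this is implied by $d>(r+1)^4\delta$; I would verify this arithmetic carefully, and it is the main parameter bookkeeping. The characteristic condition $p>r(\eta+e\delta)$ should follow from $p>2rd\delta$ up to similar bookkeeping. The corollary gives $h_i\equiv 0$ for all $i$, so $L'(f_i^d)=0$ for every $i$. The case $r=1$ is handled directly, since then $h_1 f_1^{d-1}=0$ immediately.

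\emph{Step 2: comparing $L'$ with $\tilde L$.}
By \autoref{cor:OpL}, the kernels of $\tilde L$ and $L'$ both contain the $r$ linearly independent polynomials $f_1^d,\dots,f_r^d$. Also $Q_r\neq 0$, for otherwise $L'$ would have order $<r$ and, by \autoref{thm:ode} with $D=d\delta<p$, a kernel of dimension $<r$. This would be a contradiction unless $L'=0$, which is excluded.

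Form $M=P_r L'-Q_r\tilde L$. It has order at most $r-1$ and annihilates all $f_i^d$. By \autoref{thm:ode}, if $M\neq 0$ its kernel among polynomials of degree $\le d\delta$ has dimension at most its order, which is $<r$. This is again a contradiction, so $M=0$, that is,
\[
P_r Q_i = Q_r P_i \quad\text{for all } i .
\]
Since $\gcd(Q_0,\dots,Q_r)=1$ and $\gcd(P_0,\dots,P_r)=1$, it follows that $Q_r\mid P_r$ and $P_r\mid Q_r$. Therefore $Q_r=cP_r$ for a nonzero scalar $c$, and then $Q_i=cP_i$ for all $i$, so $L'=c\tilde L$. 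One point to check here: \autoref{thm:ode} is stated for order exactly equal to the leading index. For $M$ I would apply it with its actual top nonzero coefficient, which is fine; the zero-operator case is the conclusion itself.
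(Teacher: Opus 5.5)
Your proposal is correct and follows the paper's proof. \autoref{cor:ABC} forces $L'(f_i^d)=0$ for every $i$; then \autoref{thm:ode}, applied to the operator $P_rL'-Q_r\tilde L$ of order at most $r-1$, together with the two gcd conditions, gives $L'=c\tilde L$. You spell out the divisibility step and $Q_r\neq 0$, both of which the paper leaves implicit. The paper's text writes this operator as ``$Q_r\cdot\tilde L=P_r\cdot L'$'', a typo for exactly your $M$.

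One small correction: invoking \autoref{cor:ABC} with $r-1$ in place of $r$ leaves the case $r=2$ outside its hypothesis $r\ge 2$. You can close this either by a direct divisibility argument, since $\deg h_i<d-r$, or by applying the corollary with parameter $r$ after adding a dummy term $g_0=0$; that second route is what the paper's parameter check corresponds to.
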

\begin{proof}
    Let $L'$ be as in the statement of the claim. We have that 
    \begin{align*}
        0\equiv L'(f) &= \sum_{j=1}^{r}\beta_j L'(f_j^d)\\
        &=\sum_{j=1}^{r}\sum_{i=0}^{r}\beta_j Q_i(x) \nabla^{i}(f_j^d)\\
        &=\sum_{j=1}^{r} \sum_{i=0}^{r}\tilde{Q}_{i,j} f_j^{d-i}\\
        &=\sum_{j=1}^{r} \tilde{Q}_{j} f_j^{d-r}\, ,
    \end{align*}
        where  $\tilde{Q}_{i,j}$ is defined similarly to \eqref{eq:def-gij}, and $\beta_jL'(f_j^d)=\tilde{Q}_{j} f_j^{d-r}$.
        Observe that, as in \eqref{eq:def-gij}, \[\deg(\tilde{Q}_{i,j}f_j^{r-i})\leq  r^2\delta + (i\delta-i)+(r-i)\delta \,.\] Hence, $\deg(\tilde{Q}_j)\leq  (r^2+r)\delta$. Since the $f_i$ are   non-associate, 
        \[
        d-r > (r+1)^4\delta-r= (r^2+2r+1)^2\delta -r \geq (r^2+r)((r^2+r)\delta +1)\, ,
        \]
        and $p=0$ or $p\geq 2rd\delta> r((r^2+r)\delta + d\delta)$, 
        \autoref{cor:ABC} implies that $\tilde{Q}_j\equiv 0$ for all $j$. Consequently, $L'(f_j^d)=0$ for all $j$. \autoref{thm:ode} implies that the kernel of $L'$ is spanned by $f_j^d$. Observe now that the operator $Q_r\cdot\tilde{L} = P_r\cdot L'$ has degree at most $r-1$ and its kernel contains all $f_j^d$. Hence, by \autoref{thm:ode} it must be identically zero. As $\gcd(P_0,\ldots,P_r)=\gcd(Q_0,\ldots,Q_r)=1$ the claim follows. 
\end{proof}

\begin{corollary}\label{cor:unique-L}
Let $p,d$ satisfy the requirements of \autoref{cla:L-unique}. 
Then there exists a unique operator 
\[
L = \sum_{i=0}^{r} Q_i(x)\,\nabla^i
\]
satisfying $\deg Q_i \le r^2\delta$ for all $i$, 
$\gcd(Q_0,\ldots,Q_r) = 1$, $Q_r$ monic, 
and $L(f) = 0$.
Furthermore, $L$ can be computed in time $\poly(n)$.
\end{corollary}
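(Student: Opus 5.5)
Existence comes from \autoref{cor:OpL} together with \autoref{Wronski}. The operator $\tilde L=\sum_i P_i\nabla^i$ annihilates $f$, and since $P_i=\tilde P_i/P$ has degree at most $r^2\delta$, it meets the degree bound. By construction $\gcd(P_0,\ldots,P_r)=1$.

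Next I check that $P_r\neq 0$. Up to sign, $\tilde P_r$ equals $W(f_1^d,\ldots,f_r^d)$ divided by $\prod_j f_j^{d-r}$. By \autoref{thm:wronskian-lin-ind} and the linear independence of the $f_j^d$ (with $p>d\delta$), this Wronskian is nonzero. Hence I may divide $\tilde L$ by the leading coefficient of $P_r$ and obtain a monic $Q_r$. Uniqueness is immediate from \autoref{cla:L-unique}: any other operator with the stated properties is $c\tilde L$, and the monic normalisation forces the scalar $c$ to equal $1/\mathrm{lc}(P_r)$.

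For the algorithm, I set up the linear system \eqref{eq:L-lin-system} in the unknowns $\gamma_{i,j}$, now with $j\le r^2\delta$ to match the degree bound. It has $(r+1)(r^2\delta+1)$ unknowns and at most $d\delta+1$ equations, and its coefficients are explicit linear combinations of the $\beta_i$ times factorials. Gaussian elimination gives a basis of the solution space $V$ in polynomial time.

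The main point is to extract the gcd-free solution from $V$. I use the claim's argument: every nonzero solution $L''=\sum Q_i''\nabla^i$ with $Q_r''\neq0$ can be written as $L''=G\cdot L_0$ with $G=\gcd(Q''_i)$. Applying \autoref{cla:L-unique} to $L_0$ shows that $L''=G\cdot L$. If some solution has $Q''_r\equiv 0$, then the lower-order operator annihilates $f$. Running the \autoref{cor:ABC} argument from the claim shows it also annihilates every $f_j^d$, and then \autoref{thm:ode} forces it to be zero.

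So $V=\{G\cdot L:\deg G\le r^2\delta-\deg L\}$. I would then take any nonzero element of $V$ and compute the gcd of its coefficient polynomials with the Euclidean algorithm. Alternatively, I would find the element of $V$ minimizing $\deg Q_r$ by imposing vanishing of top coefficients, which is again a linear system. Either way I recover $L$, and normalise $Q_r$ to be monic.

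Over $\Q$, Gaussian elimination and the gcd computation have polynomial bit complexity, and over $\F_q$ all operations are field operations. Hence the running time is polynomial in $d,\delta,r$ and the bit size, which is polynomial in the input size (the stated $\poly(n)$).
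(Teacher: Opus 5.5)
Your proposal is correct and follows the paper's route. Uniqueness comes from \autoref{cla:L-unique} together with the monic normalization, and $L$ is computed by solving the linear system for the coefficients of the $Q_i$ and extracting the solution of minimal $\deg Q_r$ (or, equivalently, dividing out the gcd). You also supply details the paper leaves implicit: that $P_r\neq 0$ via the Wronskian $W(f_1^d,\ldots,f_r^d)$, that the unknowns must run up to degree $r^2\delta$ rather than $r\delta$, and that every solution has the form $G\cdot L$, which is what justifies the minimization step. The only slip is harmless: the system has up to $(d+r^2)\delta+1$ equations, not $d\delta+1$.
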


\begin{proof}
    The claim regarding the uniqueness follows immediately from \autoref{cla:L-unique} and the fact that we chose $Q_r$ to be monic and of minimal degree.

    Since we can find, in polynomial time, a basis for the solution space of \eqref{eq:L-lin-system}, by simple diagonalization we can find a solution that minimizes $\deg(Q_r)$ such that $Q_r$ is monic. 
\end{proof}

We next show how to extract the polynomials $f_i$ given $L$. 

\subsubsection{Properties of the Solution Space of $L$}\label{sec:L-properties}

%\anote{I'm not sure that we need \autoref{cla:no-other-d-power}}

\begin{claim}\label{cla:no-other-d-power}
    Given the solution space of $L$, if $r^2+r \le d$ then there are at most $r$ polynomials in the solution space that are powers of $d$.
\end{claim}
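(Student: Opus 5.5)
By \autoref{cor:OpL}, the solution space of $L$ is $V=\Span\{f_1^d,\ldots,f_r^d\}$. The plan is to show that every $d$-th power lying in $V$ is, up to a scalar, one of the $f_i^d$. This gives at most $r$ such powers up to scalars, which is what the claim means; the count is always read modulo associates, since $cg^d$ is again a $d$-th power for many scalars $c$.

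Suppose $g^d\in V$ is nonzero, and write
\[
g^d=\sum_{i=1}^r \beta_i f_i^d ,
\]
keeping only the indices with $\beta_i\neq 0$. Assume towards a contradiction that $g^d$ is not associate to any $f_i^d$. Then the polynomials $g^d,\beta_1f_1^d,\ldots,\beta_rf_r^d$ are pairwise non-associate: the $f_i^d$ are linearly independent, hence pairwise non-associate, and $g^d$ is non-associate to each of them by assumption.

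Moving $g^d$ to the other side, and absorbing $\beta_i$ and $-1$ into $d$-th roots taken over $\overline{\F}$ (harmless, since linear dependence and degrees are unchanged by base extension), we get a vanishing sum of at most $r+1$ pairwise non-associate $d$-th powers of polynomials of degree at most $\delta$. The degree bound for $g$ holds because $\deg g^d\le d\delta$.

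Now apply \autoref{thm:abc} with $r$ playing its role: it needs $r+1$ non-associate summands, not all constant, and $(r-1)^2\le d+1$. The hypothesis $r^2+r\le d$ implies this. Alternatively, apply \autoref{cor:ABC} with constant $g_i$, i.e.\ $\eta=0$ and $e=d\ge r^2+r$, which is exactly the stated hypothesis. This is probably the intended route, and it explains where the bound $r^2+r\le d$ comes from. The characteristic requirement $p>r(\eta+e\delta)=rd\delta$ is covered by the standing assumption $p>2rd\delta$.

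Either result forbids the identity, so $g^d$ must be associate to some $f_i^d$. I expect the main obstacle to be the degenerate cases rather than the ABC step itself. If the support of $\beta$ has size one, then $g^d$ is already a multiple of some $f_i^d$. If some $f_i$ were constant, this is handled by the ``not all constant'' clause of \autoref{thm:abc}; with \autoref{cor:ABC} it is automatic, since that result needs only pairwise linear independence. Finally, the minimal-support reduction inside \autoref{cor:ABC} needs no extra care, because the $f_i^d$ are independent, so no proper subsum can vanish without forcing coefficients to be zero.
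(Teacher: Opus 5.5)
Your proposal is correct and follows the paper's proof. Both write the $d$-th power as a combination of the $f_i^d$ spanning $\ker L$ and get a contradiction from \autoref{cor:ABC} with constant coefficients ($\eta=0$, $e=d\ge r^2+r$), and your handling of the degenerate cases (support of size one, constant $f_i$, the bound $\deg g\le\delta$) fills in details the paper leaves implicit.
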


\begin{proof}
    Assume there is a polynomial $q(x)$ such that $L(q(x)^d)=0$ and $q$ is not a scalar multiple of any of the $f_j$. Since $\{f_i^d\}_{i\in[r]}$ span the solution space of $L$, there are scalars $\alpha_i$ such that $\sum_{i}\alpha_if_i^d=q^d$. However, by choice of $d$ and $q$,  and since the polynomials $\{f_i\}\cup\{q\}$ are   non-associate, we get a contradiction to  \autoref{cor:ABC}. 
\end{proof}

Thus, our task is finding all polynomials that are $d$-th powers in $\ker{L}$. We next show that each root of each $f_j$ is also a root of $Q_r$.

\begin{definition}[$\ord_z(g)$]
    Let $g(x)$ be a polynomial. We define \emph{$\ord_z(g)$} to be the multiplicity of $z$ as a root of $g(x)$. Similarly, for an irreducible polynomial $\phi(x)$ we define $\ord_\phi(g)$ to be the largest power $e$ such that $\phi^e|g$.
\end{definition}

\begin{claim}\label{cla:qr-zero}
    If $d\geq r$ then the leading coefficient $Q_r(x)$ of $L=\sum_{i=0}^rQ_i(x)\nabla^{(i)}$ vanishes at every root $z$ of every $f_i$.
\end{claim}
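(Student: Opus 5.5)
We argue by contradiction, following the classical ODE argument with roots playing the role of poles. Suppose $z\in\overline{\F}$ is a root of some $f_{j}$, say $f_1$, and $Q_r(z)\neq 0$. Work over $\overline{\F}$ and translate $x\mapsto x+z$, as in the proof of \autoref{thm:ode}. Derivatives commute with translation, so it suffices to treat $z=0$. Then $Q_r(0)\neq 0$ and $\ord_0(f_1^d)\ge d$.

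\textbf{Step 1.} We would redo the triangular recurrence from the proof of \autoref{thm:ode} at the point $0$. Every $g$ with $L(g)=0$ and $\deg g\le D=d\delta$ is uniquely determined by its first $r$ Taylor coefficients $c_0,\dots,c_{r-1}$ at $0$. This uses that $p=0$ or $p>d\delta$, so the factorials $(r+t)!$ are invertible. By \autoref{cor:OpL}, $\ker L$ restricted to degree $\le d\delta$ has dimension exactly $r$. Hence the linear map
\[
\ker L\to\overline{\F}^{\,r},\qquad g\mapsto\bigl(g(0),\nabla g(0),\ldots,\nabla^{r-1}g(0)\bigr)
\]
is injective, and therefore bijective.

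\textbf{Step 2.} The map is injective, and $f_1^d\in\ker L$ is nonzero because the $f_i^d$ are linearly independent. So $f_1^d$ cannot vanish to order $\ge r$ at $0$. But $\ord_0(f_1^d)\ge d\ge r$, so $\nabla^{i}f_1^d(0)=0$ for all $i<r$, using $d\ge r$. This contradicts injectivity. Therefore $Q_r(z)=0$.

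\textbf{Obstacle.} The only delicate point is Step 1 in positive characteristic. The recurrence needs $(r+t)!\neq 0$ for all $t\le D-r$ with $D=d\delta$, and this follows from the standing assumption on $p$. One should also check that the derivatives of order $<r$ detect vanishing of order $r$ in the Taylor expansion, which again needs $i!\neq0$ for $i<r$. If $z$ lies in an extension field, we pass to $\overline{\F}$. This is harmless because $L$ and the $f_i$ are defined over $\F$, and all the dimension statements hold over any field of the given characteristic.
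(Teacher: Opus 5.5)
Your proof is correct, but it takes a somewhat different route from the paper. The paper argues directly with valuations. If $m=\ord_z(f_j^d)\ge d\ge r$ and $Q_r(z)\neq 0$, then $\ord_z\bigl(Q_r\nabla^r(f_j^d)\bigr)=m-r$ exactly. Every term $Q_i\nabla^i(f_j^d)$ with $i<r$ has order at least $m-i\ge m-r+1$ at $z$, so $L(f_j^d)=0$ is impossible.

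You instead reuse the triangular recurrence from the proof of \autoref{thm:ode} as an initial-value uniqueness statement at the ordinary point $z$. A polynomial solution whose first $r$ Taylor coefficients at $z$ vanish must be zero, and $f_j^d$ would be a nonzero such solution. The two arguments are really the same computation at different granularity. The paper's valuation comparison is exactly step $t=m-r$ of your recurrence, where $c_m\neq 0$ appears with the nonzero coefficient $Q_r(z)\,m(m-1)\cdots(m-r+1)$, up to the factorial normalization. The paper's version is shorter and self-contained. Yours plugs into an already-proved lemma, at the cost of the slightly heavier characteristic bookkeeping you flag, which the standing assumption $p>d\delta$ covers.

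Two small remarks on your write-up:
\begin{itemize}
\item The appeal to \autoref{cor:OpL} and to bijectivity is unnecessary. Injectivity already follows from the recurrence, and that is all Step 2 uses. As written, your ``hence injective'' wrongly attributes injectivity to the dimension count.
\item If you phrase the map as $g\mapsto(c_0,\ldots,c_{r-1})$, the monomial coefficients of $g(x+z)$, rather than via derivatives, the extra requirement $i!\neq 0$ for $i<r$ disappears.
\end{itemize}
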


\begin{proof}
Fix a zero $z\in\overline{\F}$ of some $f_j$, of multiplicity $\ge 1$.  
Then,  $\ord_z(f_j^d)=m\geq d\geq r$. It follows that for $i\leq r-1$, $\ord_z(\nabla^i(f_j^d))=m-i>m-r\geq 0$.
Assume $Q_r(z) \neq 0$.
We now have
\[0\equiv L(f_j^d) = Q_r\nabla^{r}(f_j^d)+\sum_{i=0}^{r-1}Q_{i}\nabla^{i}(f_j^d)\,.\]
Thus,
\[m-r = \ord_z (Q_r\nabla^r(f_j^d)) = \ord_z(-\sum_{i=0}^{r-1}Q_{i}\nabla^i(f_j^d))\geq m-r+1\,, \]
in contradiction. 
\end{proof}

\begin{corollary}\label{cor:irred-factor}
    Each irreducible factor, over $\F$, of each $f_i$ is an irreducible factor of $Q_r$. Furthermore, the factorization of $Q_r$ can be computed efficiently.
\end{corollary}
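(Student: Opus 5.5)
The plan is to deduce the first part directly from \autoref{cla:qr-zero}, and the second from the factorization algorithms of \Cref{sec:factorization}.

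For the first part, let $\phi$ be an irreducible factor over $\F$ of some $f_i$, and let $z\in\overline{\F}$ be a root of $\phi$. Then $z$ is a root of $f_i$. The hypothesis $d\ge r$ holds in our regime, so \autoref{cla:qr-zero} gives $Q_r(z)=0$.

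Since $\phi$ is irreducible over $\F$, it is the minimal polynomial of $z$ over $\F$, up to a scalar. Also $Q_r\in\F[x]$ and $Q_r\not\equiv 0$, since $L$ has order exactly $r$. It follows that $\phi\mid Q_r$.

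The only subtlety is separability. We need $\phi$ to be the minimal polynomial of $z$, and \autoref{cla:qr-zero} applies to every root of $\phi$. Both hold regardless, because the claim covers all roots of $f_i$ in $\overline{\F}$. Divisibility only needs a single root of $\phi$ to be a root of $Q_r$, together with irreducibility of $\phi$.

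Consequently, $\phi$ appears among the irreducible factors of $Q_r$, up to normalization to monic form. Since $f_i$ is monic, its irreducible factors can likewise be taken monic, so the factors match exactly.

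For the algorithmic part, \autoref{cor:unique-L} gives $Q_r$ explicitly. Its degree is at most $r^2\delta$, which is polynomial in the parameters.

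Over $\Q$ (characteristic $0$), apply \autoref{thm:lll}. This needs a polynomial bound on the bit complexity of $Q_r$, which I would argue as follows. $Q_r$ is obtained by solving the linear system \eqref{eq:L-lin-system}. The coefficients of that system are integer multiples (by factorials at most $r!$) of coefficients of $f$. Gaussian elimination and Cramer's rule therefore keep the bit complexity at $\poly(r,\delta,d,B)$. The final normalization, which divides by the gcd and makes $Q_r$ monic, preserves polynomial bit size.

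Over a finite field $\F_q$ of characteristic $p$, apply \autoref{thm:finite-factor}, which runs in time $\poly(p,\log q,r^2\delta)$.

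The main (mild) obstacle is this bit-complexity bookkeeping in characteristic zero. I would handle it with the standard bound that solutions of rational linear systems have polynomial bit size, applied to the kernel basis computation and the diagonalization step in \autoref{cor:unique-L}. For a general characteristic-zero field, beyond $\Q$, I would restrict to the setting covered by \autoref{thm:lll}, as the paper implicitly does.
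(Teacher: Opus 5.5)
Your proposal is correct and follows the paper's own argument: a root of an irreducible factor $\phi$ of $f_i$ is a root of $Q_r$ by \autoref{cla:qr-zero}, so the minimal-polynomial argument gives $\phi\mid Q_r$, and the factorization step uses \autoref{thm:lll} or \autoref{thm:finite-factor}. One small slip in your extra bit-complexity remark is that the multipliers in $\nabla^i f$ are falling factorials $j(j-1)\cdots(j-i+1)$ with $j\le d\delta$, bounded by $(d\delta)^r$ rather than $r!$; this still adds only $O(r\log(d\delta))$ bits, so your conclusion stands.
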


\begin{proof}
    This follows immediately from \Cref{cla:qr-zero} by considering minimal polynomials of roots. The claim about efficiency follows from \Cref{thm:lll} and \Cref{thm:finite-factor}, depending on the underlying field.
\end{proof}

\begin{remark}\label{rem:slower-alg}
\autoref{cor:irred-factor} tells us that, upon factorizing the leading coefficient $Q_r(x)$, we obtain a set of $m$ irreducible factors $\phi_1,\ldots,\phi_m$ of $Q_r$. 

Hence, we can iterate over all possible multiplicity vectors $\ve$, of which there are at most
${{m+\delta} \choose {\delta}} \approx (r^2 \delta)^{\delta}$, to construct candidate polynomials
\[
g_{\boldsymbol{e}}(x) = \Pi_{j \in [m]} \phi_j^{e_j}.
\]
For each such $g_{\boldsymbol{e}}$, we can apply $L$ to $g_{\boldsymbol{e}}^d$ to test whether 
$g_{\boldsymbol{e}}^d \in \ker{L}$. This yields an algorithm with running time 
$\operatorname{poly}(n, d, (r \delta)^{\delta})$.

However, our goal is to design an algorithm whose running time is polynomial in all the parameters.
\end{remark}

%\anote{Say why we can find the factors deterministically and efficiently}

\subsection{Efficiently Recovering the $f_i$}\label{sec:recover-fi}

We first note that a basis for $\ker{L}$ can be computed in polynomial time. 
Indeed, consider $d\delta+1$ unknowns $a_i$ and consider the linear system of equations
\begin{equation}
    \label{eq:L-sol-space}
    \sum_{j=0}^{r}Q_j\left(\sum_{i=0}^{\delta d}a_i x^i\right)\equiv 0 \,.
\end{equation}
Clearly, this system is solvable in time $\poly(d,\delta)$. 
Furthermore, since $\deg(f_j^d)\leq d\delta$, each $f_j^d$ lies in the solution space. 
As these polynomials span $\ker{L}$, we are guaranteed to find a basis to $\ker{L}$ in this process.
    
Let $h_1,\ldots,h_r$ be the basis found for $\ker{L}$. Thus, 
\[
\ker{L}=\Span\{f_1^d,f_2^d,\ldots,f_r^d\}
       =\Span\{h_1(x),h_2(x),...,h_r(x)\}, 
       \quad \forall i \in[r],\; \deg(h_i(x))\le\delta d \,. 
\] 

Therefore, our goal is to find all $f_j^d$ given the $h_i$. 
We achieve this by the following observation: 
if some $g\in\ker{L}$ has a root $z$ such that $\ord_z(g)\geq d$, 
where $d$ is large enough, then $z$ must be a root of one of the $f_j$. 

\begin{claim}\label{cla:high-mul-root}
    Assume $d>2r^2\delta$. 
    If $z$ is a root of order $t>r^2\delta+(r-1)$ of some polynomial 
    $h\in \Span\{f_j^d :\ j\in [r]\}$, then it is a root of one of the $f_j$.
\end{claim}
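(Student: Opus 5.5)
Assume for contradiction that $z$ is not a root of any $f_j$, and use the Wronskian of a basis of $\ker L$ to bound how high a root of a kernel element can be there. Since the $f_j^d$ are linearly independent (we assume $r$ is minimal), extend $h$ to a basis of $\Span\{f_j^d\}$: write $h=h_1$ and pick $h_2,\ldots,h_r$ so that $\Span\{h_1,\ldots,h_r\}=\Span\{f_1^d,\ldots,f_r^d\}$. Then
\[
W(h_1,\ldots,h_r)=c\cdot W(f_1^d,\ldots,f_r^d)
\]
for a nonzero scalar $c$, namely the determinant of the change of basis. By \autoref{thm:wronskian-lin-ind}, with $p>d\delta$, this Wronskian is nonzero.

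Next compute $W(f_1^d,\ldots,f_r^d)$ as in \autoref{Wronski}. Factor $f_j^{d-r+1}$ out of column $j$, since $\nabla^{i}f_j^d=f_j^{d-i}g_{i,j}$ and $d-i\ge d-r+1$ for $i\le r-1$. This gives
\[
W(f_1^d,\ldots,f_r^d)=\left(\Pi_j f_j^{d-r+1}\right)\cdot R(x),
\]
where $R$ is a determinant of entries of degree at most $(r-1)\delta$. Hence $\deg R\le r(r-1)\delta\le r^2\delta$. In fact $R$ equals $\tilde P_r$ from \autoref{Wronski}, up to a power-of-$f$ adjustment, so $R\neq0$. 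Since $z$ is not a root of any $f_j$, we get $\ord_z W(h_1,\ldots,h_r)=\ord_z R\le r^2\delta$.

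Now bound $\ord_z W(h_1,\ldots,h_r)$ from below. Expand the determinant along the column of $h_1=h$: every entry $\nabla^i h$ with $i\le r-1$ has $\ord_z\ge t-(r-1)$, because $p>d\delta\ge t$, so derivatives lower the order by exactly one. Every cofactor is a polynomial, so $\ord_z W\ge t-(r-1)>r^2\delta$ by the hypothesis $t>r^2\delta+(r-1)$. This contradicts the upper bound. Therefore $z$ is a root of some $f_j$.

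The step needing the most care is showing $R\neq0$ together with the degree bound $\ord_z R\le r^2\delta$. This rests on nonvanishing of the Wronskian of independent polynomials in large characteristic (\autoref{thm:wronskian-lin-ind}) and on checking the degree bookkeeping of the $g_{i,j}$. The hypothesis $d>2r^2\delta$ is used only to ensure $d\ge r$, so the factorization is valid, and so that $t\le d\delta$ is consistent with the characteristic assumption.
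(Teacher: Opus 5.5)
Your proof is correct and follows the paper's route. Both replace a column of $W(f_1^d,\ldots,f_r^d)$ by the derivatives of $h$ (your basis extension does this), factor out $\Pi_j f_j^{d-r+1}$ to leave a cofactor of degree at most $r^2\delta$, and compare orders at $z$. You are slightly more careful than the written proof: you state explicitly that $W\neq 0$ via \autoref{thm:wronskian-lin-ind}, and you use $\ord_z W\ge t-(r-1)>r^2\delta$, whereas the paper weakens this to $t-r$, which the hypothesis does not strictly bound above $r^2\delta$.
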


\begin{proof}
    Denote $h=\sum_{i\in[r]}c_if_i^d$. Assume, w.l.o.g., that $c_1=1$.  
    Consider the order $(r-1)$ Wronskian 
    \[
        W(f_1^d,...,f_r^d)\in\F[x],
        \quad W:=\det(\nabla^i f_j^d)_{i\in [[r-1]],j\in [r]}\,.
    \]
%    \ynote{It said $W(f_1^d,...,f_r^d)\in\C[x]^{r\times r}$ but we defined it to be a determinant not the actual matrix. Also the field is now $\F$}
    As $c_1=1$, we can replace the first column of the matrix in the Wronskian 
    with the vector whose $i$-th entry is $(\nabla^i h)$, without affecting $W$.
    Since $\ord_z(h)=t>r$, each entry satisfies $\ord_z(\nabla^i h)=t-i\geq t-(r-1)$.  
    Consequently, $z$ is a root of $W$ of order at least $t-r+1$. 
    As in the proof of  \autoref{Wronski}, the Wronskian factorizes as 
    $\Pi_{i\in[r]}f_i^{d-r+1}W'$, where $W'$ is a polynomial of degree at most $r^2\delta$. 
    Since 
    \[
        \ord_z(W)\geq t-r > r^2\delta\geq  \deg(W')\,,
    \]
    we conclude that $\ord_z(\Pi_{i\in[r]}f_i^{d-r+1})>0$. 
    Hence, $z$ is a root of one of the $f_i$.
\end{proof}

The following simple corollary is at the heart of our algorithm.

\begin{corollary}\label{cor:fi-span-high-mul-roots}
Let $d$ be as in \autoref{cla:high-mul-root}.
The subspace of all polynomials that have a root at $z$ of order at least $d$ 
is spanned by all the $f_i^d$ that have $z$ as a root.     
\end{corollary}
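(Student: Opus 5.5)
We prove the two inclusions separately, working inside $\ker{L}=\Span\{f_1^d,\ldots,f_r^d\}$ (by \autoref{cor:OpL}). Throughout, ``the subspace of all polynomials'' means the polynomials of $\ker{L}$ that vanish at $z$ to order at least $d$. Let $J_z=\{j\in[r]:\ f_j(z)=0\}$ and $V_z=\Span\{f_j^d: j\in J_z\}$.

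\emph{The inclusion $V_z\subseteq$ the subspace.} This is immediate. For $j\in J_z$ we have $\ord_z(f_j^d)=d\cdot\ord_z(f_j)\ge d$. Order of vanishing at a point is preserved under linear combinations, in the sense that $\ord_z(\sum c_j f_j^d)\ge\min_j\ord_z(f_j^d)$. Hence every element of $V_z$ vanishes at $z$ to order at least $d$.

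\emph{The reverse inclusion.} Take $h=\sum_{i\in[r]}c_if_i^d$ with $\ord_z(h)\ge d$. Split it as $h=h_1+h_2$, where
\[
h_1=\sum_{i\in J_z}c_if_i^d\in V_z,\qquad h_2=\sum_{i\notin J_z}c_if_i^d .
\]
We have just shown that $\ord_z(h_1)\ge d$. It follows that $\ord_z(h_2)\ge d$ as well.

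Since $d>2r^2\delta>r^2\delta+(r-1)$, the order of vanishing of $h_2$ at $z$ exceeds the threshold of \autoref{cla:high-mul-root}. We apply that claim to the smaller family $\{f_i: i\notin J_z\}$. Its proof only uses the Wronskian of the participating $f_i^d$ and their linear independence, so it goes through for any subfamily of size $r'\le r$, with the threshold bound $r'^2\delta+(r'-1)\le r^2\delta+(r-1)$. If $h_2\neq 0$, the claim says $z$ is a root of some $f_i$ with $i\notin J_z$. This contradicts the definition of $J_z$, so $h_2=0$ and $h=h_1\in V_z$.

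\emph{Main obstacle.} The delicate step is applying \autoref{cla:high-mul-root} to the subfamily indexed by $[r]\setminus J_z$. One must check two things. First, the normalization $c_1=1$ used there can be arranged: pick any index $i$ with $c_i\ne0$ and rescale. Second, the Wronskian factorization argument (\autoref{Wronski}) remains valid for the subfamily. This requires the $f_i^d$ in the subfamily to be linearly independent, which holds because they are a subset of a linearly independent set. It also requires the characteristic condition carried over from \autoref{cla:high-mul-root}, which is unchanged by passing to a subfamily.
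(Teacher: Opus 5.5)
Your proof is correct and follows the paper's argument: both apply \autoref{cla:high-mul-root} to the span of the $f_i^d$ that do not vanish at $z$, which rules out a nonzero element of that span vanishing to order $\ge d$ at $z$. You additionally make explicit two steps the paper leaves implicit: the decomposition $h=h_1+h_2$ that yields the spanning statement, and the check that the claim applies to a subfamily of size $r'\le r$.
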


\begin{proof}
    Assume w.l.o.g., that among $\{f_i^d\}$ only $f_1^d,...,f_k^d$ 
    have $z$ as a root, and hence it is a root of order at least $d$ at each of them. 
    Assume for contradiction that some nonzero polynomial 
    $h\in \Span\{f_{k+1}^d,...,f_r^d\}$ satisfies $\ord_z(h)\geq d$. 
    Applying \autoref{cla:high-mul-root} to $h\in \Span\{f_{k+1}^d,...,f_r^d\}$ 
    shows that $z$ must be a root of some $f_j$ for $k<j\leq r$, a contradiction.
\end{proof}

\begin{corollary}\label{cor:fi-span-high-mul-factors}
    Let $\phi\in \F[x]$ be an irreducible polynomial. Let $d$ be as in \autoref{cla:high-mul-root}. The subspace of all polynomials that have $\phi^d$ as a factor
is spanned by all the $f_i^d$ such that $\phi$ is a factor of $f_i$. 
\end{corollary}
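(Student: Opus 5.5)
The plan is to reduce this to \autoref{cor:fi-span-high-mul-roots} by passing to a root of $\phi$ in $\overline{\F}$. The subspace in question is understood, as in that corollary, inside $\ker L=\Span\{f_1^d,\ldots,f_r^d\}$.

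Fix a root $z\in\overline{\F}$ of $\phi$. Everything is applied over $\overline{\F}$: \autoref{cla:high-mul-root} and its corollary only use the Wronskian factorization and degree bounds, and these are insensitive to extending scalars.

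\textbf{Step 1: $\phi$-divisibility is the same as vanishing at $z$.} For $g\in\F[x]$, we have $\phi\mid f_i$ if and only if $f_i(z)=0$, because $\phi$ is the minimal polynomial of $z$ over $\F$ (up to a scalar). Let $V_\phi$ be the set of $h\in\ker L$ with $\phi^d\mid h$. Let $V_z$ be the set of $h\in\ker L$ with $\ord_z(h)\ge d$.

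\textbf{Step 2: the easy inclusions.} Each $f_i$ with $\phi\mid f_i$ has $\phi^d\mid f_i^d$. So the span $U$ of these $f_i^d$ satisfies $U\subseteq V_\phi$. Clearly also $V_\phi\subseteq V_z$.

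\textbf{Step 3: the upper bound.} By \autoref{cor:fi-span-high-mul-roots}, $V_z$ (over $\overline{\F}$) is spanned by the $f_i^d$ with $f_i(z)=0$. By Step 1, these are exactly the $f_i^d$ with $\phi\mid f_i$, so they are the generators of $U$. Therefore any $h\in V_\phi$, being defined over $\F$, lies in the $\overline{\F}$-span of these $f_i^d$. The $f_i^d$ are linearly independent and lie in $\F[x]$, so the coefficients of $h$ in this combination are determined by a linear system over $\F$ with a unique solution, and hence lie in $\F$. This gives $V_\phi\subseteq U$.

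\textbf{Main obstacle.} The delicate point is justifying the use of \autoref{cor:fi-span-high-mul-roots} for a root $z$ lying outside $\F$. I would handle it by noting that the proof of \autoref{cla:high-mul-root} works verbatim over $\overline{\F}$. The only inputs are the factorization $W=\prod_i f_i^{d-r+1}W'$ with $\deg W'\le r^2\delta$ and order counting at $z$, and both are valid over any extension.

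One also needs the characteristic to be large, so that $\ord_z(\nabla^i h)=\ord_z(h)-i$. This holds under $p>d\delta$, the standing assumption.
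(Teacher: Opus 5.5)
Your proposal is correct and follows the paper's argument: the paper's proof is the single line ``apply \autoref{cor:fi-span-high-mul-roots} to any root of $\phi$,'' and you correctly supply what it leaves implicit, namely that $\phi\mid f_i$ iff $f_i(z)=0$ (since $\phi$ is the minimal polynomial of $z$), and that the coefficients descend from $\overline{\F}$ to $\F$ by linear independence of the $f_i^d$. One small refinement to your closing remark: the Wronskian argument only needs $\ord_z(\nabla^i h)\ge \ord_z(h)-i$, which holds in every characteristic, whereas the large-characteristic assumption is genuinely needed to ensure $W(f_1^d,\ldots,f_r^d)\neq 0$ via \autoref{thm:wronskian-lin-ind}.
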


\begin{proof}
    Apply \autoref{cor:fi-span-high-mul-roots} to any root of $\phi$.
\end{proof}

Armed with \autoref{cor:fi-span-high-mul-factors}, 
our next step is to sieve the polynomials in $\ker{L}$ 
that have a factor of high multiplicity.

\begin{claim}\label{cla:basis-for-factor}
    Given an irreducible $\phi\in \F[x]$ such that $\deg(\phi)\leq \delta$, and $e\leq d$, we can find in time $\poly(d,r,\delta)$ 
    a basis for the space of polynomials
    \[
    \{\, g \in \Span\{h_1,\ldots,h_r\} \;\big|\; \ord_{\phi}(g)\geq e \,\}.
    \]
\end{claim}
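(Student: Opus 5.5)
The condition $\ord_\phi(g)\ge e$ is linear in $g$: it says $\phi^e \mid g$. So the set in question is a subspace of $\Span\{h_1,\ldots,h_r\}$, and the plan is to write it as the kernel of an explicit linear map on coefficient vectors. Write $g=\sum_{i=1}^r c_i h_i$ with unknowns $c_1,\ldots,c_r\in\F$. First compute $\phi^e$ by repeated multiplication. Since $\deg(\phi^e)\le e\delta\le d\delta$, this takes $\poly(d,\delta)$ field operations. Next, for each $i\in[r]$, compute the remainder $\rho_i := h_i \bmod \phi^e$ by ordinary long division. Each $h_i$ has degree at most $d\delta$, so this also costs $\poly(d,\delta)$ operations.

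Reduction modulo $\phi^e$ is $\F$-linear, so $g \bmod \phi^e = \sum_i c_i \rho_i$. Therefore $\phi^e \mid g$ holds exactly when $\sum_{i=1}^r c_i\rho_i = 0$. Comparing coefficients of $x^0,\ldots,x^{e\deg\phi-1}$ turns this into a homogeneous linear system with at most $e\delta\le d\delta$ equations and $r$ unknowns. Gaussian elimination gives a basis $\{\vc^{(1)},\ldots,\vc^{(\ell)}\}$ of its solution space in $\poly(d,r,\delta)$ operations. The algorithm outputs the polynomials $\sum_i c^{(k)}_i h_i$ for $k=1,\ldots,\ell$. The $h_i$ are linearly independent, being a basis of $\ker L$, so the map $\vc\mapsto\sum_i c_i h_i$ is injective. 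Hence these outputs form a basis of the desired subspace, and none of the reasoning above depends on the characteristic.

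The only step that needs some care is the bit complexity. Over a finite field every arithmetic operation costs $\poly(\log q)$, so there is nothing more to check. Over $\Q$, one should note that $\phi^e$, the remainders $\rho_i$, and the Gaussian elimination all have bit sizes polynomial in $d,\delta$ and the bit sizes of $\phi$ and the $h_i$. This follows from the standard facts that division with remainder by a monic polynomial and Gaussian elimination over $\Q$ run in polynomial time. We may assume $\phi$ is monic since it is irreducible, and the $h_i$ have polynomial bit size because they come from solving \eqref{eq:L-sol-space}. No other real obstacle is expected.
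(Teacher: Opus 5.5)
Your proposal is correct and is essentially the paper's argument: both turn the condition $\phi^e\mid g$ into a homogeneous linear system over $\F$ and read a basis off its solution space. The only cosmetic difference is that the paper keeps unknowns $a_0,\ldots,a_{\delta d-e\deg(\phi)}$ for the cofactor and solves $\phi^e\cdot\sum_i a_ix^i=\sum_i b_ih_i$. You instead eliminate the cofactor by reducing each $h_i$ modulo $\phi^e$, which leaves only the $r$ unknowns $c_i$.
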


\begin{proof}
    Let $b_1,\ldots,b_r$ and $a_0,\ldots,a_{\delta d-e\deg(\phi)}$ be indeterminates. 
    Solve the homogeneous linear system 
    \[
        \phi^e\cdot \left(\sum_{i=0}^{\delta d-e\deg(\phi)}a_ix^i\right) 
        = \sum_{i=1}^{r}b_i h_i\;.
    \]
    Any basis for the solution space immediately gives a basis for the requested subspace.
\end{proof}

As explained in \Cref{sec:proof-overview}, we recover the monic polynomials $f_i^d$ by performing a DFS (with pruning) on a suitable tree. To ease the reading, we repeat the main idea. Let $\{\phi_j\}_{j\in [m]}$ denote the irreducible factors of $Q_r$. 
As proved in \autoref{cor:irred-factor}, this set contains all factors of each $f_i$. 
Hence, each $f_i$ corresponds to an exponent vector $\ve_i$ defined by 
$\ve_{i,j} = \ord_{\phi_j}(f_i)$. 
Equivalently,
\[
    f_i(x) = \Pi_{j\in [m]} \phi_j^{e_{i,j}}.
\]
Clearly, for each $i$, we have $\sum_j e_{i,j} \deg(\phi_j) = \deg(f_i) \le \delta$.

Consider now a rooted tree of depth $\delta$, where each node has $m$ children, 
and the edge to the $j$-th child is labeled by $\phi_j$. 
Every  root-to-leaf path in the tree corresponds to an exponent vector, counting the multiplicities of the $\phi_j$ encountered along the path. 
Label each node with the exponent vector defined by the path from the root to that node, 
and associate to each such node the subspace of polynomials in $\ker{L}$ 
that have $\phi_j^d$ as a factor with multiplicity equal to the number of times 
$\phi_j$ appears along the path. 

Note, however, that this tree is highly redundant, since many distinct nodes correspond 
to the same subspace. Moreover, the tree has $m^\delta$ leaves. 

Our algorithm performs a ``smart'' depth-first search (DFS) on this tree to recover 
the $f_j$ in the lexicographic order of their exponent vectors. 
Importantly, the algorithm visits only those vertices that are guaranteed 
to lead to previously undiscovered polynomials, and thus visits in total 
only $\poly(r,\delta)$ vertices overall.

\begin{theorem}\label{thm:dfs}
Let $\{\phi_1,\ldots,\phi_m\}$ be the set of irreducible factors of $Q_r$. 
Assume $d\geq (r+1)^4\delta$ and the characteristic is $p=0$ or $p>2dr\delta$.  
Then there exists a deterministic algorithm that, given a basis 
$\{h_1,\ldots,h_r\}$ for $\ker{L}$, recovers all $f_i^d$ 
in time $\poly(m,r,\delta,d)$, or $\poly(m,r,\delta,d,\log p)$ when $p>0$.
\end{theorem}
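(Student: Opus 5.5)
We run a depth-first search over exponent vectors, building each $f_i$ one irreducible factor at a time, and we keep a subspace at every node so that only branches which can still produce a new $f_i^d$ are explored. With each node we associate a partial exponent vector $\ve\in\N^m$ with $\sum_j e_j\deg(\phi_j)\le\delta$, the partial product $g_{\ve}=\Pi_j\phi_j^{e_j}$, and the subspace
\[
V_{\ve}=\{\,h\in\Span\{h_1,\ldots,h_r\} : \ord_{\phi_j}(h)\ge d e_j \ \forall j\,\}.
\]
Because the $\phi_j$ are pairwise coprime, $V_{\ve}$ is the set of $h$ in the span divisible by $g_{\ve}^d$. It is computed by the linear system of \autoref{cla:basis-for-factor}, with $\phi^e$ replaced by $g_{\ve}^d$, in time $\poly(d,r,\delta)$.

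The key structural step is to show
\[
V_{\ve}=\Span\{f_i^d : g_{\ve}\mid f_i\}.
\]
Containment $\supseteq$ is clear. For $\subseteq$, we induct on the number of factors in $g_{\ve}$, strengthening \autoref{cor:fi-span-high-mul-factors}. Write $h=\sum_i c_if_i^d$. Restrict to a root $z$ of one factor $\phi$ with its required multiplicity $ed$. Then $\ord_z(h)\ge ed$, while every $f_i$ with $\ord_\phi(f_i)<e$ has $\ord_z(f_i^d)\le (e-1)d$.

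To handle this we rerun the Wronskian argument of \autoref{cla:high-mul-root} on the sub-collection $\{f_i: c_i\neq 0\}$. Suppose the term $f_1^d$ attains the minimal $\ord_z$, say $\ord_z(f_1^d)=ad$ with $a<e$. Divide the Wronskian by $\Pi_i f_i^{d-r+1}$, whose $z$-order is at least that of $f_1^{(d-r+1)}$ times the other factors. The remaining factor $W'$ has degree at most $r^2\delta$. Replacing column $1$ by $h$ shows that $\ord_z W\ge ed-r+1$, whereas the $f$-power part contributes only $\sum_i \ord_z(f_i)(d-r+1)$.

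Comparing these is the delicate point. A cleaner route is to factor out $(x-z)^{\min}$ from all terms and apply \autoref{cor:ABC}-type independence of $d$-th powers. Concretely, we write $f_i=(x-z)^{a_i}u_i$ and group the terms by $a_i$. The claim then becomes: a combination $\sum_i c_i(x-z)^{a_id}u_i^d$ with $\ord_z$ exceeding $(\max a_i)d+r^2\delta+r$ forces the top-order part to vanish. Iterating over the distinct values of $a_i$ and using $d\ge (r+1)^4\delta$ gives the result. We expect this multiplicity-graded version of \autoref{cla:high-mul-root} to be the main obstacle. We will try it first via the Wronskian of the $r$ functions, counting $\ord_z$ column-wise through the identity $W=\Pi f_i^{d-r+1}W'$.

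Given the characterization, the algorithm proceeds as follows. Starting at the root ($\ve=0$, $V_0=\ker L$), we visit the children $\ve+\mathbf{1}_j$ in lexicographic order and recurse only if
\[
\dim V_{\ve+\mathbf{1}_j} > \dim\bigl(V_{\ve+\mathbf{1}_j}\cap \Span(\text{already found } f_i^d)\bigr).
\]
At a node where $g_{\ve}^d\in V_{\ve}$, which is a membership test, we output $g_{\ve}^d$; by \autoref{cla:no-other-d-power} and monicity this is exactly some $f_i^d$.

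The pruning rule is correct because a child with an undiscovered $f_i$ always has strictly larger dimension than its intersection with the discovered span, since the $f_i^d$ are independent. For the running time, every visited node lies on the path to some undiscovered $f_i$. Each $f_i$ is reached along one lexicographic path of length at most $\delta$, and at each visited node we test $m$ children. Hence at most $r\delta$ nodes are expanded, with $O(r\delta m)$ linear-algebra calls in total, each costing $\poly(d,r,\delta)$, plus $\log p$ factors for field arithmetic.
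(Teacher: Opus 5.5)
Your proposal has a genuine gap at its central step. The characterization $V_{\ve}=\mathrm{span}\{f_i^d : g_{\ve}\mid f_i\}$ is never proved, and the correctness of your pruning rule, your output step and your running-time count all rest on it. You flag it as the main obstacle and stop there. The computation you do write cannot close it: crediting the vanishing at $z$ only to the column replaced by $h$ gives $\ord_z W\ge ed-r+1$ against $\ord_z W\le (d-r+1)\sum_i a_i+r^2\delta$. That is no contradiction once other terms also vanish at $z$; already two terms with $a_1=a_2=e-1\ge 1$ break it. The grouping-by-$a_i$ route is only gestured at: it is not made precise which group is supposed to vanish, under what hypothesis, and why.

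The missing idea, which is the paper's route, is to divide out what the induction has already established rather than keeping $g_{\ve}^d$ inside the space. By the induction hypothesis every $h\in V_{\ve}$ equals $g_{\ve}^d\tilde h$ with $\tilde h\in\mathrm{span}\{(f_i/g_{\ve})^d : g_{\ve}\mid f_i\}$. Since $\ord_{\phi_j}(g_{\ve}^d)=e_jd$ exactly, $h\in V_{\ve+\mathbf 1_j}$ if and only if $\ord_{\phi_j}(\tilde h)\ge d$. The $f_i/g_{\ve}$ are again non-associate of degree at most $\delta$ with linearly independent $d$-th powers, so \autoref{cor:fi-span-high-mul-factors} applies to them verbatim. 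Only multiplicity $d$ is ever needed, which is exactly why \Cref{alg:dfs} passes to the quotient spaces $V(\ve+\ve_j)=\{g/\phi_j^d\}$.

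Alternatively, your Wronskian can be made to work by crediting every column with its own order. Let $S$ be the support of $h$, $k=|S|$, and choose column $1$ with $a_1\le e-1$. Then each Leibniz term has $z$-order at least $ed+d\sum_{i\ne 1}a_i-\binom{k}{2}$. On the other side, $W=\prod_{i\in S}f_i^{d-k+1}W'$ with $0\ne W'$ of degree at most $k^2\delta$, so $\ord_z W\le (d-k+1)\sum_i a_i+k^2\delta$. Hence $d\le (e-a_1)d\le\binom{k}{2}+k^2\delta$, contradicting $d\ge (r+1)^4\delta$.

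Once the characterization is established, the rest of your argument goes through. Your output rule, testing $g_{\ve}^d\in V_{\ve}$ at every node, is in fact more robust than outputting only at one-dimensional nodes as \Cref{alg:dfs} does. If $f_i$ properly divides some $f_k$, every node with $g_{\ve}\mid f_i$ has dimension at least $2$, so under the one-dimensional rule $f_i^d$ is never produced.
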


We first give the pseudocode of the algorithm (\Cref{alg:dfs}) and then prove that it satisfies the claim in the theorem.

\begin{algorithm}[H]
\caption{\textsc{DFS-Recover-$f_i^d$}}
\label{alg:dfs}
\begin{algorithmic}[1]
\Require Basis $\mathcal{B}_{\mathbf{0}}$ of $V(\mathbf{0})=\ker{L}$; irreducible polynomials $\phi_1,\ldots,\phi_m$ of degree at most $\delta$
\Ensure A list $U$ containing all $f_i^d$
\State $U \gets \varnothing$ \Comment{set of recovered polynomials $f_i^d$}
\Procedure{DFS}{$\mathbf e,\mathcal{B}_\ve,U$} \Comment{$\ve$ is a multiplicity vector and $\cB_\ve$ spans the current ambient space}
   \If{$\dim\Span{\mathcal{B}_\ve}=1$}
       \State let $g$ be the unique monic polynomial in $\Span{\mathcal{B}_\ve}$
       \State $F(x)\gets \left(\Pi_{j=1}^m \phi_j^{e_j}\right)^d g(x)$ \label{step:alg1:add-to-list}
       \State $U\gets U\cup\{F\}$; \Return
   \EndIf
   \State $U(\mathbf e)\gets \{\,q/ (\Pi_{j=1}^m \phi_j^{e_j})^d : q\in U,\ (\Pi_{j=1}^m \phi_j^{ e_j})^d\mid q\,\}$
   \Comment{reduce $U$ to the current ambient space $V(\mathbf e)$}
   \For{$j=1$ \textbf{to} $m$}
       \State $A_j(\mathbf e)\gets\{\,g\in\Span{\mathcal{B}_\ve}:\ \ord_{\phi_j}(g)\ge d\,\}$  \label{step:alg1:aje}
       \If{$A_j(\mathbf e)\neq\{0\}$ \textbf{and} $A_j(\mathbf e)\not\subseteq\Span{U(\mathbf e)}$}\label{step:edge-condition}
           \State $\mathcal{B}_{\ve+\ve_j}\gets\{\,g/\phi_j^d: g\in$ basis$(A_j(\mathbf e))\,\}$ \label{step:alg1:B}
           \State \Call{DFS}{$\mathbf e+\mathbf e_j,\mathcal{B}_{\ve+\ve_j},U$}
       \EndIf
   \EndFor
\EndProcedure
\State \Call{DFS}{$\mathbf 0,\mathcal{B}_0,U$}
\State \Return $U$
\end{algorithmic}
\end{algorithm}

%\textbf{Invariant and correctness.}
\begin{proof}
We naturally identify nodes in the tree by their multiplicity vectors $\ve$, where the root corresponds to the all-zero vector. 

Denote by $T_\text{DFS}$ the tree defined by the DFS algorithm. That is, there is an edge between a node that is labeled by the multiplicity vector $\ve$ to its child $\ve_j$ only if the condition in Step~\ref{step:edge-condition} holds.

%Each node $u\in T_\text{DFS}$ is identified by its multiplicity vector $\mathbf e=(e_1,\ldots,e_m)\in\N^m$, where the root corresponds to the all-zero vector.
Define the vector space 
\[V(\mathbf 0)=\ker{L}\ .\]
We associate to each node $u\in T_\text{DFS}$  with multiplicity vector $\mathbf e$ two linear spaces $V(\mathbf e)$ and  $A'_j(\mathbf e)$  that are defined recursively (starting from $V(\mathbf 0)$):
\begin{equation}\label{eq:Aj}
A'_j(\mathbf e)=\{\, g\in V(\mathbf e): \ord_{\phi_j}(g)\ge d \,\} \quad \text{and}\quad 
V(\mathbf e+\mathbf e_j)=\{\, g/\phi_j^d : g\in A'_j(\mathbf e) \,\}.    
\end{equation}

We next show the relation of these spaces to the algorithm.

\begin{claim}
    For every node $u\in T_\text{DFS}$ with multiplicity vector $\mathbf e$, and $j\in[m]$ such that $\ve+\ve_j$ satisfy the condition in Step~\ref{step:edge-condition}, the following hold: 
\begin{enumerate}[label=(\Roman*)]
\item\label{item:A'j=aj} $A'_j(\mathbf e)= A_j(\mathbf e)$.
\item\label{item:V-spanned}  $V(\ve)$ is spanned by
\[
\mathcal{B}(\ve)
=\left\{
\left(\frac{f_i}{\Pi_{j=1}^m \phi_j^{e_j}}\right)^d
:\ \ord_{\phi_j}(f_i)\ge e_j\ \forall j
\right\}.
\]
\item\label{item:B=Bj} It holds that $\cB_\ve=\mathcal{B}(\ve)$.
\end{enumerate}   
\end{claim}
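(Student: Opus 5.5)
Let $u$ be a node with multiplicity vector $\ve$. I will prove the three items together by induction on the depth of $u$ in $T_\text{DFS}$, i.e., on $|\ve|=\sum_j e_j$. The engine is \autoref{cor:fi-span-high-mul-factors}, used in a slightly generalized form. Concretely, I will first record the following lemma. For every $\ve$, the map $g\mapsto g/(\Pi_j\phi_j^{e_j})^d$ sends the subspace of $\ker{L}$ divisible by $(\Pi_j\phi_j^{e_j})^d$ onto the span of $\mathcal{B}(\ve)$. Moreover, the elements $(f_i/\Pi_j\phi_j^{e_j})^d$ appearing in $\mathcal{B}(\ve)$ are linearly independent. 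Independence is immediate, since the $f_i^d$ are linearly independent and division by a fixed polynomial is injective.

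For the spanning part, I will apply \autoref{cla:high-mul-root} iteratively. Suppose $h=\sum_i c_if_i^d$ is divisible by $\phi^{e d}$ for some irreducible $\phi$ with $e\geq 1$. Take a root $z$ of $\phi$. Then $\ord_z(h)\geq d>r^2\delta+r-1$, so \autoref{cor:fi-span-high-mul-roots}, applied within the span of those $f_i$ with $c_i\neq0$, forces all $f_i$ with $c_i\neq 0$ and $\phi\nmid f_i$ to contribute zero.

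Stripping the higher powers needs more care. Write $h=\sum_{i:\phi|f_i}c_if_i^d$ and suppose $\ord_\phi(h)\geq ed$. Let $a=\min\{\ord_\phi(f_i):c_i\neq0\}$, and suppose $a<e$. Then $h/\phi^{ad}$ lies in the span of $(f_i/\phi^a)^d$, and is still divisible by $\phi^{d}$. The polynomials $f_i/\phi^a$ are non-associate of degree at most $\delta$, so the argument of \autoref{cla:high-mul-root} applies verbatim to them. Hence every $f_i/\phi^{a}$ with nonzero coefficient is divisible by $\phi$, contradicting minimality of $a$. Doing this factor by factor over $j$ gives the lemma. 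The degree bound $d>2r^2\delta$ and the characteristic assumption are exactly what \autoref{cla:high-mul-root} needs, and both are implied by the hypotheses of \autoref{thm:dfs}.

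With the lemma in hand, the induction is as follows. At the root, $V(\mathbf 0)=\ker L=\Span\{f_i^d\}$ and $\cB_{\mathbf 0}$ is the input basis. Here I read item~\ref{item:B=Bj} as ``$\cB_\ve$ is a basis of $\Span\mathcal{B}(\ve)$'', which is what the algorithm can guarantee; an arbitrary basis is fine for all later steps.

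For the inductive step, assume $\Span\cB_\ve=V(\ve)=\Span\mathcal{B}(\ve)$. Then $A_j(\ve)$, computed by \autoref{cla:basis-for-factor} from $\Span\cB_\ve$, equals $A'_j(\ve)$ by the definition \eqref{eq:Aj}; this is item~\ref{item:A'j=aj}. By the lemma applied to a single factor, the elements of $\Span\mathcal{B}(\ve)$ divisible by $\phi_j^d$ are exactly the span of those $(f_i/\Pi\phi^{e})^d$ with $\phi_j\mid f_i/\Pi\phi^{e}$. Dividing by $\phi_j^d$ yields $\Span\mathcal{B}(\ve+\ve_j)$, so $V(\ve+\ve_j)=\Span\mathcal{B}(\ve+\ve_j)$, which is item~\ref{item:V-spanned}. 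Step~\ref{step:alg1:B} divides a basis of $A_j(\ve)$ by $\phi_j^d$. Since this division is injective and linear, the result is a basis of $V(\ve+\ve_j)$, which is item~\ref{item:B=Bj}.

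I expect the main obstacle to be the repeated-factor case inside the lemma. Here \autoref{cor:fi-span-high-mul-factors} only addresses multiplicity $d$, not $ed$. So I must check that, after dividing out $\phi^{ad}$, the quotients $f_i/\phi^a$ still satisfy the hypotheses of \autoref{cla:high-mul-root}, namely non-associateness, degree at most $\delta$, and linear independence of their $d$-th powers. The first two hold because dividing non-associate polynomials by a common factor keeps them non-associate and only lowers degree. Independence follows because the $f_i^d$ are independent.
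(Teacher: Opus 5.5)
Your proof is correct and takes essentially the same route as the paper: induction on the depth of $T_\text{DFS}$, where \autoref{cor:fi-span-high-mul-factors} is applied to the reduced family $f_i/\Pi_t\phi_t^{e_t}$ to identify $A'_j(\mathbf e)$, and hence $V(\mathbf e+\mathbf e_j)$. You are more careful than the paper in two places. First, you explicitly justify why the corollary may be applied to the reduced polynomials: they remain non-associate of degree at most $\delta$ with linearly independent $d$-th powers, and your minimal-$a$ argument handles repeated factors; the paper uses this only implicitly. Second, you rightly observe that item (III) can hold only as an equality of spans, since at the root $\mathcal{B}_{\mathbf 0}$ is the input basis $\{h_i\}$ rather than $\{f_i^d\}$.
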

\begin{proof}
The proof is induction on the depth of the DFS tree, starting at the root. 

For the root,  $\mathbf e=\mathbf 0$, and $\cB(\mathbf{0})=\cB_{\mathbf{0}}=\ker{K}$. Thus,  $V(\mathbf 0)=\ker{L}=\Span{\{f_1^d,\ldots,f_r^d\}}$,
so \ref{item:V-spanned} and \ref{item:B=Bj} hold. Now, inspecting Step~\ref{step:alg1:aje} we see that \ref{item:A'j=aj} holds as well.

\noindent\emph{Induction step.}
Assume  \ref{item:A'j=aj},\ref{item:V-spanned},\ref{item:B=Bj} hold at a node $u$ with multiplicity vector $\mathbf e$, and fix $j\in[m]$ such that the node with multiplicity vector $\ve+\ve_j$ is in $T_\text{DFS}$.
By the induction hypothesis
\[
A_j(\ve)=A'_j(\ve)=\{\, g\in V(\ve):\ \ord_{\phi_j}(g)\ge d\,\}.
\]
Applying \autoref{cor:fi-span-high-mul-factors} to $V(\ve)$ with spanning set (in fact, basis) $\mathcal{B}(\ve)=\cB_\ve$ gives
\begin{align*}
    A'_j(\ve)&=\Span{\{\,h\in\mathcal{B}(\ve):\ \ord_{\phi_j}(h)\ge d\,\}}
\\ &=\Span{\left\{
\left(\tfrac{f_i}{\Pi_{t=1}^m \phi_t^{e_t}}\right)^d
:\ \ord_{\phi_j}(f_i)\ge e_j+1,\ \ord_{\phi_t}(f_i)\ge e_t\ \forall t
\right\}}.
\end{align*}
Dividing by $\phi_j^d$ yields
\[
V(\ve+\ve_j)
=\Span{\left\{
\left(\tfrac{f_i}{\Pi_{t=1}^m \phi_t^{e'_t}}\right)^d
:\ \ord_{\phi_t}(f_i)\ge e'_t\ \forall t
\right\}},
\]
where $e'_t=e_t$ for $t\neq j$ and $e'_j=e_j+1$, establishing \ref{item:V-spanned}. Moreover, since $\mathcal{B}(\ve)=\cB_\ve$ we see that $\cB_{\ve+\ve_j}$ defined in Step~\ref{step:alg1:B} satisfies $\cB_{\ve+\ve_j}=\cB(\ve+\ve_j)$ so \ref{item:B=Bj} holds. Similarly,  we see that for every $i\in[m]$ for which there is an edge from $\ve+\ve_j$ to $\ve+\ve_j+\ve_i$, the definitions of $A_i(\ve+\ve_j)$ and $A'_i(\ve+\ve_j)$
coincide.
\end{proof}

For simplicity, we slightly abuse notation and use $A_j(u)$ and $V(u)$ whenever we are at a node $u\in T_\text{DFS}$.

As a consequence of the claim we get that for every leaf $u$ (where $\dim V(u)=1$), there exists a unique $k$ such that
\[
V(u)=\Span{\left\{\left(\tfrac{f_k}{\Pi_{t=1}^m \phi_t^{e_t}}\right)^d\right\}},
\]
and for the unique monic $g\in V(u)$ we have
\[
\left(\Pi_{t=1}^m \phi_t^{ e_t}\right)^d g(x)=f_k^d(x),
\]
thus identifying the recovered polynomial, which justifies Step~\ref{step:alg1:add-to-list} of the algorithm.

At node $u$ with vector $\mathbf e$, define
\[
U(\mathbf e)
=\left\{\,q/(\Pi_{t=1}^m \phi_t^{ e_t})^d:\ q\in U,\ 
(\Pi_{t=1}^m \phi_t^{ e_t})^d\mid q\,\right\}\subseteq V(u).
\]
We recurse into child $j$ only if
\[
A_j(u)\not\subseteq\Span{U(\mathbf e)}.
\]
If $A_j(u)\subseteq\Span{U(\mathbf e)}$, then $V(u_j)\subseteq\Span{U(\mathbf e+\mathbf e_j)}$
and cannot yield a new leaf. Otherwise, the branch contains a new $f_i^d$.
Hence each $f_i^d$ is discovered exactly once, and the recursion terminates after all are found. In particular, there are  $r$ leaves.

\smallskip
\textbf{Running time.}
Each recursive descent increases $\sum_j e_j$ by one,
so at most $\sum_i \deg{f_i}\le r\delta$ nodes are visited.  
Each node requires $\poly(m,r,\delta,d)$ linear-algebra operations, 
hence the total running time is $\poly(m,r,\delta,d)$, or $\poly(m,r,\delta,d,\log p)$ when $p>0$.
\end{proof}

%\subsection{Recovering the Coefficients $\alpha_i$}\label{sec:coeff-recover}\AS{doesn't seem to require a separate section}

\subsection{The Univariate Reconstruction Algorithm}\label{sec:uni-reconstruct}

We now give the full reconstruction algorithm in the univariate case. In what follows we assume that the characteristic of $\F$ is either $p=0$ or $p>2rd\delta$. We further assume that $d>(r+1)^4\delta$.  

\begin{algorithm}[H]
\caption{\textsc{Recover Univariate Sums of Powers}}
\label{alg:recover-depth4-uni}
\begin{algorithmic}[1]
\Require Black-box access to $f(x)\in\F[x]$; parameters $r,d,\delta$.
\Ensure Representation $f(x)=\sum_{i=1}^r \alpha_i\,f_i(x)^d$ with monic $f_i$.
\State \textbf{Interpolation.} Since $\deg f\le d\delta$, query $f$ at $d\delta+1$ points and interpolate $f(x)=\sum_{i=0}^{\delta d}\beta_i x^i$.\label{alg:uni:step:interpolate}
\State \textbf{Perfect-power test.} Decide if $f=\alpha g^d$ for some scalar $\alpha$ and polynomial $g$  with $\deg(g)\le\delta$ (e.g., via repeated $\gcd(f,\nabla f)$ / squarefree decomposition). If yes, rescale $g$ to be monic and \Return $\{(\alpha_1,f_1)\}=\{(\alpha,g)\}$.\label{alg:uni:step:perfect-power}
\State \textbf{Annihilating operator (normalized).}
       Find the minimal $r'\le r$ and polynomials $Q_0,\ldots,Q_{r'}$ with $\deg Q_j\le {r'}^{2}\delta$ such that
       $L=\sum_{j=0}^{r'} Q_j(x)\,\nabla^j$ satisfies $L(f)=0$.
       Among all such solutions, choose one with minimal $\deg Q_{r'}$ and rescale so that $Q_{r'}$ is monic (as in \autoref{cor:unique-L}).\label{alg:uni:step:find-L}
\State \textbf{Factor.}
       Using \autoref{thm:lll} or \autoref{thm:finite-factor}, compute the set $\{\phi_j\}$ of all irreducible factors of degree at most $\delta$ of $Q_{r'}(x)$. By \autoref{cor:irred-factor}, $\{\phi_j\}$ contains all irreducible factors of the $f_i$’s.\label{alg:uni:step:factor}
\State \textbf{Solution space.}
       Compute a basis $\{h_1,\ldots,h_{r'}\}$ of $V=\ker{L}$ by solving \eqref{eq:L-sol-space}.\label{alg:uni:step:solution-space}
\State \textbf{Recover the $f_i^d$.}
       Run \Cref{alg:dfs} on input $(\{h_i\},\{\phi_j\})$ to obtain $U=\{f_1^d,\ldots,f_{r'}^d\}$.\label{alg:uni:step:dfs}
\State \textbf{Recover the coefficients $\alpha_i$.}
       Solve the linear system (with unknown $\alpha_i$) $f(x)=\sum_{i=1}^{r}\alpha_if_i^d$.
\State \Return $\{(\alpha_i,f_i)\}_{i=1}^{r'}$.
\end{algorithmic}
\end{algorithm}

\begin{theorem}[Correctness and bit complexity of the univariate reconstruction]\label{thm:uni-reconstruct}
Let $r,d,\delta\in \N$ with $d>(r+1)^4\delta$.
Let $\F$ be a field of characteristic $p=0$ or $p>2rd\delta$.

Suppose a nonzero univariate polynomial $f\in\F[x]$ admits a minimal representation
\[
f(x)=\sum_{i=1}^{r'} \alpha_i\,f_i(x)^d,
\qquad
\deg(f_i)\le \delta, \qquad r'\le r,
\]
where each $f_i$ is monic, and the polynomials $\{f_i\}_{i=1}^{r'}$ are   non-associate. Denote the total bit complexity of the representation with $B$.\footnote{As before, when $\F$ is a finite field, the bit complexity is $r'\delta\log|\F|$.}
Then \Cref{alg:recover-depth4-uni} outputs, in time $\poly(r,\delta,d,B)$ (or $\poly(r,\delta,d, p,\log|\F|)$ if $\F$ is a finite field of characteristic $p$),
a set $\{(\alpha'_i,f'_i)\}_{i=1}^{r'}$ such that, for some permutation $\pi$,
\[
f(x)=\sum_{i=1}^{r'} \alpha'_i\,f'_i(x)^d
\qquad\text{and}\qquad
f'_i=f_{\pi(i)},\ \ \alpha'_i=\alpha_{\pi(i)}.
\]
\end{theorem}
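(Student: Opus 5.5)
The plan is to walk through \Cref{alg:recover-depth4-uni} step by step and check that each step succeeds, with the correct output, within the claimed time.

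\textbf{Steps 1 and 2.} Interpolation in Step~\ref{alg:uni:step:interpolate} is correct because $\deg f\le d\delta$. It needs $d\delta+1$ distinct points, possibly taken from an extension field, and costs $\poly(d,\delta)$ field operations. For the perfect-power test in Step~\ref{alg:uni:step:perfect-power}:
\begin{itemize}
\item If $r'=1$, the test succeeds by computing a squarefree decomposition. This is valid since $p>d\delta\ge\deg f$, so $\gcd(f,\nabla f)$ behaves as in characteristic zero.
\item If $r'\ge 2$, the test cannot succeed. Otherwise $\alpha g^d=\sum_i\alpha_i f_i^d$ would be a nontrivial identity among non-associate $d$-th powers (after deleting an $f_i$ associate to $g$, if one exists). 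This contradicts \autoref{thm:abc}, or \autoref{cor:ABC} with constant $g_i$. The contradiction uses minimality of $r'$ and $d>(r+1)^4\delta\ge (r'-1)^2$.
\end{itemize}

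\textbf{Step 3, the key point.} The algorithm searches increasing $r''\le r$ for a nonzero annihilator. I would show that no nonzero operator of order $r''<r'$ with $\deg Q_j\le r''^2\delta$ annihilates $f$. The argument copies the proof of \autoref{cla:L-unique}. Applying $L'$ to $f=\sum\alpha_i f_i^d$ gives $\sum_i \tilde Q_i f_i^{d-r''}=0$ with $\deg \tilde Q_i\le (r''^2+r'')\delta$. By \autoref{cor:ABC}, each $\tilde Q_i=0$, so $L'(f_i^d)=0$ for every $i\in[r']$. Since the $f_i^d$ are linearly independent, this contradicts \autoref{thm:ode}, which bounds the kernel dimension by $r''<r'$. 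Hence the first order found is exactly $r'$. By \autoref{cor:OpL}, \autoref{cla:L-unique} and \autoref{cor:unique-L}, the normalized solution equals the unique $L$, its kernel is $\Span\{f_1^d,\ldots,f_{r'}^d\}$, and $\deg Q_{r'}\le r'^2\delta$.

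\textbf{Steps 4 through 7.} Factoring $Q_{r'}$ uses \autoref{thm:lll} or \autoref{thm:finite-factor}, and by \autoref{cor:irred-factor} the irreducible factors of $Q_{r'}$ include those of every $f_i$. There are therefore $m\le r^2\delta$ factors. Step~\ref{alg:uni:step:solution-space} returns a basis of $\ker L$. \Cref{thm:dfs} then recovers exactly $\{f_i^d\}$ in time $\poly(r,\delta,d)$; its hypotheses $d\ge(r+1)^4\delta$ and $p>2rd\delta$ hold. The final linear system has a unique solution because the $f_i^d$ are linearly independent. Each recovered $f_i^d$ is monic, so $f_i$ is recovered by extracting its $d$-th root. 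Uniqueness of this decomposition gives the permutation $\pi$ with $f'_i=f_{\pi(i)}$ and $\alpha'_i=\alpha_{\pi(i)}$.

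\textbf{Bit complexity, the main obstacle.} Over $\Q$, every intermediate object must have polynomial bit size:
\begin{itemize}
\item The coefficients of $f$ have size $\poly(d,\delta,B)$.
\item The linear system \eqref{eq:L-lin-system} has polynomial size, so Gaussian elimination yields $Q_j$ of polynomial bit size, and the minimization over $\deg Q_{r'}$ keeps this bound. Alternatively, $L$ is a scalar multiple of the $P_i$, which are Wronskian minors of the $f_i^d$ divided by a gcd and so have bounded height; normalizing $Q_{r'}$ to be monic keeps the bound polynomial.
\item The factors of $Q_{r'}$ divide it, so their height is bounded by Mignotte's bound.
\item All the linear algebra inside the DFS runs on data of polynomial size.
\end{itemize}
Over a finite field $\F_q$, the only non-$\log q$ cost is deterministic factorization, which costs $\poly(p,\log q,\deg Q_{r'})$ and accounts for the factor $p$ in the running time.
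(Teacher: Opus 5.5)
Your proposal is correct and takes the same route as the paper. Both verify \Cref{alg:recover-depth4-uni} step by step, using interpolation, \autoref{cor:unique-L}, \autoref{cor:irred-factor}, \autoref{thm:dfs}, and a final linear solve. Three of your additions make explicit what the paper's proof leaves implicit: the argument that no annihilator of order $r''<r'$ exists (so Step~3 lands exactly on the true $r'$), the use of minimality to get linear independence of the $f_i^d$, and the bit-complexity bookkeeping.
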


\begin{proof}
By assumption $\deg(f)\le d\delta$, so interpolation is efficient. Similarly, since $p=0$ or $p>2rd\delta>\deg(f)$, testing whether $f$ is a $d$th power can be done via repeated computation of $\gcd(f,\nabla f)$  (or by \autoref{thm:lll}/\autoref{thm:finite-factor}).

\autoref{cor:unique-L} guarantees that we can compute the required operator $L$ efficiently. Depending on the field, \autoref{thm:lll} and \autoref{thm:finite-factor} allow us to compute all irreducible factors of $Q_{r'}$, yielding the set $\{\phi_j\}$; by \autoref{cor:irred-factor} this set contains all irreducible factors of each $f_i$. A basis for $\ker{L}$ is obtained by solving \eqref{eq:L-sol-space}. By \autoref{thm:dfs}, we can recover all $f_i^d$ efficiently. 

Since $r^2<d$, \Cref{thm:abc}  guarantees that the polynomials $\{f_i^d\}$ are non-associate. Hence, there is a unique solution to the linear system $f(x)=\sum_{i=1}^{r}\alpha_if_i^d$. As we know $f,f_1,\ldots,f_r$ we can solve the system to recover $\alpha_1,\ldots,\alpha_r$.

%Finally, as shown in \Cref{sec:coeff-recover}, we recover the coefficients $(\alpha_i)$ by solving a linear system with $d\delta+1$ equations in $r$ variables. %\eqref{eq:mat-for-coeff}, defined using the hitting set from \autoref{thm:hs-depth-4}

Uniqueness (up to permutation) follows from \autoref{thm:abc} and the fact that the $f_i$ are monic.
\end{proof}

\section{\texorpdfstring{Reconstruction of  $\Sigma^{[r]}\wedge^{[d]}\Sigma^{[s]}\Pi^{[\delta]}$ Circuits}{Reconstruction of Sigma^[r] and^[d] Sigma^[s] Pi^[Delta] Circuits}}\label{sec:multi}

In this section we solve the reconstruction problem for $\Sigma^{[r]}\wedge^{[d]}\Sigma^{[s]}\Pi^{[\delta]}$ circuits: We are given black-box access to a polynomial
\begin{equation}\label{eq:f:multivariate2}
  f(\vx)=\sum_{i=1}^r\alpha_i f_i(\vx)^d \in \F[\vx],
\end{equation}
where each $f_i\in\F[\vx]$ is an unknown $s$-sparse polynomial of degree $\leq \delta$. Furthermore, the $f_i$ are non-associate. We also assume that $\F$ satisfies the requirements of \autoref{thm:main-rec}.

\paragraph{Overview.} The idea behind the algorithm is as follows. At a high level, we wish to reduce the problem to the univariate reconstruction problem, for which we have \Cref{alg:recover-depth4-uni}. For that, we shall restrict our input to well chosen lines, run \Cref{alg:recover-depth4-uni} for the restricted polynomials, and reconstruct the original polynomials from that information. 

In more detail, let $\cH_{\ref{thm:hs-depth-4}}$ be the hitting set given in \Cref{thm:hs-depth-4} for $\Sigma^{[2]}\mathord{\wedge}^{[d]}\Sigma^{[2s]}\Pi^{[\delta]}$ circuits, with parameter $\epsilon_{\ref{thm:hs-depth-4}} = \frac{1}{2r^2}$. 
Let $\cH_{\ref{construction:robust-ks}}$ be the robust interpolating set for $n$-variate, $s$-sparse polynomials of degree $\delta$, given  in \Cref{construction:robust-ks}, for parameter $\epsilon_{\ref{construction:robust-ks}} = \frac{1}{100 n r^2}$.

For every $(\vu,\vv)\in \cH_{\ref{thm:hs-depth-4}}\times \cH_{\ref{construction:robust-ks}}$ we define the univariate polynomial
\[F_{\vu,\vv}(t)=f(\vu+t(\vv-\vu)) = \sum_{i=1}^{r}\alpha_i f_i(\vu+t(\vv-\vu))^d\ .\]
We will show that for most points $(\vu,\vv)\in \cH_{\ref{thm:hs-depth-4}}\times \cH_{\ref{construction:robust-ks}}$, $F_{\vu,\vv}$ cannot be represented as a sum of fewer than $r$ powers. Therefore, \Cref{alg:recover-depth4-uni} will return for each ``good'' $(\vu,\vv)$ a list of $r$ pairs $(\lambda_{i,\vu,\vv},h_{i,\vu,\vv}(t))$ such that, up to reordering, $\lambda_{i,\vu,\vv}h_{i,\vu,\vv}(t)^d=\alpha_if_i(\vu+t(\vv-\vu))^d$. In particular, $\lambda_{i,\vu,\vv}h_{i,\vu,\vv}(1)^d= \alpha_if_i(\vv)^d$. Thus, we have access to evaluations of $f_i$ on enough points of $\cH_{\ref{construction:robust-ks}}$. This will enable us to interpolate $f_i$ and conclude the algorithm. An important point is that we will have to ``align'' the outputs of the algorithm when run on different $\vv\in \cH_{\ref{construction:robust-ks}}$ so that we know to map each output to a unique $f_i$. This will be done by noting that with high probability over the choice of $u$, the \emph{labels} $\alpha_if_i(\vu)^d=\lambda_{i,\vu,\vv}h_{i,\vu,\vv}(0)^d$ are distinct.

We next give the description of the algorithm and then its formal analysis.

\begin{algorithm}[H]
\caption{Multivariate $\Sigma^{[r]}\wedge^{[d]}\Sigma^{[s]}\Pi^{[\delta]}$ Reconstruction}
\label{alg:multivariate-reconstruction}
\begin{algorithmic}[1]
\Require Sets $\cH_{\ref{thm:hs-depth-4}}$, $\cH_{\ref{construction:robust-ks}}$; blackbox access to $f=\sum_{i=1}^{r}\alpha_if_i^d$; parameters $r,\epsilon,d,\delta$.
\Ensure Output   $\left\{(\lambda_{i },h_{i })\right\}$, such that $\sum_{i=1}^{r}\lambda_{i } h_{i }(\vx)^d=f(\vx)$.

\ForAll{$(\vu,\vv)\in \cH_{\ref{thm:hs-depth-4}}\times \cH_{\ref{construction:robust-ks}}$} \Comment{Run univariate recovery for each $(\vu,\vv)$}
    \State Let \[F_{\vu,\vv}(t):=f(u+t (\vv-\vu))\ , \quad \text{and} \quad f_{i,\vu,\vv}(t):=f_i(\vu+t(\vv-\vu))\] 
    \State Run \Cref{alg:recover-depth4-uni} on $F_{\vu,\vv}$
    \If{the algorithm did not abort}
        \State Let $\{(\lambda_{i,\vu,\vv},h_{i,\vu,\vv}(t))\}_{i=1}^{r_{\vu,\vv}}$ be its output
    \EndIf
\EndFor

\State $r \gets \max_{\vu,\vv} r_{\vu,\vv}$

\ForAll{$\vu\in \cH_{\ref{thm:hs-depth-4}}$} \Comment{Define good $\vv$'s per $\vu$}

    \State $\cV_\vu \gets \{\, \vv\in \cH_{\ref{construction:robust-ks}} \;:\; r=r_{\vu,\vv} \text{ and the } r \text{ labels } 
    \lambda_{i,\vu,\vv}h_{i,\vu,\vv}(0)^d \text{ are nonzero and distinct } \,\}$
\EndFor

\State $\cU \gets \{\, \vu : |\cV_\vu|\geq (1-\epsilon\binom{r}{2})|\cH_\text{KS}| \,\}$  \Comment{Define good $\vu$'s}

\State Fix some $\vu\in \cU$
\ForAll{$\vv\in \cV_\vu$} \Comment{Align indices across different $\vv$'s} \label{step:alignment}
    \State Reorder $\{(\lambda_{i,\vu,\vv},h_{i,\vu,\vv}(t))\}_{i=1}^{r}$ so that for every $\vv\neq \vv'\in \cV_\vu$,
    \[
      \lambda_{i,\vu,\vv}h_{i,\vu,\vv}(0)^d=\lambda_{i,\vu,\vv'}h_{i,\vu,\vv'}(0)^d \ ,
    \]
    and denote $\lambda_{i,\vu}:=\lambda_{i,\vu,\vv}h_{i,\vu,\vv}(0)^d$
\EndFor
\ForAll{$\vv\in \cV_\vu$} \Comment{Reconstruct the $g_i$}
    \For{$i=1$ to $r$}
        \State $h_{i,\vu}(\vv) \gets  \frac{h_{i,\vu,\vv}(1)}{h_{i,\vu,\vv}(0)} $ \label{step:g}
    \EndFor
\EndFor
\State Apply \Cref{thm:robust-ks} to reconstruct each $h_{i,\vu}$ as a degree $\delta$ $s$-sparse polynomial \label{step:interpolate-hiu}

\State \Return the set of pairs $\left\{(\lambda_{i,\vu},h_{i,\vu})\right\}$
\end{algorithmic}
\end{algorithm}

\begin{theorem}\label{cla:multivariate-reconstruction-alg}
    Let $r,d,\delta\in \N$ with $d>(r+1)^4\delta$. Let $\F$ be a field of characteristic $p=0$ or $p\ge rd\delta   (s^2n+\delta)$. Then,
    \Cref{alg:multivariate-reconstruction} runs in time $\poly(n,s,d)$ and its output satisfies $\sum_{i=1}^{r}\lambda_{i,\vu} h_{i,\vu}(\vx)^d=f(\vx)$.    
\end{theorem}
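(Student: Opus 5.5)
The proof splits into three parts: the good pairs $(\vu,\vv)$ are plentiful, a good anchor $\vu$ exists and makes the alignment consistent, and the robust Klivans--Spielman decoder recovers each $f_i$ from the aligned data. Throughout, write $f=\sum_{i=1}^r\alpha_i f_i^d$ with $r$ minimal, so the $f_i$ are non-associate and the $\alpha_i$ are nonzero. Normalize each $f_i$ so that its first monomial under a fixed order has coefficient one.

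\textbf{Step 1: most pairs give a full univariate representation.} Fix $\vu$ and $\vv$. The restriction $F_{\vu,\vv}(t)=\sum_i\alpha_i f_{i,\vu,\vv}(t)^d$ has $\deg f_{i,\vu,\vv}\le\delta$, so \autoref{thm:uni-reconstruct} returns exactly the $r$ terms, up to permutation and monic normalization, whenever the $f_{i,\vu,\vv}$ are pairwise non-associate. For a pair $i\ne j$, the polynomials $f_{i,\vu,\vv}$ and $f_{j,\vu,\vv}$ are associate iff $f_i(\vu+t(\vv-\vu))f_j(\vu)-f_j(\vu+t(\vv-\vu))f_i(\vu)\equiv 0$. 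If $f_i(\vu),f_j(\vu)\ne0$, then setting $t=1$ this says $f_j(\vu)f_i(\vv)-f_i(\vu)f_j(\vv)=0$. That is the vanishing at $\vv$ of a nonzero $2s$-sparse polynomial of degree $\delta$, nonzero because $f_i,f_j$ are non-associate. By \Cref{lem:const-good-pit} this happens for at most an $\epsilon/n$ fraction of $\vv$.

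\textbf{Step 2: existence of a good anchor.} I need labels $\alpha_i f_i(\vu)^d$ that are nonzero and pairwise distinct. The difference $\alpha_i f_i^d-\alpha_j f_j^d$ is a nonzero $\Sigma^{[2]}\wedge^{[d]}\Sigma^{[2s]}\Pi^{[\delta]}$ polynomial. It is nonzero by \autoref{thm:abc}, since $d+1\ge1$ and $\alpha_i f_i^d$ is nonzero, and the nonvanishing of $f_i(\vu)$ is handled the same way. So by \autoref{thm:hs-depth-4} with $\epsilon=1/(2r^2)$ and a union bound over the $\binom r2+r$ conditions, a constant fraction of $\vu\in\cH_{\ref{thm:hs-depth-4}}$ satisfies all of them. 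For such $\vu$, Step 1 gives $|\cV_\vu|\ge(1-\epsilon\binom r2)|\cH_{\text{KS}}|$, so $\cU\ne\emptyset$ and the computed $r$ equals the true $r$. Conversely, any $\vu\in\cU$ has $r_{\vu,\vv}=r$ and distinct labels for some $\vv$. Since the algorithm returns the true terms (by uniqueness in \autoref{thm:uni-reconstruct}), the labels $\lambda_{i,\vu,\vv}h_{i,\vu,\vv}(0)^d=\alpha_{\pi(i)}f_{\pi(i)}(\vu)^d$ are themselves independent of $\vv$, so every $\vu\in\cU$ works.

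One point needs care. For $\vv\in\cV_\vu$ the algorithm's output has $r$ terms, but I must confirm they are the true $f_{i,\vu,\vv}$. Minimality of $r$ for $F_{\vu,\vv}$ follows from \autoref{thm:abc}: a shorter representation would give a vanishing sum of at most $2r$ $d$th powers, which requires $d\ge(2r-1)^2$, and this holds. It also uses that the true restrictions are non-associate, which follows from the distinct nonzero labels: two associate restrictions $f_{i,\vu,\vv}=c\,f_{j,\vu,\vv}$ would give identical value patterns. That pattern argument is delicate, and I would instead simply intersect $\cV_\vu$ with the good $\vv$ set of Step 1, losing only an $\epsilon/n\cdot\binom r2$ fraction more.

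\textbf{Step 3: alignment and interpolation.} After alignment, index $i$ corresponds to a fixed $f_i$ for every $\vv\in\cV_\vu$. Because $h_{i,\vu,\vv}(t)$ is a scalar multiple of $f_i(\vu+t(\vv-\vu))$, the ratio in step~\ref{step:g} equals $f_i(\vv)/f_i(\vu)$. Hence $h_{i,\vu}=f_i/f_i(\vu)$ is an $s$-sparse degree-$\delta$ polynomial known on at least a $1-\epsilon_{\ref{construction:robust-ks}}$ fraction of $\cH_{\ref{construction:robust-ks}}$, and \Cref{thm:robust-ks} recovers it. Then $\lambda_{i,\vu}h_{i,\vu}^d=\alpha_if_i(\vu)^d\cdot f_i^d/f_i(\vu)^d=\alpha_if_i^d$, and summing over $i$ gives $f$.

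The parameter bookkeeping must ensure the fraction of missing $\vv$ is at most $\epsilon_{\ref{construction:robust-ks}}$. This comes from combining $\binom r2\epsilon/n$ from Step 1 with the threshold defining $\cU$, which is the main obstacle. I expect to need the ``$\epsilon$'' in $\cU$'s definition to be chosen as $\epsilon_{\ref{construction:robust-ks}}/r^2$, or to run \Cref{lem:const-good-pit} with a smaller $\epsilon$.

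\textbf{Running time.} The sets have $\poly(n,s,d,r,\delta)$ size, and with $r,\delta\le d$ this is $\poly(n,s,d)$. Each call to \Cref{alg:recover-depth4-uni} is polynomial, and the characteristic bound $p\ge rd\delta(s^2n+\delta)$ exceeds the requirements of \autoref{thm:uni-reconstruct} and \autoref{thm:hs-depth-4}.
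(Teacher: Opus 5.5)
Your route is the paper's: an anchor $\vu$ from the hitting set for $\Sigma^{[2]}\mathord{\wedge}^{[d]}\Sigma^{[2s]}\Pi^{[\delta]}$ circuits, non-associateness of the restrictions from $f_j(\vu)f_i(\vv)-f_i(\vu)f_j(\vv)\neq 0$, alignment by the labels $\alpha_if_i(\vu)^d$, the ratio $h_{i,\vu,\vv}(1)/h_{i,\vu,\vv}(0)=f_i(\vv)/f_i(\vu)$, and robust Klivans--Spielman decoding. The gap is at exactly the point you flagged, and your proposed fix does not work. The set of $\vv$ that are good in your Step~1 is defined through the unknown $f_i$, so the algorithm cannot intersect with it. It uses $h_{i,\vu}(\vv)$ for \emph{every} $\vv\in\cV_\vu$ in Step~\ref{step:interpolate-hiu}. \Cref{thm:robust-ks} tolerates missing evaluations but not incorrect ones. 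So a single $\vv\in\cV_\vu$ whose univariate output differs from the true decomposition, or is misaligned, could corrupt the interpolation. Intersecting in the analysis therefore proves nothing about the actual output. Your sentence ``every $\vu\in\cU$ works'' also presupposes the same unproved fact.

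The missing observation is that $r_{\vu,\vv}=r$ is itself a certificate. By \autoref{thm:uni-reconstruct}, \Cref{alg:recover-depth4-uni} returns a minimal representation, so the number of terms it outputs is the minimal top fan-in of $F_{\vu,\vv}$. The true representation $\sum_{i=1}^r\alpha_if_{i,\vu,\vv}^d$ has exactly $r$ terms, so $r_{\vu,\vv}=r$ forces it to be minimal. Hence no $f_{i,\vu,\vv}$ vanishes and no two are associate, since otherwise two terms would merge. The uniqueness part of \autoref{thm:uni-reconstruct} then shows that the output is the true decomposition, up to permutation and scaling, for every $\vu\in\cU$ and every $\vv\in\cV_\vu$. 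In particular every $\vu\in\cU$ has nonzero, distinct labels $\alpha_if_i(\vu)^d$, and the alignment is consistent. This is the argument the paper uses when it notes that returning exactly $r$ polynomials forces the $f_{i,\vu,\vv}$ to be linearly independent.

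With this observation, the bookkeeping obstacle you anticipated disappears, because no intersection is needed. The definition of $\cU$ alone gives $|\cV_\vu|\ge(1-\epsilon\binom{r}{2})|\cH_{\ref{construction:robust-ks}}|$ with $\epsilon=\epsilon_{\ref{construction:robust-ks}}=\frac{1}{100nr^2}$. So fewer than a $\frac{1}{200n}$ fraction of points is missing, which the decoder of \Cref{thm:robust-ks} tolerates; its proof only needs the missing fraction to be $O(1/n)$.

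A minor point: $\alpha_if_i^d-\alpha_jf_j^d\neq 0$ follows from unique factorization, since $f_i^d=cf_j^d$ forces $f_i,f_j$ to be associate. \autoref{thm:abc} is not the right citation, as it concerns at least three summands.
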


\begin{proof}
    As the sets involved have polynomial size and each step requires polynomial time, the claim regarding the running time is clear. We next prove correctness.

    The proof is composed of several claims. We shall first prove that $\cU$ is non-empty and that for every $\vu\in \cU$, $|\cV_\vu|> (1-\frac{1}{100n})|\cH_{\ref{construction:robust-ks}}|$. We shall then show that the alignment process succeeds and that it gives access to evaluations of the different $f_i$. We then claim (using \Cref{thm:robust-ks}) that since $\cV_\vu$ is large and the alignment process succeeded, the $g_i$ that we reconstructed are scalar multiples of the $f_i$. 

    \begin{claim}[Many good $\vu$'s]\label{cla:u}
Let $\cH_{\ref{thm:hs-depth-4}}$ be the hitting set as above.
Then, all but at most $\epsilon_{\ref{thm:hs-depth-4}}\left(r+\binom r2\right)\,|\cH_{\ref{thm:hs-depth-4}}|<\frac{1}{2}|\cH_{\ref{thm:hs-depth-4}}|$
points $\vu\in\cH_{\ref{thm:hs-depth-4}}$ satisfy the following simultaneously:
\begin{enumerate}
\item $f_i(\vu)\neq 0$ for all $i\in[r]$, and
\item the values $\Lambda_i:=\alpha_i f_i(\vu)^d$ are pairwise distinct.
\end{enumerate}
We call the points $\vu\in \cH_{\ref{thm:hs-depth-4}}$ that satisfy both requirements \emph{good}.
\end{claim}

\begin{proof}
Since each $f_i$ is $s$-sparse, we get from \Cref{thm:hs-depth-4} and
the union bound that there are at most $\epsilon_{\ref{thm:hs-depth-4}} r |\cH_{\ref{thm:hs-depth-4}}|$, on which any of the $f_i$ vanishes.
Next, consider the $\binom{r}{2}$ polynomials \[
P_{i,j}(\vx):=\alpha_i f_i(\vx)^d-\alpha_j f_j(\vx)^d.  
\]
Clearly, $P_{i,j}$ is a $\Sigma^{[2]}\mathord{\wedge}^{[d]}\Sigma^{[2s]}\Pi^{[\delta]}$ circuit.
\Cref{thm:hs-depth-4} and the union bound imply that there are at most $\epsilon_{\ref{thm:hs-depth-4}} \binom{r}{2} |\cH_{\ref{thm:hs-depth-4}}|$, on which any of the $P_{i,j}$ vanishes. Combining the two estimates, with our choice of $\epsilon_{\ref{thm:hs-depth-4}}=1/2r^2$, we conclude the proof.
\end{proof}
    We next prove that for any such good $\vu$, the set $\cV_\vu$ is large.

\begin{claim}[$\cV_\vu$ is large for a good $\vu$]\label{cla:good-direction-from-H}
%Let $\cH_{\ref{thm:hs-depth-4}}$ as in \Cref{cla:u}. Let $\cH_{\ref{construction:robust-ks}}$ be the interpolating set for $s$-sparse polynomials guaranteed in \Cref{lem:robust-KS}.
Let $\vu\in\cH_{\ref{thm:hs-depth-4}}$ be a good point.  
It holds that $|\cV_\vu|\geq (1- \epsilon_{\ref{construction:robust-ks}}\binom r2)|\cH_{\ref{construction:robust-ks}}|$ and for every $\vv\in\cV_\vu$ the following sets of labels are equal
\[
\left\{\lambda_{i,\vu,\vv} h_{i,\vu,\vv}(0)^d \mid i\in[r]\right\} = \left\{ \alpha_if_i(\vu)^d \mid i\in[r] \right\} \ .
\]
\end{claim}

\begin{proof}
For $1\le i<j\le r$ denote
\[
g_{i,j}(\vx)\ :=\ f_j(\vu)\,f_i(\vx)\;-\;f_i(\vu)\,f_j(\vx)\ \in\ \F[\vx].
\]
Since $\vu$ is good we have that $f_i(\vu),f_j(\vu)\neq 0$. As the $f_i$ are non-associate, we conclude that $g_{i,j}\neq  0$.
From  \autoref{cor:ks-lin-ind} and the union bound we conclude that all but at most $\epsilon_{\ref{construction:robust-ks}}\binom r2\,|\cH_{\ref{construction:robust-ks}}|$ points $\vv\in \cH_{\ref{construction:robust-ks}}$ satisfy $g_{i,j}(\vv)\neq 0$ for every $i<j$.
Let $\vv$ be such a point (on which no $g_{i,j}$ vanishes). We next show that $\vv\in\cV_\vu$.  

We first note that the univariate polynomials \[f_{i,\vu,\vv}:=f_i(\vu+t(\vv-\vu))\] are non-associate. Indeed, if $f_{i,\vu,\vv} \sim f_{j,u,\vv}$ then this would give 
\[\frac{f_i(\vv)}{f_i(\vu)} = \frac{f_{i,\vu,\vv}(1)}{f_{i,\vu,\vv}(0)} = \frac{f_{j,u,\vv}(1)}{f_{j,u,\vv}(0)} =\frac{f_j(\vv)}{f_j(\vu)}\ ,\] 
implying $g_{i,j}(\vv)=0$, in contradiction.
Consequently the representation
\[
F_{\vu,\vv}(t)=\sum_{i=1}^r \alpha_i\,f_{i,\vu,\vv}(t)^d
\]
has minimal top fan-in exactly $r$, so \Cref{alg:recover-depth4-uni} returns exactly $r$ polynomials. Since the representation is unique we have that up to reordering, $\lambda_{i,\vu,\vv} h_{i,\vu,\vv}(t)^d  =  \alpha_if_{i,\vu,\vv}(t)^d$. Consequently,  
\[\lambda_{i,\vu,\vv} h_{i,\vu,\vv}(0)^d  =  \alpha_if_{i,\vu,\vv}(0)^d = \alpha_if_i(\vu)^d\ . \]
This concludes the proof regarding the size of $\cV_\vu$.

Now, observe that for any $\vv\in\cV_\vu$, since \Cref{alg:recover-depth4-uni} returned exactly $r$ polynomials, it must be the case that the $f_{i,\vu,\vv}$ are linearly independent, and the same argument implies that $\left\{\lambda_{i,\vu,\vv} h_{i,\vu,\vv}(0)^d \mid i\in[r]\right\} = \left\{ \alpha_if_i(\vu)^d \mid i\in[r] \right\}$.
\end{proof}

Combining \Cref{cla:u} and \Cref{cla:good-direction-from-H}, we conclude that the set $\cU$ is not empty (indeed, it is quite large) and that for every $\vu\in \cU$, for each $\vv\in\cV_\vu$ the set of labels equals $\left\{ \alpha_if_i(\vu)^d \mid i\in[r] \right\}$. In particular, we can align the labels $\lambda_{i,\vu,\vv} h_{i,\vu,\vv}(0)^d $ across different $v\in\cV_\vu$. Thus, we can assume without loss of generality that for every  $v\in \cV_\vu$ it holds that 
\[\lambda_{i,\vu,\vv} h_{i,\vu,\vv}(0)^d  =  \alpha_if_{i,\vu,\vv}(0)^d = \alpha_if_i(\vu)^d\ . \]
Hence, by uniqueness of representation as small sum of $d$-th powers (\Cref{thm:abc}), we have that
for all $\vv\in \cV_\vu$ it holds that 
\[\lambda_{i,\vu,\vv} h_{i,\vu,\vv}(t)^d  =  
\alpha_if_{i,\vu,\vv}(t)^d\ . \]
Recall that \Cref{alg:recover-depth4-uni} returns monic polynomials, so each $h_{i,\vu,\vv}$ is monic. As the leading coefficient of $t$ in $f_{i,\vu,\vv}(t)$ equals $f_{i}(\vv-\vu)$ we have that $\lambda_{i,\vu,\vv}=\alpha_if_i(\vv-\vu)^d$. From this we obtain that $\lambda_{i,\vu,\vv}=\lambda_{i,\vu,\vv'}$ for all $\vv,\vv'\in\cV_\vu$. We can thus rename $\lambda_{i,\vu,\vv}$ to $\lambda_{i,\vu}$.
Note that the polynomial
$\tilde{h}_{i,\vu,\vv}(t):=h_{i,\vu,\vv}(t)/h_{i,\vu,\vv}(0)$ satisfies $\tilde{h}_{i,\vu,\vv}(0)=1$. Moreover,
\[\tilde{h}_{i,\vu,\vv}(t)^d = \left(\frac{h_{i,\vu,\vv}(t)}{h_{i,\vu,\vv}(0)}\right)^d= \left(\frac{f_{i,\vu,\vv}(t)}{f_i(\vu)}\right)^d  \]
Since $f_{i,\vu,\vv}({0})/f_i(\vu)=1$, we obtain that $\tilde{h}_{i,\vu,\vv}(t)=f_{i,\vu,\vv}(t)/f_i(\vu)$. I.e., it is the unique $d$-th root of the polynomial $\left(\frac{f_{i,\vu,\vv}(t)}{f_i(\vu)}\right)^d$ whose value at $0$ equals $1$ (other roots evaluate to different roots of unity).
In particular,  $\tilde{h}_{i,\vu,\vv}(1)=\frac{f_{i,\vu,\vv}(1)}{f_i(\vu)}=\frac{f_i(\vv)}{f_i(\vu)}$. 
Thus, the function $h_{i,\vu}$ defined in Step~\ref{step:g} satisfies that for all $\vv\in\cV_\vu$, 
\[h_{i,\vu}(\vv)=\frac{h_{i,\vu,\vv}(1)}{h_{i,\vu,\vv}(0)}=\tilde{h}_{i,\vu,\vv}(1)=f_i(\vv)/f_i(\vu) \ .\]
Since $f_i$ is an $s$-sparse polynomial of degree $\delta$, and we have its evaluations on all $\vv\in \cV_\vu$ which consists of at least $|\cV_\vu|\geq (1- \epsilon_{\ref{construction:robust-ks}}\binom r2)|\cH_{\ref{construction:robust-ks}}|\geq(1-\frac{1}{100n})|\cH_{\ref{construction:robust-ks}}|$ points of $\cH_{\ref{construction:robust-ks}}$ we conclude by \Cref{thm:robust-ks} that Step~\ref{step:interpolate-hiu} returns the polynomial $f_i(\vv)/f_i(\vu)$.

Observe that if $\beta_i$ satisfies
\[\beta_i \left( \frac{f_i(\vv)}{f_i(\vu)}\right)^d = \alpha_i f_i(\vv)^d \ ,\]
then 
\[\beta_i = \alpha_i f_i(\vu)^d=\lambda_{i,\vu,\vv}h_{i,\vu,\vv}(0)^d = \lambda_{i,\vu}\,.\]
Thus, the algorithm indeed returns $(\lambda_{i,\vu},h_{i,\vu})$ satisfying
\[\lambda_{i,\vu}h_{i,\vu}(\vx)^d=\alpha_i f_i(\vx)^d \ .\]
This concludes the analysis of \Cref{alg:multivariate-reconstruction}.
\end{proof}

\bibliographystyle{alpha}
\bibliography{sources}

@InProceedings{peleg2022tensor,
  author =	{Peleg, Shir and Shpilka, Amir and Volk, Ben Lee},
  title =	{{Tensor Reconstruction Beyond Constant Rank}},
  booktitle =	{15th Innovations in Theoretical Computer Science Conference (ITCS 2024)},
  pages =	{87:1--87:20},
  series =	{Leibniz International Proceedings in Informatics (LIPIcs)},
  ISBN =	{978-3-95977-309-6},
  ISSN =	{1868-8969},
  year =	{2024},
  volume =	{287},
  editor =	{Guruswami, Venkatesan},
  publisher =	{Schloss Dagstuhl -- Leibniz-Zentrum f{\"u}r Informatik},
  address =	{Dagstuhl, Germany},
  URL =		{https://drops.dagstuhl.de/entities/document/10.4230/LIPIcs.ITCS.2024.87},
  URN =		{urn:nbn:de:0030-drops-196157},
  doi =		{10.4230/LIPIcs.ITCS.2024.87}
}

@inproceedings{Kay12,
author = {Kayal, Neeraj},
title = {Affine projections of polynomials: extended abstract},
year = {2012},
isbn = {9781450312455},
publisher = {Association for Computing Machinery},
address = {New York, NY, USA},
url = {https://doi.org/10.1145/2213977.2214036},
doi = {10.1145/2213977.2214036},
booktitle = {Proceedings of the Forty-Fourth Annual ACM Symposium on Theory of Computing},
pages = {643–662},
numpages = {20},
location = {New York, New York, USA},
series = {STOC '12}
}

@inproceedings{KS01,
author = {Klivans, Adam R. and Spielman, Daniel},
title = {Randomness efficient identity testing of multivariate polynomials},
year = {2001},
isbn = {1581133499},
publisher = {Association for Computing Machinery},
address = {New York, NY, USA},
url = {https://doi.org/10.1145/380752.380801},
doi = {10.1145/380752.380801},
booktitle = {Proceedings of the Thirty-Third Annual ACM Symposium on Theory of Computing},
pages = {216–223},
numpages = {8},
location = {Hersonissos, Greece},
series = {STOC '01}
}

@INPROCEEDINGS{GKS20,
  author={Garg, Ankit and Kayal, Neeraj and Saha, Chandan},
  booktitle={2020 IEEE 61st Annual Symposium on Foundations of Computer Science (FOCS)}, 
  title={Learning sums of powers of low-degree polynomials in the non-degenerate case}, 
  year={2020},
  volume={},
  number={},
  pages={889-899},
  doi={10.1109/FOCS46700.2020.00087}}

@book{ODE,
  author    = {Gerald Teschl},
  title     = {Ordinary Differential Equations and Dynamical Systems},
  series    = {Graduate Studies in Mathematics},
  volume    = {140},
  publisher = {American Mathematical Society},
  address   = {Providence, RI},
  year      = {2012},
  isbn      = {978-0-8218-8328-0},
}

@article{LLL,
author = {Lenstra, H.W. jr. and Lenstra, A.K. and Lovász, L.},
journal = {Mathematische Annalen},
pages = {515-534},
title = {Factoring Polynomials with Rational Coefficients.},
url = {http://eudml.org/doc/182903},
volume = {261},
year = {1982},
}

@book{mason, 
place={Cambridge}, 
series={London Mathematical Society Lecture Note Series}, 
title={Diophantine Equations over Function Fields}, DOI={10.1017/CBO9780511752490}, publisher={Cambridge University Press}, author={Mason, Richard C.}, year={1984}, collection={London Mathematical Society Lecture Note Series}}

@article{stothers1981polynomial,
  title={{Polynomial identities and Hauptmoduln}},
  author={Stothers, Walter W.},
  journal={The Quarterly Journal of Mathematics},
  volume={32},
  number={3},
  pages={349--370},
  year={1981},
  publisher={Oxford University Press}
}

@article{vaserstein2003vanishing,
	title={Vanishing polynomial sums},
	author={Vaserstein, Leonid N and Wheland, Ethel R},
	journal={Communications in Algebra},
	volume={31},
	number={2},
	pages={751--772},
	year={2003},
	publisher={Taylor \& Francis}
}

@inproceedings{BhargavaSV21,
  author       = {Vishwas Bhargava and
                  Shubhangi Saraf and
                  Ilya Volkovich},
  editor       = {Samir Khuller and
                  Virginia Vassilevska Williams},
  title        = {Reconstruction algorithms for low-rank tensors and depth-3 multilinear
                  circuits},
  booktitle    = {{STOC} '21: 53rd Annual {ACM} {SIGACT} Symposium on Theory of Computing,
                  Virtual Event, Italy, June 21-25, 2021},
  pages        = {809--822},
  publisher    = {{ACM}},
  year         = {2021},
  url          = {https://doi.org/10.1145/3406325.3451096},
  doi          = {10.1145/3406325.3451096},
  bibsource    = {dblp computer science bibliography, https://dblp.org}
}

@inproceedings{BhargavaS25,
  author       = {Vishwas Bhargava and
                  Devansh Shringi},
  editor       = {Keren Censor{-}Hillel and
                  Fabrizio Grandoni and
                  Jo{\"{e}}l Ouaknine and
                  Gabriele Puppis},
  title        = {Faster {\&} Deterministic {FPT} Algorithm for Worst-Case Tensor
                  Decomposition},
  booktitle    = {52nd International Colloquium on Automata, Languages, and Programming,
                  {ICALP} 2025, July 8-11, 2025, Aarhus, Denmark},
  series       = {LIPIcs},
  volume       = {334},
  pages        = {28:1--28:20},
  publisher    = {Schloss Dagstuhl - Leibniz-Zentrum f{\"{u}}r Informatik},
  year         = {2025},
  url          = {https://doi.org/10.4230/LIPIcs.ICALP.2025.28},
  doi          = {10.4230/LIPICS.ICALP.2025.28},
  bibsource    = {dblp computer science bibliography, https://dblp.org}
}

@article {GathenShoup1992-deterministic-factor,
    AUTHOR = {von zur Gathen, Joachim and Shoup, Victor},
     TITLE = {Computing {F}robenius maps and factoring polynomials},
   JOURNAL = {Comput. Complexity},
  FJOURNAL = {Computational Complexity},
    VOLUME = {2},
      YEAR = {1992},
    NUMBER = {3},
     PAGES = {187--224},
      ISSN = {1016-3328,1420-8954},
   MRCLASS = {12Y05 (11T06 11Y16 68Q40)},
  MRNUMBER = {1220071},
MRREVIEWER = {Heinrich\ Rolletschek},
       DOI = {10.1007/BF01272074},
       URL = {https://doi.org/10.1007/BF01272074},
}

@article{KayalSaxena07,
	Author = {Neeraj Kayal and Nitin Saxena},
	Bibsource = {dblp computer science bibliography, http://dblp.org},
	Doi = {10.1007/s00037-007-0226-9},
	Journal = {Computational Complexity},
	Number = {2},
	Pages = {115--138},
	Title = {Polynomial Identity Testing for Depth 3 Circuits},
	Url = {http://dx.doi.org/10.1007/s00037-007-0226-9},
	Volume = {16},
	Year = {2007}}

@inproceedings{KayalSaraf09,
	Author = {Neeraj Kayal and Shubhangi Saraf},
	Bibsource = {dblp computer science bibliography, http://dblp.org},
	Booktitle = {50th Annual {IEEE} Symposium on Foundations of Computer Science, {FOCS} 2009, October 25-27, 2009, Atlanta, Georgia, {USA}},
	Crossref = {DBLP:conf/focs/2009},
	Doi = {10.1109/FOCS.2009.67},
	Pages = {198--207},
	Title = {Blackbox Polynomial Identity Testing for Depth 3 Circuits},
	Url = {http://dx.doi.org/10.1109/FOCS.2009.67},
	Year = {2009}}

@proceedings{DBLP:conf/focs/2009,
	Bibsource = {dblp computer science bibliography, http://dblp.org},
	Isbn = {978-0-7695-3850-1},
	Publisher = {{IEEE} Computer Society},
	Title = {50th Annual {IEEE} Symposium on Foundations of Computer Science, {FOCS} 2009, October 25-27, 2009, Atlanta, Georgia, {USA}},
	Url = {http://ieeexplore.ieee.org/xpl/mostRecentIssue.jsp?punumber=5438528},
	Year = {2009}}

@inproceedings{KarninS09rec,
	Author = {Zohar S. Karnin and Amir Shpilka},
	Bibsource = {dblp computer science bibliography, http://dblp.org},
	Booktitle = {Proceedings of the 24th Annual {IEEE} Conference on Computational Complexity, {CCC} 2009, Paris, France, 15-18 July 2009},
	Crossref = {DBLP:conf/coco/2009},
	Doi = {10.1109/CCC.2009.18},
	Pages = {274--285},
	Title = {Reconstruction of Generalized Depth-3 Arithmetic Circuits with Bounded Top Fan-in},
	Url = {http://doi.ieeecomputersociety.org/10.1109/CCC.2009.18},
	Year = {2009}}

@proceedings{DBLP:conf/coco/2009,
	Bibsource = {dblp computer science bibliography, http://dblp.org},
	Isbn = {978-0-7695-3717-7},
	Publisher = {{IEEE} Computer Society},
	Title = {Proceedings of the 24th Annual {IEEE} Conference on Computational Complexity, {CCC} 2009, Paris, France, 15-18 July 2009},
	Year = {2009}}

@article{KarninMSV13,
	Author = {Zohar S. Karnin and Partha Mukhopadhyay and Amir Shpilka and Ilya Volkovich},
	Bibsource = {dblp computer science bibliography, http://dblp.org},
	Doi = {10.1137/110824516},
	Journal = {{SIAM} J. Comput.},
	Number = {6},
	Pages = {2114--2131},
	Title = {Deterministic Identity Testing of Depth-4 Multilinear Circuits with Bounded Top Fan-in},
	Url = {http://dx.doi.org/10.1137/110824516},
	Volume = {42},
	Year = {2013}}

@inproceedings{Saxena08diagonal,
	Author = {Nitin Saxena},
	Bibsource = {dblp computer science bibliography, http://dblp.org},
	Booktitle = {Automata, Languages and Programming, 35th International Colloquium, {ICALP} 2008, Reykjavik, Iceland, July 7-11, 2008, Proceedings, Part {I:} Tack {A:} Algorithms, Automata, Complexity, and Games},
	Crossref = {DBLP:conf/icalp/2008-1},
	Doi = {10.1007/978-3-540-70575-8_6},
	Pages = {60--71},
	Title = {Diagonal Circuit Identity Testing and Lower Bounds},
	Url = {http://dx.doi.org/10.1007/978-3-540-70575-8_6},
	Year = {2008}
}

@proceedings{DBLP:conf/icalp/2008-1,
	Bibsource = {dblp computer science bibliography, http://dblp.org},
	Editor = {Luca Aceto and Ivan Damg{\aa}rd and Leslie Ann Goldberg and Magn{\'{u}}s M. Halld{\'{o}}rsson and Anna Ing{\'{o}}lfsd{\'{o}}ttir and Igor Walukiewicz},
	Isbn = {978-3-540-70574-1},
	Publisher = {Springer},
	Series = {Lecture Notes in Computer Science},
	Title = {Automata, Languages and Programming, 35th International Colloquium, {ICALP} 2008, Reykjavik, Iceland, July 7-11, 2008, Proceedings, Part {I:} Tack {A:} Algorithms, Automata, Complexity, and Games},
	Volume = {5125},
	Year = {2008}}

@article{KarninS11pit,
	Author = {Zohar S. Karnin and Amir Shpilka},
	Bibsource = {dblp computer science bibliography, http://dblp.org},
	Doi = {10.1007/s00493-011-2537-3},
	Journal = {Combinatorica},
	Number = {3},
	Pages = {333--364},
	Title = {Black box polynomial identity testing of generalized depth-3 arithmetic circuits with bounded top fan-in},
	Url = {http://dx.doi.org/10.1007/s00493-011-2537-3},
	Volume = {31},
	Year = {2011}}

@article{SaxenaSesh12pit,
	Author = {Nitin Saxena and Comandur Seshadhri},
	Bibsource = {dblp computer science bibliography, http://dblp.org},
	Doi = {10.1137/10848232},
	Journal = {{SIAM} J. Comput.},
	Number = {5},
	Pages = {1285--1298},
	Title = {Blackbox Identity Testing for Bounded Top-Fanin Depth-3 Circuits: The Field Doesn't Matter},
	Url = {http://dx.doi.org/10.1137/10848232},
	Volume = {41},
	Year = {2012}}

@article{SaxenaSesh13rank,
	Author = {Nitin Saxena and Comandur Seshadhri},
	Bibsource = {dblp computer science bibliography, http://dblp.org},
	Doi = {10.1145/2528403},
	Journal = {J. {ACM}},
	Number = {5},
	Pages = {33},
	Title = {From Sylvester-Gallai configurations to rank bounds: Improved blackbox identity test for depth-3 circuits},
	Url = {http://doi.acm.org/10.1145/2528403},
	Volume = {60},
	Year = {2013}}

@article{RazShpilka05,
	Author = {Ran Raz and Amir Shpilka},
	Bibsource = {dblp computer science bibliography, http://dblp.org},
	Doi = {10.1007/s00037-005-0188-8},
	Journal = {Computational Complexity},
	Number = {1},
	Pages = {1--19},
	Title = {Deterministic polynomial identity testing in non-commutative models},
	Url = {http://dx.doi.org/10.1007/s00037-005-0188-8},
	Volume = {14},
	Year = {2005}}

@article{SarafV18,
  author       = {Shubhangi Saraf and
                  Ilya Volkovich},
  title        = {Black-Box Identity Testing of Depth-4 Multilinear Circuits},
  journal      = {Comb.},
  volume       = {38},
  number       = {5},
  pages        = {1205--1238},
  year         = {2018},
  url          = {https://doi.org/10.1007/s00493-016-3460-4},
  doi          = {10.1007/S00493-016-3460-4},
  bibsource    = {dblp computer science bibliography, https://dblp.org}
}

@inproceedings{Zippel79,
	Author = {Richard Zippel},
	Bibsource = {dblp computer science bibliography, http://dblp.org},
	Booktitle = {Symbolic and Algebraic Computation, {EUROSAM} '79, An International Symposiumon Symbolic and Algebraic Computation, Marseille, France, June 1979, Proceedings},
	Crossref = {DBLP:conf/eurosam/1979},
	Doi = {10.1007/3-540-09519-5_73},
	Pages = {216--226},
	Title = {Probabilistic algorithms for sparse polynomials},
	Url = {http://dx.doi.org/10.1007/3-540-09519-5_73},
	Year = {1979}}

@inproceedings{forbes2014hitting,
  title={Hitting sets for multilinear read-once algebraic branching programs, in any order},
  author={Forbes, Michael A and Saptharishi, Ramprasad and Shpilka, Amir},
  booktitle={Proceedings of the forty-sixth annual ACM symposium on Theory of computing},
  pages={867--875},
  year={2014}
}

@article{Shpilka09-rec-sps,
	Author = {Amir Shpilka},
	Bibsource = {dblp computer science bibliography, http://dblp.org},
	Doi = {10.1137/070694879},
	Journal = {{SIAM} J. Comput.},
	Number = {6},
	Pages = {2130--2161},
	Title = {Interpolation of Depth-3 Arithmetic Circuits with Two Multiplication Gates},
	Url = {http://dx.doi.org/10.1137/070694879},
	Volume = {38},
	Year = {2009}}

@article{Schwartz80,
	Author = {Jacob T. Schwartz},
	Bibsource = {dblp computer science bibliography, http://dblp.org},
	Doi = {10.1145/322217.322225},
	Journal = {J. {ACM}},
	Number = {4},
	Pages = {701--717},
	Title = {Fast Probabilistic Algorithms for Verification of Polynomial Identities},
	Url = {http://doi.acm.org/10.1145/322217.322225},
	Volume = {27},
	Year = {1980}}

@article{DemilloL78,
	Author = {Richard A. DeMillo and Richard J. Lipton},
	Bibsource = {dblp computer science bibliography, http://dblp.org},
	Doi = {10.1016/0020-0190(78)90067-4},
	Journal = {Inf. Process. Lett.},
	Number = {4},
	Pages = {193--195},
	Title = {A Probabilistic Remark on Algebraic Program Testing},
	Url = {http://dx.doi.org/10.1016/0020-0190(78)90067-4},
	Volume = {7},
	Year = {1978}}

@article{Ore1922,
  author    = {{\O}ystein Ore},
  title     = {{\"U}ber h{\"o}here Kongruenzen},
  journal   = {Norske Videnskaps-Akademi i Oslo. Forhandlinger (Proceedings of the Norwegian Academy of Science and Letters)},
  year      = {1922},
  volume    = {1922},
  number    = {12},
  pages     = {1--8},
  language  = {German},
  note      = {Contains the original root bound for nonzero multivariate polynomials over finite fields, later used in polynomial identity testing (Schwartz--Zippel lemma).}
}

@article{AgrawalSSS16,
  author       = {Manindra Agrawal and
                  Chandan Saha and
                  Ramprasad Saptharishi and
                  Nitin Saxena},
  title        = {Jacobian Hits Circuits: Hitting Sets, Lower Bounds for Depth-D Occur-k
                  Formulas and Depth-3 Transcendence Degree-k Circuits},
  journal      = {{SIAM} J. Comput.},
  volume       = {45},
  number       = {4},
  pages        = {1533--1562},
  year         = {2016},
  url          = {https://doi.org/10.1137/130910725},
  doi          = {10.1137/130910725},
  bibsource    = {dblp computer science bibliography, https://dblp.org}
}

@article{KabanetsI04,
	Author = {Valentine Kabanets and Russell Impagliazzo},
	Bibsource = {dblp computer science bibliography, http://dblp.org},
	Doi = {10.1007/s00037-004-0182-6},
	Journal = {Computational Complexity},
	Number = {1-2},
	Pages = {1--46},
	Title = {Derandomizing Polynomial Identity Tests Means Proving Circuit Lower Bounds},
	Url = {http://dx.doi.org/10.1007/s00037-004-0182-6},
	Volume = {13},
	Year = {2004}}

@article{DvirSY09,
	author    = {Zeev Dvir and
	Amir Shpilka and
	Amir Yehudayoff},
	title     = {Hardness-Randomness Tradeoffs for Bounded Depth Arithmetic Circuits},
	journal   = {{SIAM} J. Comput.},
	volume    = {39},
	number    = {4},
	pages     = {1279--1293},
	year      = {2009},
	url       = {https://doi.org/10.1137/080735850},
	doi       = {10.1137/080735850},
	bibsource = {dblp computer science bibliography, https://dblp.org}
}

@article{GuoKSS22,
  author       = {Zeyu Guo and
                  Mrinal Kumar and
                  Ramprasad Saptharishi and
                  Noam Solomon},
  title        = {Derandomization from Algebraic Hardness},
  journal      = {{SIAM} J. Comput.},
  volume       = {51},
  number       = {2},
  pages        = {315--335},
  year         = {2022},
  url          = {https://doi.org/10.1137/20m1347395},
  doi          = {10.1137/20M1347395},
  bibsource    = {dblp computer science bibliography, https://dblp.org}
}

@article{KumarST23bootstrap,
  author       = {Mrinal Kumar and
                  Ramprasad Saptharishi and
                  Anamay Tengse},
  title        = {Near-Optimal Bootstrapping of Hitting Sets for Algebraic Models},
  journal      = {Theory Comput.},
  volume       = {19},
  pages        = {1--30},
  year         = {2023},
  url          = {https://doi.org/10.4086/toc.2023.v019a012},
  doi          = {10.4086/TOC.2023.V019A012},
  bibsource    = {dblp computer science bibliography, https://dblp.org}
}

@article{kumar2019hardness,
  title={Hardness-randomness tradeoffs for algebraic computation},
  author={Kumar, Mrinal and Saptharishi, Ramprasad},
  journal={Bulletin of EATCS},
  volume={3},
  number={129},
  year={2019}
}

@article{saxena2009progress,
  title={Progress on Polynomial Identity Testing.},
  author={Saxena, Nitin},
  journal={Bull. EATCS},
  volume={99},
  pages={49--79},
  year={2009}
}

@incollection{saxena2014progress,
  title={Progress on polynomial identity testing-II},
  author={Saxena, Nitin},
  booktitle={Perspectives in Computational Complexity: The Somenath Biswas Anniversary Volume},
  pages={131--146},
  year={2014},
  publisher={Springer}
}

@article{SY10,
	Author = {Shpilka, Amir and Yehudayoff, Amir},
	Journal = {Foundations and Trends{\textregistered} in Theoretical Computer Science},
	Number = {3--4},
	Pages = {207--388},
	Publisher = {Now Publishers, Inc.},
	Title = {Arithmetic circuits: A survey of recent results and open questions},
	Volume = {5},
	Year = {2010}}

@article{DuttaGhosh-survey,
author = {Pranjal Dutta and Sumanta Ghosh},
title = {SIGACT News Complexity Theory Column 121},
year = {2024},
issue_date = {June 2024},
publisher = {Association for Computing Machinery},
address = {New York, NY, USA},
volume = {55},
number = {2},
issn = {0163-5700},
url = {https://doi.org/10.1145/3674159.3674165},
doi = {10.1145/3674159.3674165},
journal = {SIGACT News},
month = jun,
pages = {53–88},
numpages = {36}
}

@article{beimel2000learning,
  title={Learning functions represented as multiplicity automata},
  author={Beimel, Amos and Bergadano, Francesco and Bshouty, Nader H and Kushilevitz, Eyal and Varricchio, Stefano},
  journal={Journal of the ACM (JACM)},
  volume={47},
  number={3},
  pages={506--530},
  year={2000},
  publisher={ACM New York, NY, USA}
}

@article{klivans2006learning,
  title={Learning restricted models of arithmetic circuits},
  author={Klivans, Adam and Shpilka, Amir},
  journal={Theory of computing},
  volume={2},
  number={1},
  pages={185--206},
  year={2006},
  publisher={Theory of Computing Exchange}
}

@article{BhargavaSV25,
  author       = {Vishwas Bhargava and
                  Shubhangi Saraf and
                  Ilya Volkovich},
  title        = {Reconstruction of Depth-4 Multilinear Circuits},
  journal      = {{ACM} Trans. Comput. Theory},
  volume       = {17},
  number       = {3},
  pages        = {17:1--17:23},
  year         = {2025},
  url          = {https://doi.org/10.1145/3726532},
  doi          = {10.1145/3726532},
  bibsource    = {dblp computer science bibliography, https://dblp.org}
}

@inproceedings{Sinha16,
  author       = {Gaurav Sinha},
  editor       = {Ran Raz},
  title        = {Reconstruction of Real Depth-3 Circuits with Top Fan-In 2},
  booktitle    = {31st Conference on Computational Complexity, {CCC} 2016, May 29 to
                  June 1, 2016, Tokyo, Japan},
  series       = {LIPIcs},
  volume       = {50},
  pages        = {31:1--31:53},
  publisher    = {Schloss Dagstuhl - Leibniz-Zentrum f{\"{u}}r Informatik},
  year         = {2016},
  url          = {https://doi.org/10.4230/LIPIcs.CCC.2016.31},
  doi          = {10.4230/LIPICS.CCC.2016.31},
  bibsource    = {dblp computer science bibliography, https://dblp.org}
}

@inproceedings{Sinha22,
  author       = {Gaurav Sinha},
  editor       = {Mark Braverman},
  title        = {Efficient Reconstruction of Depth Three Arithmetic Circuits with Top
                  Fan-In Two},
  booktitle    = {13th Innovations in Theoretical Computer Science Conference, {ITCS}
                  2022, January 31 - February 3, 2022, Berkeley, CA, {USA}},
  series       = {LIPIcs},
  volume       = {215},
  pages        = {118:1--118:33},
  publisher    = {Schloss Dagstuhl - Leibniz-Zentrum f{\"{u}}r Informatik},
  year         = {2022},
  url          = {https://doi.org/10.4230/LIPIcs.ITCS.2022.118},
  doi          = {10.4230/LIPICS.ITCS.2022.118},
  bibsource    = {dblp computer science bibliography, https://dblp.org}
}

@proceedings{DBLP:conf/eurosam/1979,
	Bibsource = {dblp computer science bibliography, http://dblp.org},
	Editor = {Edward W. Ng},
	Isbn = {3-540-09519-5},
	Publisher = {Springer},
	Series = {Lecture Notes in Computer Science},
	Title = {Symbolic and Algebraic Computation, {EUROSAM} '79, An International Symposiumon Symbolic and Algebraic Computation, Marseille, France, June 1979, Proceedings},
	Volume = {72},
	Year = {1979}}

@inproceedings{ForbesS12,
  author       = {Michael A. Forbes and
                  Amir Shpilka},
  editor       = {Howard J. Karloff and
                  Toniann Pitassi},
  title        = {On identity testing of tensors, low-rank recovery and compressed sensing},
  booktitle    = {Proceedings of the 44th Symposium on Theory of Computing Conference,
                  {STOC} 2012, New York, NY, USA, May 19 - 22, 2012},
  pages        = {163--172},
  publisher    = {{ACM}},
  year         = {2012},
  url          = {https://doi.org/10.1145/2213977.2213995},
  doi          = {10.1145/2213977.2213995},
  bibsource    = {dblp computer science bibliography, https://dblp.org}
}

@article{ShpilkaV14,
  author       = {Amir Shpilka and
                  Ilya Volkovich},
  title        = {On Reconstruction and Testing of Read-Once Formulas},
  journal      = {Theory Comput.},
  volume       = {10},
  pages        = {465--514},
  year         = {2014},
  url          = {https://doi.org/10.4086/toc.2014.v010a018},
  doi          = {10.4086/TOC.2014.V010A018},
  bibsource    = {dblp computer science bibliography, https://dblp.org}
}

@article{GuptaKQ14,
  author       = {Ankit Gupta and
                  Neeraj Kayal and
                  Youming Qiao},
  title        = {Random arithmetic formulas can be reconstructed efficiently},
  journal      = {Comput. Complex.},
  volume       = {23},
  number       = {2},
  pages        = {207--303},
  year         = {2014},
  url          = {https://doi.org/10.1007/s00037-014-0085-0},
  doi          = {10.1007/S00037-014-0085-0},
  bibsource    = {dblp computer science bibliography, https://dblp.org}
}

@inproceedings{GuptaKL12,
  author       = {Ankit Gupta and
                  Neeraj Kayal and
                  Satyanarayana V. Lokam},
  editor       = {Howard J. Karloff and
                  Toniann Pitassi},
  title        = {Reconstruction of depth-4 multilinear circuits with top fan-in 2},
  booktitle    = {Proceedings of the 44th Symposium on Theory of Computing Conference,
                  {STOC} 2012, New York, NY, USA, May 19 - 22, 2012},
  pages        = {625--642},
  publisher    = {{ACM}},
  year         = {2012},
  url          = {https://doi.org/10.1145/2213977.2214035},
  doi          = {10.1145/2213977.2214035},
  bibsource    = {dblp computer science bibliography, https://dblp.org}
}

@inproceedings{BhargavaGKS22,
  author       = {Vishwas Bhargava and
                  Ankit Garg and
                  Neeraj Kayal and
                  Chandan Saha},
  editor       = {Amit Chakrabarti and
                  Chaitanya Swamy},
  title        = {Learning Generalized Depth Three Arithmetic Circuits in the Non-Degenerate
                  Case},
  booktitle    = {Approximation, Randomization, and Combinatorial Optimization. Algorithms
                  and Techniques, {APPROX/RANDOM} 2022, September 19-21, 2022, University
                  of Illinois, Urbana-Champaign, {USA} (Virtual Conference)},
  series       = {LIPIcs},
  volume       = {245},
  pages        = {21:1--21:22},
  publisher    = {Schloss Dagstuhl - Leibniz-Zentrum f{\"{u}}r Informatik},
  year         = {2022},
  url          = {https://doi.org/10.4230/LIPIcs.APPROX/RANDOM.2022.21},
  doi          = {10.4230/LIPICS.APPROX/RANDOM.2022.21},
  bibsource    = {dblp computer science bibliography, https://dblp.org}
}

@inproceedings{ChandraGKMS24,
  author       = {Pritam Chandra and
                  Ankit Garg and
                  Neeraj Kayal and
                  Kunal Mittal and
                  Tanmay Sinha},
  editor       = {Venkatesan Guruswami},
  title        = {Learning Arithmetic Formulas in the Presence of Noise: {A} General
                  Framework and Applications to Unsupervised Learning},
  booktitle    = {15th Innovations in Theoretical Computer Science Conference, {ITCS}
                  2024, January 30 to February 2, 2024, Berkeley, CA, {USA}},
  series       = {LIPIcs},
  volume       = {287},
  pages        = {25:1--25:19},
  publisher    = {Schloss Dagstuhl - Leibniz-Zentrum f{\"{u}}r Informatik},
  year         = {2024},
  url          = {https://doi.org/10.4230/LIPIcs.ITCS.2024.25},
  doi          = {10.4230/LIPICS.ITCS.2024.25},
  bibsource    = {dblp computer science bibliography, https://dblp.org}
}

@inproceedings{AgrawalV08,
  author       = {Manindra Agrawal and
                  V. Vinay},
  title        = {Arithmetic Circuits: {A} Chasm at Depth Four},
  booktitle    = {49th Annual {IEEE} Symposium on Foundations of Computer Science, {FOCS}
                  2008, Philadelphia, PA, USA, October 25-28, 2008},
  pages        = {67--75},
  publisher    = {{IEEE} Computer Society},
  year         = {2008},
  url          = {https://doi.org/10.1109/FOCS.2008.32},
  doi          = {10.1109/FOCS.2008.32},
  bibsource    = {dblp computer science bibliography, https://dblp.org}
}

@article{Koiran12,
  author       = {Pascal Koiran},
  title        = {Arithmetic circuits: The chasm at depth four gets wider},
  journal      = {Theor. Comput. Sci.},
  volume       = {448},
  pages        = {56--65},
  year         = {2012},
  url          = {https://doi.org/10.1016/j.tcs.2012.03.041},
  doi          = {10.1016/J.TCS.2012.03.041},
  bibsource    = {dblp computer science bibliography, https://dblp.org}
}

@article{Tavenas15,
  author       = {S{\'{e}}bastien Tavenas},
  title        = {Improved bounds for reduction to depth 4 and depth 3},
  journal      = {Inf. Comput.},
  volume       = {240},
  pages        = {2--11},
  year         = {2015},
  url          = {https://doi.org/10.1016/j.ic.2014.09.004},
  doi          = {10.1016/J.IC.2014.09.004},
  bibsource    = {dblp computer science bibliography, https://dblp.org}
}

@article{GuptaKKS16chasm,
  author       = {Ankit Gupta and
                  Pritish Kamath and
                  Neeraj Kayal and
                  Ramprasad Saptharishi},
  title        = {Arithmetic Circuits: {A} Chasm at Depth 3},
  journal      = {{SIAM} J. Comput.},
  volume       = {45},
  number       = {3},
  pages        = {1064--1079},
  year         = {2016},
  url          = {https://doi.org/10.1137/140957123},
  doi          = {10.1137/140957123},
  bibsource    = {dblp computer science bibliography, https://dblp.org}
}

@article{Hastad90,
  author       = {Johan H{\aa}stad},
  title        = {Tensor Rank is NP-Complete},
  journal      = {J. Algorithms},
  volume       = {11},
  number       = {4},
  pages        = {644--654},
  year         = {1990},
  url          = {https://doi.org/10.1016/0196-6774(90)90014-6},
  doi          = {10.1016/0196-6774(90)90014-6},
  bibsource    = {dblp computer science bibliography, https://dblp.org}
}

@inproceedings{swernofsky2018tensor,
  title={Tensor rank is hard to approximate},
  author={Swernofsky, Joseph},
  booktitle={Approximation, Randomization, and Combinatorial Optimization. Algorithms and Techniques (APPROX/RANDOM 2018)},
  pages={26--1},
  year={2018},
  organization={Schloss Dagstuhl--Leibniz-Zentrum f{\"u}r Informatik}
}

@article{FortnowKlivans09,
  author  = {Lance Fortnow and Adam R.~Klivans},
  title   = {Efficient Learning Algorithms Yield Circuit Lower Bounds},
  journal = {Journal of Computer and System Sciences},
  year    = {2009},
  volume  = {75},
  number  = {1},
  pages   = {27--36},
  doi     = {10.1016/j.jcss.2008.09.010}
}

@article{KlivansSherstov09,
  author  = {Adam R.~Klivans and Alexander A.~Sherstov},
  title   = {Cryptographic Hardness for Learning Intersections of Halfspaces},
  journal = {Journal of Computer and System Sciences},
  year    = {2009},
  volume  = {75},
  number  = {1},
  pages   = {2--12},
  doi     = {10.1016/j.jcss.2008.07.018}
}

@article{SchaeferStefankovic18,
  author  = {Marcus Schaefer and Daniel {\v{S}}tefankovi{\v{c}}},
  title   = {The Complexity of Tensor Rank},
  journal = {Theory of Computing Systems},
  year    = {2018},
  volume  = {62},
  number  = {5},
  pages   = {1161--1174},
  doi     = {10.1007/s00224-017-9793-1}
}

@article{limaye2025superpolynomial,
  title={Superpolynomial lower bounds against low-depth algebraic circuits},
  author={Limaye, Nutan and Srinivasan, Srikanth and Tavenas, S{\'e}bastien},
  journal={Journal of the ACM},
  volume={72},
  number={4},
  pages={1--35},
  year={2025},
  publisher={ACM New York, NY}
}

@inproceedings{andrews2022ideals,
  title={Ideals, determinants, and straightening: proving and using lower bounds for polynomial ideals},
  author={Andrews, Robert and Forbes, Michael A},
  booktitle={Proceedings of the 54th Annual ACM SIGACT Symposium on Theory of Computing},
  pages={389--402},
  year={2022}
}

@article{GrigorievKS90,
  author       = {Dima Grigoriev and
                  Marek Karpinski and
                  Michael F. Singer},
  title        = {Fast Parallel Algorithms for Sparse Multivariate Polynomial Interpolation
                  over Finite Fields},
  journal      = {{SIAM} J. Comput.},
  volume       = {19},
  number       = {6},
  pages        = {1059--1063},
  year         = {1990},
  url          = {https://doi.org/10.1137/0219073},
  doi          = {10.1137/0219073},
  bibsource    = {dblp computer science bibliography, https://dblp.org}
}

@inproceedings{Ben-OrT88,
  author       = {Michael Ben{-}Or and
                  Prasoon Tiwari},
  editor       = {Janos Simon},
  title        = {A Deterministic Algorithm for Sparse Multivariate Polynomial Interpolation
                  (Extended Abstract)},
  booktitle    = {Proceedings of the 20th Annual {ACM} Symposium on Theory of Computing,
                  May 2-4, 1988, Chicago, Illinois, {USA}},
  pages        = {301--309},
  publisher    = {{ACM}},
  year         = {1988},
  url          = {https://doi.org/10.1145/62212.62241},
  doi          = {10.1145/62212.62241},
  bibsource    = {dblp computer science bibliography, https://dblp.org}
}

@article{DvirS07,
  author       = {Zeev Dvir and
                  Amir Shpilka},
  title        = {Locally Decodable Codes with Two Queries and Polynomial Identity Testing
                  for Depth 3 Circuits},
  journal      = {{SIAM} J. Comput.},
  volume       = {36},
  number       = {5},
  pages        = {1404--1434},
  year         = {2007},
  url          = {https://doi.org/10.1137/05063605X},
  doi          = {10.1137/05063605X},
  bibsource    = {dblp computer science bibliography, https://dblp.org}
}

@inproceedings{Forbes15,
  author       = {Michael A. Forbes},
  editor       = {Venkatesan Guruswami},
  title        = {Deterministic Divisibility Testing via Shifted Partial Derivatives},
  booktitle    = {{IEEE} 56th Annual Symposium on Foundations of Computer Science, {FOCS}
                  2015, Berkeley, CA, USA, 17-20 October, 2015},
  pages        = {451--465},
  publisher    = {{IEEE} Computer Society},
  year         = {2015},
  url          = {https://doi.org/10.1109/FOCS.2015.35},
  doi          = {10.1109/FOCS.2015.35},
  bibsource    = {dblp computer science bibliography, https://dblp.org}
}

@inproceedings{GuoG20,
  author       = {Zeyu Guo and
                  Rohit Gurjar},
  editor       = {Jaroslaw Byrka and
                  Raghu Meka},
  title        = {Improved Explicit Hitting-Sets for ROABPs},
  booktitle    = {Approximation, Randomization, and Combinatorial Optimization. Algorithms
                  and Techniques, {APPROX/RANDOM} 2020, August 17-19, 2020, Virtual
                  Conference},
  series       = {LIPIcs},
  volume       = {176},
  pages        = {4:1--4:16},
  publisher    = {Schloss Dagstuhl - Leibniz-Zentrum f{\"{u}}r Informatik},
  year         = {2020},
  url          = {https://doi.org/10.4230/LIPIcs.APPROX/RANDOM.2020.4},
  doi          = {10.4230/LIPICS.APPROX/RANDOM.2020.4},
  bibsource    = {dblp computer science bibliography, https://dblp.org}
}

@article{gurjar2016identity,
  title={Identity testing for constant-width, and any-order, read-once oblivious arithmetic branching programs},
  author={Gurjar, Rohit and Korwar, Arpita and Saxena, Nitin},
  journal={arXiv preprint arXiv:1601.08031},
  year={2016}
}

@inproceedings{peleg2021polynomial,
  title={Polynomial time deterministic identity testing algorithm for $\Sigma$ [3] $\Pi$$\Sigma$$\Pi$ [2] circuits via Edelstein--Kelly type theorem for quadratic polynomials},
  author={Peleg, Shir and Shpilka, Amir},
  booktitle={Proceedings of the 53rd Annual ACM SIGACT Symposium on Theory of Computing},
  pages={259--271},
  year={2021}
}

@article{GargOS25a,
  author       = {Abhibhav Garg and
                  Rafael Mendes de Oliveira and
                  Akash Kumar Sengupta},
  title        = {Rank Bounds and {PIT} for {\textdollar}{\textbackslash}Sigma{\^{}}3
                  {\textbackslash}Pi {\textbackslash}Sigma {\textbackslash}Pi{\^{}}d{\textdollar}
                  circuits via a non-linear Edelstein-Kelly theorem},
  journal      = {Electron. Colloquium Comput. Complex.},
  volume       = {{TR25-051}},
  year         = {2025},
  url          = {https://eccc.weizmann.ac.il/report/2025/051},
  eprinttype    = {ECCC},
  eprint       = {TR25-051},
  bibsource    = {dblp computer science bibliography, https://dblp.org}
}

@inproceedings{DuttaD021,
  author       = {Pranjal Dutta and
                  Prateek Dwivedi and
                  Nitin Saxena},
  editor       = {Valentine Kabanets},
  title        = {Deterministic Identity Testing Paradigms for Bounded Top-Fanin Depth-4
                  Circuits},
  booktitle    = {36th Computational Complexity Conference, {CCC} 2021, July 20-23,
                  2021, Toronto, Ontario, Canada (Virtual Conference)},
  series       = {LIPIcs},
  volume       = {200},
  pages        = {11:1--11:27},
  publisher    = {Schloss Dagstuhl - Leibniz-Zentrum f{\"{u}}r Informatik},
  year         = {2021},
  url          = {https://doi.org/10.4230/LIPIcs.CCC.2021.11},
  doi          = {10.4230/LIPICS.CCC.2021.11},
  bibsource    = {dblp computer science bibliography, https://dblp.org}
}

@inproceedings{saraf2025reconstruction,
  title={Reconstruction of depth 3 arithmetic circuits with top fan-in 3},
  author={Saraf, Shubhangi and Shringi, Devansh},
  booktitle={40th Computational Complexity Conference (CCC 2025)},
  pages={21--1},
  year={2025},
  organization={Schloss Dagstuhl--Leibniz-Zentrum f{\"u}r Informatik}
}

@inproceedings{GeHK15,
  author       = {Rong Ge and
                  Qingqing Huang and
                  Sham M. Kakade},
  editor       = {Rocco A. Servedio and
                  Ronitt Rubinfeld},
  title        = {Learning Mixtures of Gaussians in High Dimensions},
  booktitle    = {Proceedings of the Forty-Seventh Annual {ACM} on Symposium on Theory
                  of Computing, {STOC} 2015, Portland, OR, USA, June 14-17, 2015},
  pages        = {761--770},
  publisher    = {{ACM}},
  year         = {2015},
  url          = {https://doi.org/10.1145/2746539.2746616},
  doi          = {10.1145/2746539.2746616},
  bibsource    = {dblp computer science bibliography, https://dblp.org}
}

@article{anandkumar2014tensor,
  title={Tensor decompositions for learning latent variable models.},
  author={Anandkumar, Animashree and Ge, Rong and Hsu, Daniel J and Kakade, Sham M and Telgarsky, Matus and others},
  journal={J. Mach. Learn. Res.},
  volume={15},
  number={1},
  pages={2773--2832},
  year={2014}
}

@inproceedings{goyal2014fourier,
  title={Fourier PCA and robust tensor decomposition},
  author={Goyal, Navin and Vempala, Santosh and Xiao, Ying},
  booktitle={Proceedings of the forty-sixth annual ACM symposium on Theory of computing},
  pages={584--593},
  year={2014}
}

@inproceedings{ma2016polynomial,
  title={Polynomial-time tensor decompositions with sum-of-squares},
  author={Ma, Tengyu and Shi, Jonathan and Steurer, David},
  booktitle={2016 IEEE 57th Annual Symposium on Foundations of Computer Science (FOCS)},
  pages={438--446},
  year={2016},
  organization={IEEE}
}

@inproceedings{hopkins2015tensor,
  title={Tensor principal component analysis via sum-of-square proofs},
  author={Hopkins, Samuel B and Shi, Jonathan and Steurer, David},
  booktitle={Conference on Learning Theory},
  pages={956--1006},
  year={2015},
  organization={PMLR}
}

@inproceedings{hopkins2016fast,
  title={Fast spectral algorithms from sum-of-squares proofs: tensor decomposition and planted sparse vectors},
  author={Hopkins, Samuel B and Schramm, Tselil and Shi, Jonathan and Steurer, David},
  booktitle={Proceedings of the forty-eighth annual ACM symposium on Theory of Computing},
  pages={178--191},
  year={2016}
}

@article{montanari2014statistical,
  title={A statistical model for tensor PCA},
  author={Montanari, Andrea and Richard, Emile},
  journal={Advances in neural information processing systems},
  volume={27},
  year={2014}
}

@article{kothari2025smooth,
  title={Smooth Trade-off for Tensor PCA via Sharp Bounds for Kikuchi Matrices},
  author={Kothari, Pravesh K and Xu, Jeff},
  journal={arXiv preprint arXiv:2510.03061},
  year={2025}
}

@inproceedings{bafna2022polynomial,
  title={Polynomial-time power-sum decomposition of polynomials},
  author={Bafna, Mitali and Hsieh, Jun-Ting and Kothari, Pravesh K and Xu, Jeff},
  booktitle={2022 IEEE 63rd Annual Symposium on Foundations of Computer Science (FOCS)},
  pages={956--967},
  year={2022},
  organization={IEEE}
}

@inproceedings{bhaskara2024new,
  title={New tools for smoothed analysis: Least singular value bounds for random matrices with dependent entries},
  author={Bhaskara, Aditya and Evert, Eric and Srinivas, Vaidehi and Vijayaraghavan, Aravindan},
  booktitle={Proceedings of the 56th Annual ACM Symposium on Theory of Computing},
  pages={375--386},
  year={2024}
}

@inproceedings{anderson2014more,
  title={The more, the merrier: the blessing of dimensionality for learning large Gaussian mixtures},
  author={Anderson, Joseph and Belkin, Mikhail and Goyal, Navin and Rademacher, Luis and Voss, James},
  booktitle={Conference on Learning Theory},
  pages={1135--1164},
  year={2014},
  organization={PMLR}
}

@article{dasgupta2007probabilistic,
  title={A Probabilistic Analysis of EM for Mixtures of Separated, Spherical Gaussians.},
  author={Dasgupta, Sanjoy and Schulman, Leonard},
  journal={Journal of Machine Learning Research},
  volume={8},
  number={2},
  year={2007}
}

@article{di2023multidimensional,
  title={The multidimensional truncated moment problem: Gaussian mixture reconstruction from derivatives of moments},
  author={di Dio, Philipp J},
  journal={Journal of Mathematical Analysis and Applications},
  volume={517},
  number={1},
  pages={126592},
  year={2023},
  publisher={Elsevier}
}

@inproceedings{liu2021settling,
  title={Settling the robust learnability of mixtures of gaussians},
  author={Liu, Allen and Moitra, Ankur},
  booktitle={Proceedings of the 53rd Annual ACM SIGACT Symposium on Theory of Computing},
  pages={518--531},
  year={2021}
}

@inproceedings{regev2017learning,
  title={On learning mixtures of well-separated gaussians},
  author={Regev, Oded and Vijayaraghavan, Aravindan},
  booktitle={2017 IEEE 58th Annual Symposium on Foundations of Computer Science (FOCS)},
  pages={85--96},
  year={2017},
  organization={IEEE}
}

@inproceedings{sanjeev2001learning,
  title={Learning mixtures of arbitrary Gaussians},
  author={Sanjeev, Arora and Kannan, Ravi},
  booktitle={Proceedings of the thirty-third annual ACM symposium on Theory of computing},
  pages={247--257},
  year={2001}
}

@article{sylvester1851lx,
  title={Lx. on a remarkable discovery in the theory of canonical forms and of hyperdeterminants},
  author={Sylvester, James Joseph},
  journal={The London, Edinburgh, and Dublin Philosophical Magazine and Journal of Science},
  volume={2},
  number={12},
  pages={391--410},
  year={1851},
  publisher={Taylor \& Francis}
}

@inproceedings{HS80,
  title={Testing polynomials which are easy to compute},
  author={Heintz, Joos and Schnorr, Claus-Peter},
  booktitle={Proceedings of the twelfth annual ACM Symposium on Theory of Computing},
  pages={262--272},
  year={1980}
}

@article{Chou0S19,
  author       = {Chi{-}Ning Chou and
                  Mrinal Kumar and
                  Noam Solomon},
  title        = {Closure Results for Polynomial Factorization},
  journal      = {Theory Comput.},
  volume       = {15},
  pages        = {1--34},
  year         = {2019},
  url          = {https://doi.org/10.4086/toc.2019.v015a013},
  doi          = {10.4086/TOC.2019.V015A013},
  bibsource    = {dblp computer science bibliography, https://dblp.org}
}

@article{GuoWang25,
  author       = {Zeyu Guo and
                  Siki Wang},
  title        = {Deterministic Depth-4 {PIT} and Normalization},
  journal      = {CoRR},
  volume       = {abs/2504.15143},
  year         = {2025},
  url          = {https://doi.org/10.48550/arXiv.2504.15143},
  doi          = {10.48550/ARXIV.2504.15143},
  eprinttype    = {arXiv},
  eprint       = {2504.15143},
  bibsource    = {dblp computer science bibliography, https://dblp.org}
}

@article{SarafSV25,
  author       = {Shubhangi Saraf and
                  Devansh Shringi and
                  Narmada Varadarajan},
  title        = {Reconstruction of Depth-3 Arithmetic Circuits with Constant Top Fan-in},
  journal      = {Electron. Colloquium Comput. Complex.},
  volume       = {{TR25-222}},
  year         = {2025},
  url          = {https://eccc.weizmann.ac.il/report/2025/222},
  eprinttype    = {ECCC},
  eprint       = {TR25-222},
  bibsource    = {dblp computer science bibliography, https://dblp.org}
}

@inproceedings{ChillaraGS23,
  author       = {Suryajith Chillara and
                  Coral Grichener and
                  Amir Shpilka},
  editor       = {Petra Berenbrink and
                  Patricia Bouyer and
                  Anuj Dawar and
                  Mamadou Moustapha Kant{\'{e}}},
  title        = {On Hardness of Testing Equivalence to Sparse Polynomials Under Shifts},
  booktitle    = {40th International Symposium on Theoretical Aspects of Computer Science,
                  {STACS} 2023, Hamburg, Germany, March 7-9, 2023},
  series       = {LIPIcs},
  volume       = {254},
  pages        = {22:1--22:20},
  publisher    = {Schloss Dagstuhl - Leibniz-Zentrum f{\"{u}}r Informatik},
  year         = {2023},
  url          = {https://doi.org/10.4230/LIPIcs.STACS.2023.22},
  doi          = {10.4230/LIPICS.STACS.2023.22},
  bibsource    = {dblp computer science bibliography, https://dblp.org}
}

\appendix
\section{Linear Independence and Wronskian}\label{app:wronskian}

We prove \autoref{thm:wronskian-lin-ind}. We recall its statement.

\thmwronskilinind*

\begin{proof}
The 'if' direction is immediate.  
For the converse, assume that $g_1, \dots, g_n$ are linearly independent.  
Since the determinant is invariant under elementary column operations, we may assume, without loss of generality, that $g_1, \dots, g_n$ have distinct degrees.

Let their degrees be $d_1 > d_2 > \cdots > d_n \ge 0$, and write  
$g_j = c_j x^{d_j} + \text{(lower terms)}$ with $c_j \ne 0$.  
By multilinearity of the determinant, the leading term of the Wronskian equals the Wronskian of the leading monomials, provided its coefficient is nonzero:
\[
W(\mathrm{LM}(g_1), \dots, \mathrm{LM}(g_n))
  = W(c_1 x^{d_1}, \dots, c_n x^{d_n}).
\]

The determinant of the matrix of derivatives of $c_j x^{d_j}$ is
\[
\det\!\big( c_j (d_j)_{i}\, x^{d_j - i} \big)_{\substack{i\in [[n-1]] \\ j\in [n]}},
\]
where $(d_j)_i = d_j(d_j-1)\cdots(d_j - i + 1)$ is the falling factorial (defined to be $1$ for $i=0$).
Factoring out $c_j$ from each column and collecting powers of $x$ yields
\[
W(c_1 x^{d_1}, \dots, c_n x^{d_n})
  = \left(\Pi_{j=1}^n c_j\right)
    \left(\Pi_{1 \le i < j \le n} (d_j - d_i)\right)
    \left(\Pi_{j=1}^n (d_j)_{n-1}\right)
    x^{\sum_j d_j - \binom{n}{2}}.
\]

Because $p=0$ or $p > d$, all integers $d_j$ and differences $(d_j - d_i)$ are nonzero in $\F$, and each falling factorial $(d_j)_{n-1}$ is nonzero.  
Since each $c_j \ne 0$, the product above is nonzero, so $W(g_1, \dots, g_n)$ is not identically zero.
\end{proof}

\end{document}